\newif\iflongversion
\def\leftrule{L}%
\def\rightrule{R}%
\newcommand{\bebecomes}{\mathrel{::=}}
\newcommand{\alternative}{~|~}
\newcommand{\neighborhood}{\neighbourhood}
\newcommand{\cneighborhood}[2][]{\closure{\mathcal{U}_{#1}(#2)}}
\newcommand{\States}{\mathbb{S}}
\newcommand{\I}{\dLint[const=I,state=\nu]}
\newcommand{\solvar}{\boldsymbol\upvarphi}
\newcommand{\rref}[2][]{\prettyref{#2}}
\newcommand{\cmp}{\succcurlyeq}
\newsavebox{\Lightningval}%
\sbox{\Lightningval}{\mbox{\lightning}}
\newsavebox{\Rval}%
\sbox{\Rval}{$\scriptstyle\mathbb{R}$}
  \newdimen\linferenceRulehskipamount%
  \newdimen\lcalculuscollectionvskipamount%
\definecolor{vblue}{rgb}{.1,.15,.62}
\definecolor{vgreen}{rgb}{.1,.5,0}
\definecolor{vgray}{rgb}{.35,.35,.35}
\definecolor{vred}{rgb}{.7,0,0}
\renewcommand{\axkey}[1]{#1}
\renewcommand*{\lie}[3][]
{\mathcal{L}_{#2}^{\ifthenelse{\equal{#1}{}}{}{^{\left(#1\right)}}}(#3)}
\renewcommand*{\lied}[3][]{\overset{\bm .}{#3}\ifthenelse{\equal{#1}{}}{}{{}^{(#1)}}}
\renewcommand{\Dostar}[1]{\ifthenelse{\equal{#1}{}}{(*)}{-(*)}}
\newcommand{\argx}{(x)}
\DeclareMathOperator{\stable}{Stab}
\newcommand{\stabode}[1]{\stable(#1)}
\DeclareMathOperator{\stablePR}{Stab^{P}_{R}}
\newcommand{\stabodePR}[3]{\stablePR(#1,#2,#3)}
\DeclareMathOperator{\asymptotically}{Asym}
\newcommand{\asymode}[2]{\asymptotically(#1,#2)}
\DeclareMathOperator{\attractive}{Attr}
\newcommand{\attrode}[1]{\attractive(#1)}
\DeclareMathOperator{\astable}{AStab}
\newcommand{\astabode}[1]{\astable(#1)}
\DeclareMathOperator{\attractivePR}{Attr^{P}_{R}}
\newcommand{\attrodePR}[3]{\attractivePR(#1,#2,#3)}
\DeclareMathOperator{\attractiveP}{Attr^{P}}
\newcommand{\attrodeP}[2]{\attractiveP(#1,#2)}
\DeclareMathOperator{\expstable}{EStab}
\newcommand{\expstabode}[1]{\expstable(#1)}
\DeclareMathOperator{\expstableP}{EStab^{P}}
\newcommand{\expstabodeP}[2]{\expstableP(#1,#2)}
\DeclareMathOperator{\distance}{dist}
\newcommand{\dist}[2]{\distance(#1,#2)}
\newcommand{\progxy}{\alpha_{xy}}
\newcommand{\expendulum}{\ensuremath{\alpha_l}}
\newcommand{\expendulumforced}{\ensuremath{\alpha_{i}}}
\newcommand{\expendulumnonlin}{\ensuremath{\alpha_p}}
\newcommand{\exrigid}{\ensuremath{\alpha_r}}
\newcommand{\exrigidfriction}{\ensuremath{\alpha_f}}
\newcommand{\exmoore}{\ensuremath{\alpha_m}}
\newcommand{\fvarA}{\phi}
\newcommand{\fvarB}{\psi}
\newcommand{\rfvar}{P}
\newcommand{\invvar}{I}
\newcommand{\rcfvar}{C}
\newcommand{\rgvar}{G}
\newcommand{\rrfvar}{R}
\newcommand{\rsfvar}{S}
\newcommand{\lterm}{v}
\newcommand{\ptermA}{p}
\newcommand{\ptermB}{q}
\newcommand{\cofterm}{g}
\newcommand{\dotp}[2]{#1 \stimes #2}
\newcommand{\bdr}[1]{\partial #1}
\newtheorem{theorem}{Theorem}
\newtheorem{lemma}[theorem]{Lemma}
\newtheorem{corollary}[theorem]{Corollary}
\newtheorem{definition}[theorem]{Definition}
\theoremstyle{remark}
\newtheorem{example}[theorem]{Example}
\newtheorem{remark}[theorem]{Remark}
\newtheorem{counterexample}[theorem]{Counterexample}{\itshape}{}
\let\oldbibliography\thebibliography
\renewcommand{\thebibliography}[1]{%
  \oldbibliography{#1}%
  \setlength{\itemsep}{0pt}%
}
   \def\@citecolor{blue}%
   \def\@urlcolor{blue}%
   \def\@linkcolor{blue}%
\def\orcidID#1{\href{http://orcid.org/#1}{\protect\raisebox{-1.25pt}{\protect\includegraphics{orcid.eps}}}}
\begin{document}

\title{Deductive Stability Proofs for Ordinary Differential Equations}

\author{Yong Kiam Tan \and
Andr\'e Platzer
\thanks{
  Computer Science Department, Carnegie Mellon University, Pittsburgh, USA
  {\{yongkiat$|$aplatzer\}@cs.cmu.edu}
  }
}
\date{}

\maketitle              %
\begin{abstract}
Stability is required for real world controlled systems as it ensures that those systems can tolerate small, real world perturbations around their desired operating states.
This paper shows how stability for continuous systems modeled by ordinary differential equations (ODEs) can be formally verified in differential dynamic logic (\dL).
The key insight is to specify ODE stability by suitably nesting the dynamic modalities of \dL with first-order logic quantifiers.
Elucidating the logical structure of stability properties in this way has three key benefits:
\begin{inparaenum}[\it i)]
\item it provides a flexible means of formally specifying various stability properties of interest,
\item it yields rigorous proofs of those stability properties from \dL's axioms with \dL's ODE safety and liveness proof principles, and
\item it enables formal analysis of the relationships between various stability properties which, in turn, inform proofs of those properties.
\end{inparaenum}
These benefits are put into practice through an implementation of stability proofs for several examples in \KeYmaeraX, a hybrid systems theorem prover based on \dL.

\textbf{Keywords:} {differential equations, stability, differential dynamic logic}
\end{abstract}

\section{Introduction}
\label{sec:introduction}

\begin{wrapfigure}[22]{r}{0.30\textwidth}
\includegraphics[width=0.26\textwidth,trim=0 8 0 10,clip]{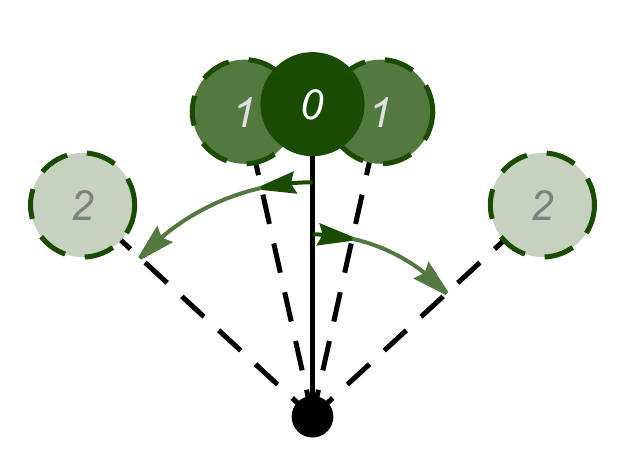}
\includegraphics[width=0.26\textwidth,trim=0 10 0 8,clip]{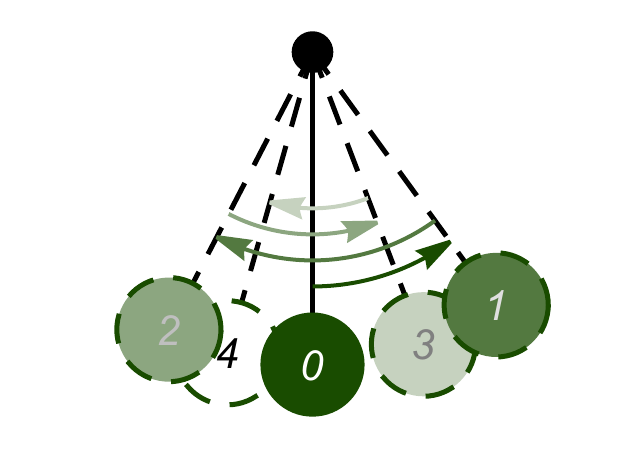}
\vspace{-\baselineskip}
\caption{A pendulum (in green) hung by a rigid rod from a pivot (in black) perturbed from its resting state (bottom) and from its inverted, upright position (top).
Perturbed states (with dashed boundaries) are faded out to show the progression of time.}
\label{fig:pendulumintro}
\end{wrapfigure}
The study of stability has its roots in efforts to understand mechanical systems, particularly those arising in celestial mechanics~\cite{hirsch1984,Liapounoff1907,Poincare92}.
Today, it is an important part of numerous applications in dynamical systems~\cite{MR3837141} and control theory~\cite{10.2307/j.ctvcm4hws,MR1201326}.
This paper studies proofs of stability for continuous dynamical systems described by \emph{ordinary differential equations} (ODEs), such as those used to model  feedback control systems~\cite{10.2307/j.ctvcm4hws,MR1201326}.
For such systems, ODE stability is a key correctness requirement~\cite{10.2307/j.ctt17kkb0d} that deserves fully rigorous proofs \emph{alongside} other key properties such as safety and liveness of those ODEs~\cite{DBLP:journals/jacm/PlatzerT20,DBLP:journals/fac/TanP}.
Despite this, formal stability verification has received less attention compared to proofs of safety and liveness, e.g., through reachability or deductive techniques~\cite{DoyenFPP18}.

Stability for a continuous system (or ODEs) requires that
\begin{inparaenum}[\it i)]
\item its system state always stays close to some desired operating state(s) when initially slightly perturbed from those operating state(s), and
\item those perturbations are eventually dissipated so the system returns to a desired operating state.
\end{inparaenum}
These properties are especially crucial for engineered systems because they must be robust to real world perturbations deviating from idealized system models.
Simple pendulums provide canonical examples of stability phenomena: they are always observed to settle in the rest position of~\rref{fig:pendulumintro} (bottom) after some time regardless of how they are initially released.
In contrast, the inverted pendulum in~\rref{fig:pendulumintro} (top) is \emph{theoretically} also at a resting position but can only be observed transiently in practice because the slightest real world perturbation will cause the pendulum to fall due to gravity.
Stability explains these observations---the resting position is (asymptotically) stable while the inverted position is unstable and requires active control to ensure its stability.
Proofs of safety and liveness properties are still required for the inverted pendulum under control, e.g., its controller must never generate unsafe amounts of torque and the pendulum must eventually reach the inverted position.
The \emph{triumvirate} of safety, liveness, and stability is required for holistic correctness of the inverted pendulum controller.

\begin{wrapfigure}[12]{r}{0.30\textwidth}
\includegraphics[width=0.26\textwidth,trim=0 2 0 0,clip]{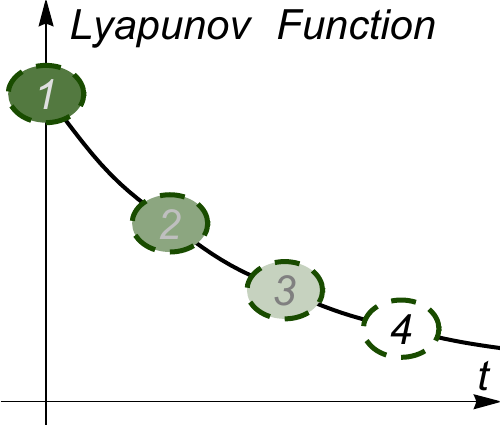}
\vspace{-\baselineskip}
\caption{A Lyapunov function that decreases along the pendulum trajectory shown in~\rref{fig:pendulumintro} (bottom).}
\label{fig:pendulumlyap}
\end{wrapfigure}
The classical way of distinguishing the aforementioned stability situations is by designing a \emph{Lyapunov function}~\cite{Liapounoff1907}, i.e., an energy-like auxiliary measure satisfying certain \emph{arithmetical conditions}~\cite{10.2307/j.ctvcm4hws,MR1201326,MR0450715} which implies that the auxiliary energy decreases along system trajectories towards local minima at the stable resting state(s), see~\rref{fig:pendulumlyap}.
Prior approaches~\cite{DBLP:conf/tacas/AhmedPA20,DBLP:conf/cav/GaoKDRSAK19,DBLP:conf/hybrid/KapinskiDSA14,DBLP:journals/mics/LiuZZ12,DBLP:conf/nolcos/Sankaranarayanan0A13} have emphasized the need to formally verify those arithmetical conditions in order to guarantee that a conjectured Lyapunov function correctly implies stability for a given system.

This paper shows how deductive proofs of ODE stability can be carried out in differential dynamic logic (\dL)~\cite{DBLP:conf/lics/Platzer12a,DBLP:journals/jar/Platzer17,Platzer18}, a logic for \emph{deductive verification} of hybrid systems.\footnote{Hybrid systems are mathematical models describing discrete and continuous dynamics, and interactions thereof. This paper's formal understanding of ODE stability is crucial for subsequent investigation of hybrid systems stability~\cite{DBLP:books/sp/necs2005/Branicky05,10.2307/j.ctt7s02z,DBLP:books/sp/Liberzon03}.}
The key insight is that stability properties can be specified by suitably nesting the dynamic modalities of \dL with quantifiers of first-order logic.
The resulting specifications are amenable to rigorous proof by combining \dL's ODE safety~\cite{DBLP:journals/jacm/PlatzerT20} and liveness~\cite{DBLP:journals/fac/TanP} proof principles with real arithmetic and first-order quantifier reasoning.
This makes it possible to \emph{syntactically derive} stability for a given system from the small set of \dL axioms which, in turn, enables trustworthy stability proofs in the \KeYmaeraX theorem prover for hybrid systems~\cite{DBLP:conf/cade/FultonMQVP15,DBLP:journals/jar/Platzer17}.
Notably, this approach directly verifies \emph{stability specifications}, which goes beyond verifying arithmetic that imply those specifications~\cite{DBLP:conf/tacas/AhmedPA20,DBLP:conf/cav/GaoKDRSAK19,DBLP:conf/hybrid/KapinskiDSA14,DBLP:journals/mics/LiuZZ12,DBLP:conf/nolcos/Sankaranarayanan0A13}.
This is crucial for advanced stability notions because those variations generally require subtle twists to the required arithmetical conditions on their Lyapunov functions~\cite{10.2307/j.ctvcm4hws}; proofs of stability specifications alleviate the onus on system designers to correctly pick and check the appropriate conditions for their applications.
\rref{sec:asymstability} shows how various stability properties for ODE equilibria can be formally specified and proved in \dL with Lyapunov function techniques.
\rref{sec:genstability} generalizes those stability specifications, yielding unambiguous formal specifications of advanced stability properties from the literature~\cite{10.2307/j.ctvcm4hws,MR1201326}, along with their derived proof rules.
These specifications also provide rigorous insights into the logical relationship between various stability notions, which are used to inform their respective proofs.
\rref{sec:casestudies} illustrates the practicality of this paper's \dL approach through several stability case studies formalized in \KeYmaeraX. %

\iflongversion
All omitted definitions and proofs are available in Appendices~\ref{app:proofcalc}--\ref{app:cex}.
\else
All omitted definitions and proofs are available in the supplement~\rref{app:}.
\fi

\section{Background: Differential Dynamic Logic}
\label{sec:background}

This section briefly recalls the syntax and semantics of \dL, focusing on its continuous fragment which has a complete axiomatization for ODE invariants~\cite{DBLP:journals/jacm/PlatzerT20}.
Full presentations of \dL, including its discrete fragment, are available elsewhere~\cite{DBLP:journals/jar/Platzer17,Platzer18}.

\paragraph{Syntax and Semantics.}

The grammar of \dL terms is as follows, where $x \in \allvars$ is a variable and $c \in \rationals$ is a rational constant.
These terms are polynomials over $\allvars$
(extensions with Noetherian functions~\cite{DBLP:journals/jacm/PlatzerT20} such as $\text{exp},\sin,\cos$ are possible):
\[
	\ptermA,\ptermB~\bebecomes~x \alternative c \alternative \ptermA + \ptermB \alternative \ptermA \cdot \ptermB
\]

The grammar of \dL formulas is as follows, where $\sim {\in}~\{=,\neq,\geq,>,\leq,<\}$ is a comparison operator and $\alpha$ is a hybrid program:
\begin{align*}
  \fvarA,\fvarB~\bebecomes&~\ptermA \sim \ptermB \alternative \fvarA \land \fvarB \alternative \fvarA \lor \fvarB \alternative \lnot{\fvarA} \alternative \lforall{v}{\fvarA} \alternative \lexists{v}{\fvarA} \alternative \dbox{\alpha}{\fvarA} \alternative\ddiamond{\alpha}{\fvarA}
\end{align*}

This grammar features atomic comparisons ($\ptermA \sim \ptermB$), propositional connectives ($\lnot$, $\land$, $\lor$), first-order quantifiers over the reals ($\lforall{}{}$, $\lexists{}{}$), and the box ($\dbox{\alpha}{\fvarA}$) and diamond ($\ddiamond{\alpha}{\fvarA}$) modality formulas which express that all or some runs of hybrid program $\alpha$ satisfy $\fvarA$, respectively.
The modalities $\dibox{\cdot},\didia{\cdot}$ can be freely nested with first-order and modal connectives, which is crucial for the specification of stability properties in~Sections~\ref{sec:asymstability} and~\ref{sec:genstability}.
Formulas not containing the modalities are formulas of first-order real arithmetic and are written as $\rfvar,\ivr,\rrfvar$.

This paper focuses on the \emph{continuous} fragment of hybrid programs $\alpha \mnodefequiv \pevolvein{\D{x}=\genDE{x}}{\ivr}$, where $\D{x}=\genDE{x}$ is an $n$-dimensional system of ordinary differential equations (ODEs), $\D{x_1}{=}f_1(x), \dots, \D{x_n}{=}f_n(x)$, over variables $x = (x_1,\dots,x_n)$, the LHS $\D{x_i}$ is the time derivative of $x_i$ and the RHS $f_i(x)$ is a polynomial over variables $x$.
The evolution domain constraint $\ivr$ specifies the set of states in which the ODE is allowed to evolve continuously.
When $\ivr$ is the formula $\ltrue$, the ODE is also written as $\D{x}=\genDE{x}$.
For $n$-dimensional vectors $x,y$, the dot product is $\dotp{x}{y} \mdefeq \sum_{i=1}^{n}{x_i y_i} $ and $\norm{x}^2\mdefeq \sum_{i=1}^{n}{x_i^2}$ denotes the squared Euclidean norm.
Variables $z \in \allvars \setminus \{x\}$ not occurring on the LHS of ODE $\D{x}=\genDE{x}$ are \emph{parameters} that remain constant along ODE solutions.
The following parametric ODE model of a simple pendulum is used as a running example.

\begin{example}[Pendulum model]
\label{ex:pendulum}
The ODE $\expendulumnonlin \mnodefequiv \D{\theta} = \omega,~\D{\omega} = {-}\frac{g}{L}\sin(\theta) -b\omega$ models a pendulum (illustrated below) suspended from a pivot by a rod of length $L$, where $\theta$ is the angle of displacement, $\omega$ is the angular velocity of the pendulum, and $g > 0$ is the gravitational constant.
Parameter $a \mnodefeq \frac{g}{L}$ is a positive scaling constant and parameter $b \geq 0$ is the coefficient of friction for angular velocity.
{\makeatletter%
\let\par\@@par%
\par\parshape0%
\everypar{}\begin{wrapfigure}[10]{r}{0.18\textwidth}%
\vspace{-0.7\baselineskip}%
\includegraphics[width=0.14\textwidth]{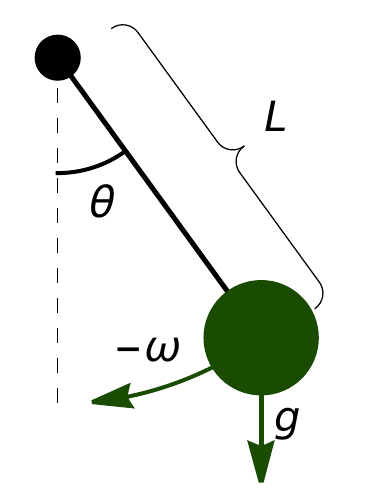}
\vspace{-\baselineskip}%
\end{wrapfigure}
The symbolic parameters $a, b$ make analysis of $\expendulumnonlin$ apply to a range of concrete values, e.g., pendulums that are suspended by a long rod (with large $L$) are modeled by small positive values of $a$, while frictionless pendulums have $b=0$.
A simplification of $\expendulumnonlin$ is used because stability analyses often concern the behavior of the pendulum near its resting (or inverted) state where $\theta = 0$.
For such nearby states with $\theta \approx 0$, the small angle approximation $\sin(\theta) \approx \theta$ yields a linear ODE:\footnote{This linearization is justified by the Hartman-Grobman theorem~\cite{Chicone2006}. A nonlinear polynomial approximation, such as $\sin(\theta) \approx \theta - \frac{\theta^3}{6}$, can also be used.}
\begin{align}
\expendulum \mnodefequiv \D{\theta} = \omega,~\D{\omega} = - a\theta -b\omega
\label{eq:pendulum}
\end{align}
\par}%

{\makeatletter%
\let\par\@@par%
\par\parshape0%
\everypar{}\begin{wrapfigure}[7]{r}{0.18\textwidth}%
\vspace{-1.5\baselineskip}%
\includegraphics[width=0.14\textwidth]{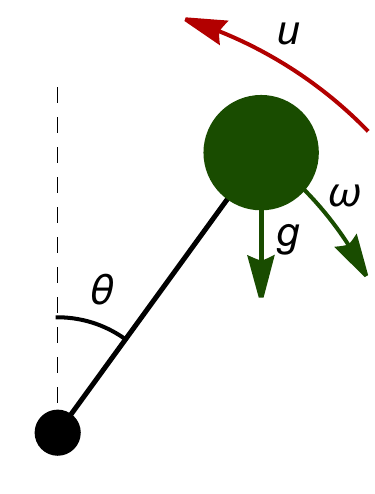}
\vspace{-\baselineskip}%
\end{wrapfigure}

An \emph{inverted} pendulum is modeled by a similar ODE (illustrated on the right) under a change of coordinates.
Such a pendulum requires an external torque input $u(\theta,\omega)$ to maintain its stability; $u(\theta,\omega)$ is determined and proved correct in~\rref{sec:casestudies}.
\begin{align}
\expendulumforced \mnodefequiv \D{\theta} = \omega,~\D{\omega} = a\theta -b\omega - u(\theta,\omega)
\label{eq:pendulumforced}
\end{align}
\par}%

\end{example}

States $\iget[state]{\I} : \allvars \to \reals$ assign real values to each variable in $\allvars$; the set of all states is $\States$.
The semantics of \dL formula $\fvarA$ is the set of states $\imodel{\I}{\fvarA} \subseteq \States$ in which $\fvarA$ is true~\cite{DBLP:journals/jar/Platzer17,Platzer18}, where the semantics of first-order logical connectives are defined as usual, e.g., $\imodel{\I}{\fvarA \land \fvarB} = \imodel{\I}{\fvarA} \cap \imodel{\I}{\fvarB}$.
For ODEs, the semantics of the modal operators is as follows.\footnote{The semantics of \dL formulas is defined compositionally elsewhere~\cite{DBLP:journals/jar/Platzer17,Platzer18}.}
Let $\iget[state]{\I} \in \States$ and $\solvar : [0, T) \to \States$ for some $0<T\leq\infty$, be the unique, right-maximal solution~\cite{Chicone2006} to ODE $\D{x}=\genDE{x}$ with initial value $\solvar(0)=\iget[state]{\I}$:
\begin{align*}
 \m{\imodels{\I}{\dbox{\pevolvein{\D{x}=\genDE{x}}{\ivr} }{\fvarA}}}~\text{iff}~&\text{for all}~0 \leq \tau < T~\text{where}~\solvar(\zeta)\,{\in}\,\imodel{\I}{\ivr}~\text{for all}~0 \leq \zeta \leq \tau\text{:} \solvar(\tau) \in \imodel{\I}{\fvarA} \\
 \m{\imodels{\I}{\ddiamond{\pevolvein{\D{x}=\genDE{x}}{\ivr}}{\fvarA}}}~\text{iff}~&\text{there exists}~0 \leq \tau < T~\text{such that:} \solvar(\tau) \in \imodel{\I}{\fvarA}~\text{and}~\solvar(\zeta) \in \imodel{\I}{\ivr}~\text{for all}~0 \leq \zeta \leq \tau
\end{align*}

For a formula $\rfvar$ the $\varepsilon$-neighborhood of $\rfvar$ with respect to $x$ is defined as $\neighborhood[\varepsilon]{\rfvar} \mdefequiv \lexists{y}{\big(\norm{x-y}^2 < \varepsilon^2 \land \rfvar(y)\big)}$, where the existentially quantified variables $y$ are fresh in $\rfvar$.
The neighborhood formula $\neighborhood[\varepsilon]{\rfvar}$ characterizes the set of states within distance $\varepsilon$ from $\rfvar$, with respect to the dynamically evolving variables $x$.
This is useful for syntactically expressing small $\varepsilon$ perturbations in the stability definitions of Sections~\ref{sec:asymstability} and~\ref{sec:genstability}.
For formulas $\rfvar$ of first-order real arithmetic, the $\varepsilon$-neighborhood, $\neighborhood[\varepsilon]{\rfvar}$, can be equivalently expressed in quantifier-free form by quantifier elimination~\cite{Bochnak1998}.
For example, $\neighborhood[\varepsilon]{x=0}$ is equivalent to the formula $\norm{x}^2 < \varepsilon^2$.
Formulas $\closure{\rfvar}$ and $\bdr{\rfvar}$ are the syntactically definable topological closure and boundary of the set characterized by $\rfvar$, respectively~\cite{Bochnak1998}.

\paragraph{Proof Calculus.}
\irlabel{qear|\usebox{\Rval}}
\irlabel{notr|$\lnot$\rightrule}
\irlabel{notl|$\lnot$\leftrule}
\irlabel{orr|$\lor$\rightrule}
\irlabel{orl|$\lor$\leftrule}
\irlabel{andr|$\land$\rightrule}
\irlabel{andl|$\land$\leftrule}
\irlabel{implyr|$\limply$\rightrule}
\irlabel{implyl|$\limply$\leftrule}
\irlabel{equivr|$\lbisubjunct$\rightrule}
\irlabel{equivl|$\lbisubjunct$\leftrule}
\irlabel{id|id}
\irlabel{cut|cut}
\irlabel{weakenr|W\rightrule}
\irlabel{weakenl|W\leftrule}
\irlabel{existsr|$\exists$\rightrule}
\irlabel{existsrinst|$\exists$\rightrule}
\irlabel{alll|$\forall$\leftrule}
\irlabel{alllinst|$\forall$\leftrule}
\irlabel{allr|$\forall$\rightrule}
\irlabel{existsl|$\exists$\leftrule}
\irlabel{iallr|i$\forall$}
\irlabel{iexistsr|i$\exists$}

All derivations and proof rules are presented in a classical sequent calculus.
The semantics of \emph{sequent} \(\lsequent{\Gamma}{\fvarA}\) is equivalent to the formula \((\landfold_{\fvarB \in\Gamma} \fvarB) \limply \fvarA\). A sequent is valid iff its corresponding formula is valid.
Completed branches in a sequent proof are marked with $\lclose$.
Assumptions $\fvarB \in \Gamma$ that have only ODE parameters as free variables remain true along ODE evolutions and are soundly kept across ODE deduction steps~\cite{DBLP:journals/jar/Platzer17,Platzer18}.
First-order real arithmetic is decidable~\cite{Bochnak1998} so we assume such a decision procedure and label proof steps with \irref{qear} when they follow from real arithmetic.
Axioms and proof rules are \emph{derivable} iff they can be deduced from sound \dL axioms and proof rules~\cite{DBLP:journals/jar/Platzer17,Platzer18}.

Formula $\invvar$ is an \emph{invariant} of the ODE $\pevolvein{\D{x}=\genDE{x}}{\ivr}$ iff the formula \(\invvar \limply \dbox{\pevolvein{\D{x}=\genDE{x}}{\ivr}}{\invvar}\) is valid.
The \dL proof calculus is \emph{complete} for ODE invariants~\cite{DBLP:journals/jacm/PlatzerT20}, i.e., any true ODE invariant expressible in first-order real arithmetic can be proved in the calculus.
The calculus also supports refinement reasoning~\cite{DBLP:journals/fac/TanP} for proving ODE liveness properties \(\rfvar \limply \ddiamond{\pevolvein{\D{x}=\genDE{x}}{\ivr}}{\rrfvar}\), which says that the goal $\rrfvar$ is reached along the ODE $\pevolvein{\D{x}=\genDE{x}}{\ivr}$ from precondition $\rfvar$.

An important syntactic tool for reasoning with ODE $\D{x}=\genDE{x}$ is the \emph{Lie derivative} of term $\ptermA$ defined as $\lied[]{\genDE{x}}{\ptermA} \mdefeq \sum_{x_i\in x} \Dp[x_i]{\ptermA} f_i(x)$, whose semantic value is equal to the time derivative of the value of $\ptermA$ along solutions $\solvar$ of the ODE~\cite{DBLP:journals/jar/Platzer17,DBLP:journals/jacm/PlatzerT20}.
They are provably definable in \dL using syntactic differentials~\cite{DBLP:journals/jar/Platzer17}.

\section{Asymptotic Stability of an Equilibrium Point}
\label{sec:asymstability}

This section presents Lyapunov's classical notion of asymptotic stability~\cite{Liapounoff1907} and its formal specification in \dL.
This formalization enables the derivation of \dL stability proof rules with \emph{Lyapunov functions}~\cite{10.2307/j.ctvcm4hws,MR1201326,Liapounoff1907,MR0450715}.
Several related stability concepts are formalized in \dL, along with their relationships and rules.

\subsection{Mathematical Preliminaries}
\label{subsec:asymstabilitymath}
An \emph{equilibrium point} of ODE $\D{x}=\genDE{x}$ is a point $x_0 \in \reals^n$ where $\genDE{x_0} = 0$, so a system that starts at $x_0$ stays at $x_0$ along its continuous evolution.
Such points are often interesting in real-world systems, e.g., the equilibrium point $\theta=0, \omega=0$ for $\expendulum$ from~\rref{eq:pendulum} is the resting state of a pendulum.
For a controlled system, equilibrium points often correspond to desired steady system states where no further continuous control input (modeled as part of $\genDE{x}$) is required~\cite{MR1201326}.

For brevity, assume the origin $0 \in \reals^n$ is an equilibrium point of interest.
Any other equilibrium point(s) of interest $x_0 \in \reals^n$ can be translated to the origin with the change of coordinates $x \mapsto x - x_0$ for the ODE%
\iflongversion%
, see~\rref{lem:translation}.
\else
~(see supplement~\rref{app:}).
\fi

The following definition of asymptotic stability is standard~\cite{10.2307/j.ctvcm4hws,MR1201326,MR0450715}.\footnote{Some definitions require, or implicitly assume, right-maximal solutions $x(t)$ to be global, i.e., with $T = \infty$, see~\cite[Definition 4.1]{MR1201326} and associated discussion. The definitions presented here are better suited for subsequent generalizations.}

\begin{definition}[Asymptotic stability~\cite{10.2307/j.ctvcm4hws,MR1201326,MR0450715}]
The origin $0 \in \reals^n$ of ODE $\D{x}=\genDE{x}$ is
\begin{itemize}
\item \textbf{stable} if, for all $\varepsilon > 0$, there exists $\delta > 0$ such that for all initial states $x=x(0)$ with $\norm{x} < \delta$, the right-maximal ODE solution $x(t) : [0,T) \to \reals^n$ satisfies $\norm{x(t)} < \varepsilon$ for all times $0 \leq t < T$,
\item \textbf{attractive} if there exists $\delta > 0$ such that for all $x=x(0)$ with $\norm{x} < \delta$, the right-maximal ODE solution $x(t) : [0,T) \to \reals^n$ satisfies $\lim_{t \to T}{x(t) = 0}$,
\item \textbf{asymptotically stable} if it is stable and attractive.
\end{itemize}
\label{def:asymstabmath}
\end{definition}

These definitions can be understood using the resting state of the pendulum from~\rref{fig:pendulumintro} (bottom) which is asymptotically stable.
When the pendulum is given a light push from its bottom resting state (formally, $\norm{x} < \delta$), it gently oscillates near that resting state (formally, $\norm{x(t)} < \varepsilon$).
In the presence of friction, these oscillations eventually dissipate so the pendulum asymptotically returns to its resting state (formally, $\lim_{t \to T}{x(t) = 0}$).
This behavior is \emph{local}, i.e., for any given $\varepsilon > 0$, there \emph{exists} a sufficiently small $\delta > 0$ perturbation of the initial state that results in gentle oscillations with $\norm{x(t)} < \varepsilon$, see~\rref{fig:stability2} (left).
A strong push, e.g., with $\delta > \varepsilon$, could instead cause the pendulum to spin around on its pivot.
\begin{remark}
Stability and attractivity \emph{do not} imply each other~\cite[Chapter I.2.7]{MR0450715}.
However, if the origin is stable, attractivity can be defined in a simpler way.
This is proved in \dL, after characterizing stability and attractivity syntactically.
\end{remark}

\subsection{Formal Specification}
\label{subsec:asymstabilitydl}

\begin{figure}[t]
\centering%
\includegraphics[width=.85\textwidth,trim=0 380 0 380,clip]{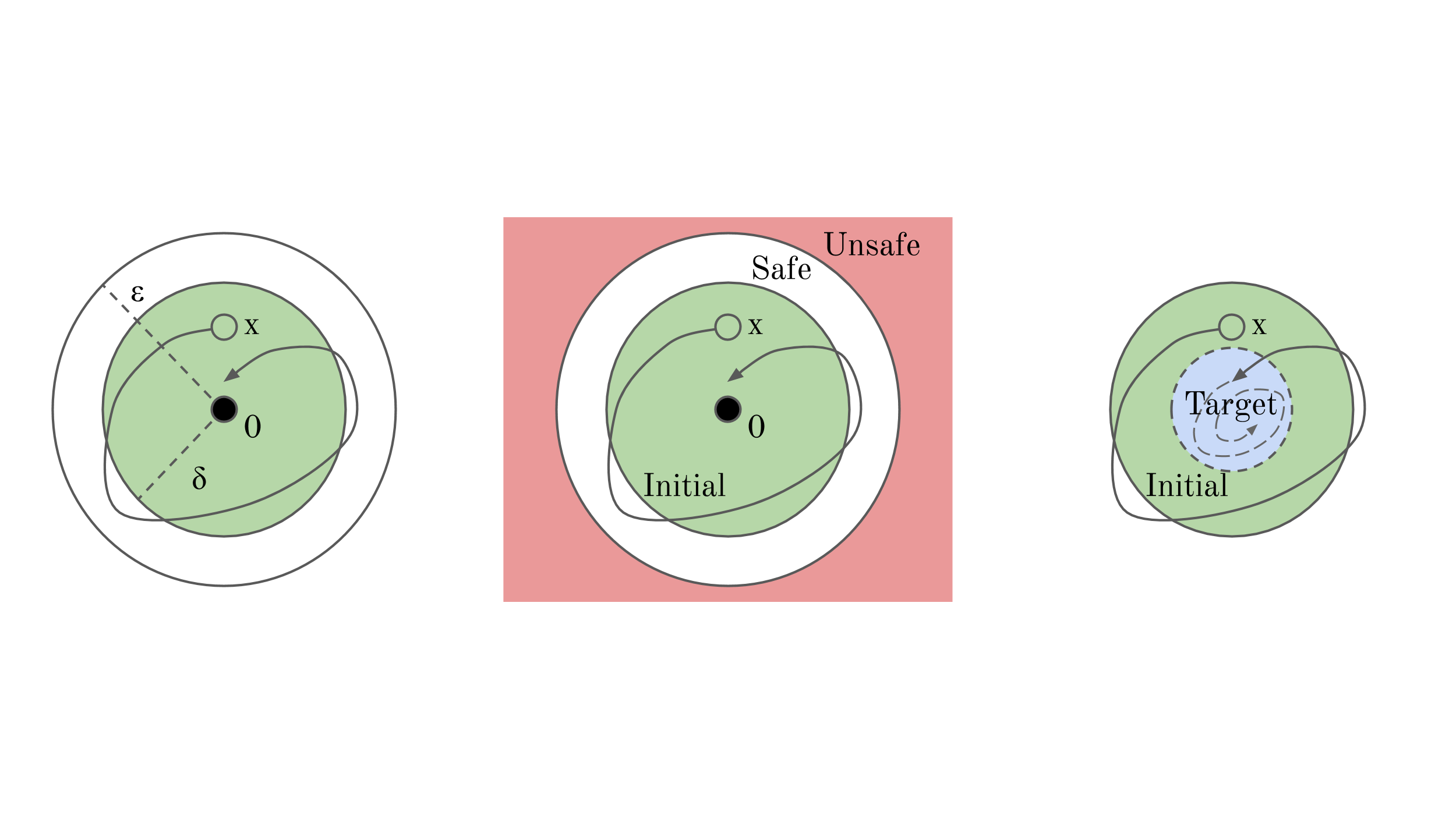}
\caption{Solutions from points in the $\delta$ ball around the origin, like the green initial point $x$, remain within the $\varepsilon$ ball around the origin $0 \in \reals^n$ (black dot) and asymptotically approach the origin. The latter two plots illustrate how asymptotic stability for an ODE can be broken down into a pair of (quantified) ODE safety and liveness properties.}
\label{fig:stability2}
\end{figure}

The formal specification of asymptotic stability in \dL combines \begin{inparaenum}[\it i)]
\item the dynamic modalities of \dL, which are used to quantify over the dynamics of the ODE, and
\item the first-order logic quantifiers, which are used to express combinations of (topologically) local and asymptotic properties of those dynamics.
\end{inparaenum}

\begin{lemma}[Asymptotic stability in \dL]
The origin of ODE $\D{x}=\genDE{x}$ is, respectively,
\begin{inparaenum}[\it i)]
\item \textbf{stable}, \item \textbf{attractive}, and \item \textbf{asymptotically stable}
\end{inparaenum} iff the \dL formulas
\begin{inparaenum}[\it i)]
\item $\stabode{\D{x}=\genDE{x}}$, \item $\attrode{\D{x}=\genDE{x}}$, and \item $\astabode{\D{x}=\genDE{x}}$
\end{inparaenum} respectively are valid.
Variables $\varepsilon,\delta$ are fresh, i.e., not in $x, f(x)$.
\begin{align*}
\stabode{\D{x}=\genDE{x}} &\mnodefequiv \lforall{\varepsilon {>} 0} { \lexists{\delta {>} 0}{ \lforall{x}{\big( \neighborhood[\delta]{x=0} \limply \dbox{\D{x}=\genDE{x}}{\,\neighborhood[\varepsilon]{x=0}}\big)}}} \\
\attrode{\D{x}=\genDE{x}} &\mnodefequiv \lexists{\delta {>} 0} { \lforall{x}{\big( \neighborhood[\delta]{x=0} \limply \asymode{\D{x}=\genDE{x}}{x=0}\big)}} \\
\astabode{\D{x}=\genDE{x}} &\mnodefequiv \stabode{\D{x}=\genDE{x}} \land \attrode{\D{x}=\genDE{x}}
\end{align*}

Formula $\asymode{\D{x}=\genDE{x}}{\rfvar} \mnodefequiv \lforall{\varepsilon {>} 0}{\ddiamond{\D{x}=\genDE{x}}{ \dbox{\D{x}=\genDE{x}}{\,\neighborhood[\varepsilon]{\rfvar}}}}$ characterizes the set of states that asymptotically approach $\rfvar$ along ODE solutions.

\label{lem:asymstabdl}
\end{lemma}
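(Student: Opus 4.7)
The plan is to prove each biconditional by unfolding the \dL semantics of the specified formula and matching it against~\rref{def:asymstabmath}. Since $\stabode{\D{x}=\genDE{x}}$, $\attrode{\D{x}=\genDE{x}}$, and $\astabode{\D{x}=\genDE{x}}$ are closed (apart from the ODE itself), their validity reduces to truth at an arbitrary state. Quantifier elimination~\cite{Bochnak1998} replaces each $\neighborhood[r]{x=0}$ by the polynomial inequality $\norm{x}^2 < r^2$ in the currently bound variables $x$. The third biconditional then follows immediately from the conjunctive shape of $\astabode{\D{x}=\genDE{x}}$ and~\rref{def:asymstabmath}.

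For stability, the outer real quantifiers $\lforall{\varepsilon {>} 0}\lexists{\delta {>} 0}\lforall{x}$ unfold to quantification over positive reals $\varepsilon,\delta$ and over an initial state in which the variables $x$ are set arbitrarily. The box modality then unfolds, by the ODE semantics recalled in~\rref{sec:background}, to a universal statement over the unique right-maximal solution $\solvar : [0,T) \to \reals^n$ starting at the chosen initial state. The resulting assertion is literally~\rref{def:asymstabmath}'s stability clause.

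For attractivity, the outer $\lexists{\delta {>} 0}\lforall{x}$ quantifiers unfold analogously; the crux is to show that $\asymode{\D{x}=\genDE{x}}{x=0}$ holds at an initial state iff the right-maximal solution $\solvar : [0,T) \to \reals^n$ from that state satisfies $\lim_{t\to T}\solvar(t)=0$. By the diamond-then-box semantics, $\asymode{\D{x}=\genDE{x}}{x=0}$ asserts that for every $\varepsilon>0$ there exists $\tau\in[0,T)$ such that the right-maximal solution starting from the intermediate state $\solvar(\tau)$ stays within $\neighborhood[\varepsilon]{x=0}$ throughout its existence. By uniqueness of ODE solutions, this restarted solution is precisely the time-shifted tail $t\mapsto\solvar(\tau+t)$ with right-maximal time $T-\tau$. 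Hence the nested condition is equivalent to $\norm{\solvar(t)}<\varepsilon$ for all $t\in[\tau,T)$, which is the standard characterization of $\lim_{t\to T}\solvar(t)=0$.

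The main obstacle is the attractivity case: one must carefully align the ``restart'' semantics of the nested $\ddiamond{\cdot}{\cdot}$-then-$\dbox{\cdot}{\cdot}$ modalities---where the inner box implicitly begins a fresh right-maximal solution from $\solvar(\tau)$---with the single right-maximal solution used in~\rref{def:asymstabmath}. This alignment rests on ODE-solution uniqueness and requires handling the $T=\infty$ and $T<\infty$ cases uniformly. All other steps are routine semantic unfoldings combined with quantifier elimination.
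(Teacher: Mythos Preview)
Your proposal is correct and matches the paper's own proof essentially step for step: both unfold the \dL semantics of the quantifiers and modalities against~\rref{def:asymstabmath}, both identify the neighborhood formulas with the norm bounds $\norm{x}<\delta$ and $\norm{x}<\varepsilon$, and both single out the attractivity case as the only nontrivial point, resolving the nested $\ddiamond{\cdot}{\dbox{\cdot}{\cdot}}$ via uniqueness of ODE solutions so that the inner box is the time-shifted tail of the same right-maximal solution. Your explicit remark about handling $T=\infty$ and $T<\infty$ uniformly is a detail the paper leaves implicit, but it is not a different approach.
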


Formula $\stabode{\D{x}=\genDE{x}}$ is a syntactic \dL rendering of the corresponding quantifiers from~\rref{def:asymstabmath}.
The safety property $\neighborhood[\delta]{x=0} \limply \dbox{\D{x}=\genDE{x}}{\,\neighborhood[\varepsilon]{x=0}}$ expresses that solutions starting from the $\delta$-neighborhood of the origin always (for all times) stay safely in the $\varepsilon$-neighborhood, as visualized in~\rref{fig:stability2} (middle).

Formula $\attrode{\D{x}=\genDE{x}}$ uses the subformula $\asymode{\D{x}=\genDE{x}}{x=0}$ which characterizes the limit in~\rref{def:asymstabmath}.
Recall $\lim_{t \to T}{x(t) = 0}$ iff for all $\varepsilon > 0$ there exists a time $\tau$ with $0 \leq \tau < T$ such that for all times $t$ with $\tau \leq t < T$, the solution satisfies $\norm{x(t)} < \varepsilon$, i.e., the limit requires for all distances $\varepsilon > 0$, the ODE solution will \emph{eventually always} be within distance $\varepsilon$ of the origin, as visualized in~\rref{fig:stability2} (right).
This limit is characterized using nested $\didia{\cdot}\dibox{\cdot}$ modalities, together with first-order quantification according to~\rref{def:asymstabmath}.
More generally, formula $\asymode{\D{x}=\genDE{x}}{\rfvar}$ characterizes the set of initial states where the right-maximal ODE solution asymptotically approaches $\rfvar$; this set is known as the \emph{region of attraction} of $\rfvar$~\cite{MR1201326}.
Thus, attractivity requires that the region of attraction of the origin contains an open neighborhood $\neighborhood[\delta]{x=0}$ of the origin.

From~\rref{lem:asymstabdl}, proving validity of the formula $\astabode{\D{x}=\genDE{x}}$ yields a rigorous proof of asymptotic stability for $\D{x}=\genDE{x}$.
However, if the origin is stable, then attractivity can be provably simplified with the following corollary.

\begin{corollary}[Stable attractivity]
The following axiom is derivable in \dL.

\noindent
\begin{calculuscollection}
\begin{calculus}
\dinferenceRule[stabattr|SAttr]{}
{
\linferenceRule[impl]
  {\stabode{\D{x}=\genDE{x}}}
  {\big(\!\asymode{\D{x}=\genDE{x}}{x{=}0} \lbisubjunct \lforall{\varepsilon{>}0}{\ddiamond{\D{x}=\genDE{x}}{\,\neighborhood[\varepsilon]{x{=}0}}}\big)}
}{}
\end{calculus}
\end{calculuscollection}
\label{cor:asymstabsimp}
\end{corollary}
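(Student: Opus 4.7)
The plan is to prove the biconditional direction-by-direction under the stability hypothesis $\stabode{\D{x}=\genDE{x}}$, generalizing over the outer $\varepsilon$ where appropriate.

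For the forward ``$\limply$'' direction (which does not use stability), after unfolding $\asymode{\D{x}=\genDE{x}}{x{=}0}$ and fixing $\varepsilon > 0$, I would reduce the goal to the implication $\ddiamond{\D{x}=\genDE{x}}{\dbox{\D{x}=\genDE{x}}{\neighborhood[\varepsilon]{x{=}0}}} \limply \ddiamond{\D{x}=\genDE{x}}{\neighborhood[\varepsilon]{x{=}0}}$ and apply diamond monotonicity together with $\dbox{\D{x}=\genDE{x}}{\fvarA} \limply \fvarA$ (instantiating the ODE solution at duration zero), which drops the inner box.

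For the substantive backward ``$\limply$'' direction, I would fix $\varepsilon > 0$ and aim to derive $\ddiamond{\D{x}=\genDE{x}}{\dbox{\D{x}=\genDE{x}}{\neighborhood[\varepsilon]{x{=}0}}}$ in three steps. First, instantiate $\stabode{\D{x}=\genDE{x}}$ at this $\varepsilon$ to extract some $\delta > 0$ together with the closed formula
\[
\psi \mnodefequiv \lforall{x}{\big(\neighborhood[\delta]{x{=}0} \limply \dbox{\D{x}=\genDE{x}}{\neighborhood[\varepsilon]{x{=}0}}\big)}.
\]
Second, instantiate the right-hand hypothesis $\lforall{\varepsilon'{>}0}{\ddiamond{\D{x}=\genDE{x}}{\neighborhood[\varepsilon']{x{=}0}}}$ at $\delta$ to obtain $\ddiamond{\D{x}=\genDE{x}}{\neighborhood[\delta]{x{=}0}}$. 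Third, thread $\psi$ across the diamond as a constant assumption---sound because the only free variables of $\psi$ are the ODE parameters $\varepsilon, \delta$, the state variables $x$ being bound by the universal---and combine it with the preceding diamond to yield $\ddiamond{\D{x}=\genDE{x}}{(\neighborhood[\delta]{x{=}0} \land \psi)}$; instantiating $\psi$ at the reached state then discharges $\dbox{\D{x}=\genDE{x}}{\neighborhood[\varepsilon]{x{=}0}}$ as required. Universal generalization over $\varepsilon$ concludes.

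The main obstacle is this last chaining step, which requires simultaneously propagating the ``state-free'' universally quantified stability assumption $\psi$ across the $\ddiamond{\cdot}{}$ step and subsequently instantiating it at the reached (not explicitly known) state. Once framed as standard \dL preservation of constant assumptions, plus diamond monotonicity, plus first-order $\forall$-instantiation at the post-state, the remainder is routine bookkeeping over the fresh parameters $\varepsilon$ and $\delta$.
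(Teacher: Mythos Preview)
Your proposal is correct and follows essentially the same route as the paper's proof (given for the more general axiom \irref{stabattrgen} with $\rfvar \mnodefequiv x{=}0$): the forward direction is diamond monotonicity plus differential skip~\irref{DX}, and the backward direction instantiates stability at $\varepsilon$ to obtain $\delta$ and the universally $x$-quantified implication, instantiates the reachability hypothesis at $\delta$, and then uses diamond monotonicity~\irref{MdW} keeping the constant (in $x$) assumption across the modality. Your explicit remark that the free variables of $\psi$ are only the parameters $\varepsilon,\delta$ is exactly the justification the paper invokes for soundly retaining such assumptions across ODE deduction steps.
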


\rref{cor:asymstabsimp} simplifies the syntactic characterization of the region of attraction for stable equilibria from a nested $\didia{\cdot}\dibox{\cdot}$ formula to a $\didia{\cdot}$ formula, which is then directly amenable to ODE liveness reasoning~\cite{DBLP:journals/fac/TanP}.
This corollary is used to simplify proofs of asymptotic stability, as explained next.

\subsection{Lyapunov Functions}

\emph{Lyapunov functions} are the standard tool for showing stability of general, non-linear ODEs~\cite{10.2307/j.ctvcm4hws,MR1201326,MR0450715} and finding suitable Lyapunov functions is an important problem in its own right~\cite{DBLP:conf/tacas/AhmedPA20,261424,DBLP:conf/cav/GaoKDRSAK19,DBLP:conf/hybrid/KapinskiDSA14,DBLP:journals/mics/LiuZZ12,1184414,Parrilo00,DBLP:conf/nolcos/Sankaranarayanan0A13,DBLP:journals/automatica/TopcuPS08}.
This section shows how a candidate Lyapunov function, once found, can be used to rigorously prove stability.
The following proof rules derive Lyapunov stability arguments~\cite{10.2307/j.ctvcm4hws,MR1201326,MR0450715} syntactically in \dL.

\begin{lemma}[Lyapunov functions]
The following Lyapunov function proof rules are derivable in \dL.

\noindent
\begin{calculus}
\dinferenceRule[Lyap|Lyap${_\geq}$]{Lyapunov}
{\linferenceRule
  {\lsequent{} {\genDE{0}=0 \land \lterm(0) = 0} \quad
   \lsequent{}{\lexists{\gamma{>}0}{\lforall{x}{\big(0 {<} \norm{x}^2 {\leq} \gamma^2 \limply \lterm > 0 \land \boldsymbol{\lied[]{\genDE{x}}{\lterm} \leq 0}\big)}} }}
  {\lsequent{} {\stabode{\D{x}=\genDE{x}}}}
}{}
\dinferenceRule[StrictLyap|Lyap${_>}$]{Strict Lyapunov}
{\linferenceRule
  {\lsequent{} {\genDE{0}=0 \land \lterm(0) = 0} \quad
   \lsequent{}{\lexists{\gamma{>}0}{\lforall{x}{\big(0 {<} \norm{x}^2 {\leq} \gamma^2 \limply \lterm > 0 \land \boldsymbol{\lied[]{\genDE{x}}{\lterm} \boldsymbol{<} 0}\big)}} }  }
  {\lsequent{} {\astabode{\D{x}=\genDE{x}}}}
}{}
\end{calculus}
\label{lem:lyapunov}
\end{lemma}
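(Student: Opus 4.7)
The plan is to derive both rules by translating the classical Lyapunov arguments into \dL, using its complete invariance proof principles for the box modality and, for the strict rule, its liveness calculus for the diamond modality. In both cases, the combinatorial structure is first handled by Skolemizing the outermost quantifiers in $\stabode{\D{x}=\genDE{x}}$ (and $\attrode{\D{x}=\genDE{x}}$), after which the right witnesses $\delta$ (and auxiliary bounds) are produced by real arithmetic applied to the Lyapunov hypothesis.

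For \irref{Lyap}, after Skolemizing $\varepsilon > 0$, I would restrict attention to the compact ball $\norm{x}^2 \leq \min(\varepsilon,\gamma)^2$ where the Lyapunov hypothesis applies. On its boundary sphere, $\lterm$ attains a positive minimum $m > 0$ (by continuity and the strict positivity premise), and by continuity at the origin with $\lterm(0)=0$, there is $\delta > 0$ so that $\norm{x}<\delta$ entails $\lterm < m$. All of these witness quantifiers are produced via a single \irref{qear} step on the Skolemized premise, since $\lterm$ is a polynomial and the required bounds are first-order definable over $\reals$. With $\delta$ in hand, the crucial step is to strengthen the postcondition $\neighborhood[\varepsilon]{x=0}$ by cutting in the ODE invariant $\lterm < m \land \norm{x}^2 < \min(\varepsilon,\gamma)^2$: the first conjunct is preserved because $\lied[]{\genDE{x}}{\lterm} \leq 0$, and the second is preserved because the only way for $\norm{x}^2$ to reach the boundary $\min(\varepsilon,\gamma)^2$ would force $\lterm \geq m$, contradicting the first conjunct. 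This conjunction provably implies $\norm{x}^2 < \varepsilon^2$.

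For \irref{StrictLyap}, stability follows immediately from \irref{Lyap} since $\lied[]{\genDE{x}}{\lterm} < 0$ implies $\lied[]{\genDE{x}}{\lterm} \leq 0$. For attractivity, I would invoke \irref{stabattr} to reduce the nested $\didia{\cdot}\dibox{\cdot}$ obligation to the liveness goal $\lforall{\varepsilon{>}0}{\ddiamond{\D{x}=\genDE{x}}{\neighborhood[\varepsilon]{x=0}}}$ from initial states in the $\delta$-ball supplied by stability. Picking $\delta$ again via the witness construction above, the argument on the compact annulus $\neighborhood[\delta]{x=0} \land \lnot\neighborhood[\varepsilon]{x=0}$ yields a negative maximum of $\lied[]{\genDE{x}}{\lterm}$, say $-c < 0$. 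Because $\lterm$ is bounded below by $0$ on this annulus and strictly decreases at a rate bounded away from zero along any solution that stays outside $\neighborhood[\varepsilon]{x=0}$, dL's liveness proof principles~\cite{DBLP:journals/fac/TanP} using $\lterm$ as a ``progress'' term conclude that $\neighborhood[\varepsilon]{x=0}$ is reached in finite time.

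The main obstacle, I expect, is the arithmetic bookkeeping that encodes compactness and continuity into an \irref{qear} premise that the decision procedure actually discharges: the conjunction of witnesses ($\delta$, $m$, and, for the strict case, $c$) that arises when Skolemizing \emph{after} unfolding $\stabode{\D{x}=\genDE{x}}$ creates a nested $\lforall\lexists\lforall$ structure that must be organized so that the polynomial bounds on $\lterm$ and $\lied[]{\genDE{x}}{\lterm}$ extract cleanly. The delicate design choice is the invariant $\lterm < m \land \norm{x}^2 < \min(\varepsilon,\gamma)^2$: neither conjunct alone is an invariant (the first lacks the geometric bound, the second lacks a reason to hold under the flow), but together they mutually reinforce via the Lyapunov condition, which is precisely the logical content of the classical argument.
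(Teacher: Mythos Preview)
Your proposal is correct and follows essentially the same route as the paper: for \irref{Lyap}, restrict to a ball of radius $\min(\varepsilon,\gamma)$, extract by~\irref{qear} the positive minimum $k$ of $\lterm$ on its boundary sphere together with a $\delta$ such that $\lterm<k$ on the $\delta$-ball, then prove the mutually reinforcing invariant (the paper decomposes this via rule~\irref{Enc} followed by sequential~\irref{dC} cuts rather than a single conjunctive invariant, but the content is identical); for \irref{StrictLyap}, reuse~\irref{Lyap} for stability, simplify attractivity via~\irref{stabattr}, and close with a compact-staging liveness argument on $\lterm$.

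One imprecision worth flagging: your staging annulus for the liveness step, $\neighborhood[\delta]{x=0} \land \lnot{\neighborhood[\varepsilon]{x=0}}$, is wrong. The $\delta$ supplied by stability is the radius of the \emph{initial} perturbation ball, not a trapping region; solutions starting in the $\delta$-ball are only guaranteed (by stability instantiated at $\varepsilon=\gamma$) to remain in the $\gamma$-ball, and it is that larger ball which must bound the annulus on which you invoke $\lied[]{\genDE{x}}{\lterm}<0$. Your set is also not closed, hence not compact. The paper uses the closed annulus $\varepsilon^2 \leq \norm{x}^2 \leq \gamma^2$; with this correction your argument goes through exactly as you describe.
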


Rules~\irref{Lyap+StrictLyap} use the Lyapunov function $\lterm$ as an auxiliary, energy-like function near the origin which is positive and has non-positive (resp. negative~\irref{StrictLyap}) derivative $\lied[]{\genDE{x}}{\lterm}$.
This guarantees that $\lterm$ is non-increasing (resp. decreasing) along ODE solutions near the origin, see~\rref{fig:pendulumlyap}.
The right premise of both rules use $\lexists{\gamma{>}0}{\lforall{x}{\big(0 {<} \norm{x}^2 {\leq} \gamma^2 \limply \cdots\big)}}$ to require that the Lyapunov function conditions are true in a $\gamma$-neighborhood of the origin.
The subtle difference in sign condition for $\lied[]{\genDE{x}}{\lterm}$ between rules~\irref{Lyap+StrictLyap} is illustrated for the pendulum.

\begin{example}[Pendulum asymptotic stability]
\label{ex:asympendulum}
For ODE $\expendulum$ from~\rref{eq:pendulum}, a suitable Lyapunov function for proving its stability~\cite{MR1201326} is $\lterm \mnodefeq a \frac{\theta^2}{2} + \frac{(b\theta + \omega)^2 + \omega^2}{4}$, where the Lie derivative of $\lterm$ along $\expendulum$ is $\lied[]{\expendulum}{\lterm} \mnodefeq -\frac{b}{2}(a\theta^2+\omega^2)$.
Stability\footnote{For the trigonometric pendulum ODE $\expendulumnonlin$ from~\rref{ex:pendulum}, the Lyapunov function $\lterm \mnodefeq a (1-\cos(\theta)) + \frac{(b\theta + \omega)^2 + \omega^2}{4}$ with Lie derivative $\lied[]{\expendulumnonlin}{\lterm} \mnodefeq -\frac{b}{2}(a\theta\sin(\theta)+\omega^2)$ proves its stability~\cite{MR1201326} but requires arithmetic reasoning over trigonometric functions.}
is formally proved in \dL for \emph{any} parameter values $a > 0, b \geq 0$ using rule~\irref{Lyap} because both of its resulting arithmetical premises are provable by~\irref{qear}.
The full \dL derivation, also used in \KeYmaeraX (\rref{sec:casestudies}), is shown in the proof of~\rref{lem:lyapunov}
\iflongversion
(\rref{app:proofs}).
\else
\rref{app:}.
\fi

When $b > 0$, i.e., friction is non-negligible, an identical derivation with~\irref{StrictLyap} instead of~\irref{Lyap} proves asymptotic stability because $-\frac{b}{2}(a\theta^2+\omega^2)$ is negative except at the origin.
Indeed, displacements to the pendulum's resting state can only be dissipated in the presence of friction, not when $b=0$.

\end{example}

\subsection{Asymptotic Stability Variations}
\label{subsec:asymstabvar}
Asymptotic stability is a strong guarantee about the local behavior of ODE solutions near equilibrium points of interest.
In certain applications, stronger stability guarantees may be needed for those equilibria~\cite{MR1201326}.
This section examines two standard stability variations, shows how they can be proved in \dL, and formally analyzes their logical relationship with asymptotic stability.

\paragraph{Exponential Stability.}
As the name suggests, the first stability variation, exponential stability, guarantees an exponential rate of convergence towards the equilibrium point from an initial displacement.
This is useful, e.g., for bounding the time spent by a perturbed system far away from its desired operating state.

\begin{definition}[Exponential stability~\cite{10.2307/j.ctvcm4hws,MR1201326,MR0450715}]
The origin $0 \in \reals^n$ of ODE $\D{x}=\genDE{x}$ is \textbf{exponentially stable} if there are positive constants $\alpha, \beta, \delta>0$ such that for all initial states $x=x(0)$ with $\norm{x} < \delta$, the right-maximal ODE solution $x(t) : [0,T) \to \reals^n$ satisfies $\norm{x(t)} \leq \alpha\norm{x(0)}\exp{(-\beta t)}$ for all times $0 \leq t < T$.
\label{def:expstab}
\end{definition}

Exponential stability bounds the norm of solutions to ODE $\D{x}=\genDE{x}$ near the origin by a decaying exponential.
It is specified in \dL as follows.

\begin{lemma}[Exponential stability in \dL]
The origin of ODE $\D{x}=\genDE{x}$ is \textbf{exponentially stable} iff the following \dL formula is valid. Variables $\alpha,\beta, \delta, y$ are fresh, i.e., not in $x, \genDE{x}$.
\begin{align*}
\expstabode{\D{x}=\genDE{x}} \mnodefequiv &\, \lexists{\alpha{>}0}{\lexists{\beta {>} 0}{\lexists{\delta {>} 0} \lforall{x}{\big(\neighborhood[\delta]{x=0} \limply \dbox{\pumod{y}{\alpha^2\norm{x}^2} ;\D{x}=\genDE{x},\D{y}=-2\beta y}{\,\norm{x}^2 \leq y} \big)}}}
\end{align*}
The discrete assignment $\pumod{y}{\alpha^2\norm{x}^2}$ sets the value of variable $y$ to that of $\alpha^2 \norm{x}^2$ and $;$ denotes sequential composition of hybrid programs~\cite{DBLP:journals/jar/Platzer17,Platzer18}.
\label{lem:expstabdl}
\end{lemma}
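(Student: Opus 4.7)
The plan is to prove both directions of the biconditional by exploiting the standard trick of encoding the transcendental decay bound $\alpha\norm{x(0)}\exp(-\beta t)$ as an auxiliary polynomial ODE variable. Concretely, I introduce $y$ with dynamics $\D{y}=-2\beta y$ initialized by the discrete assignment $\pumod{y}{\alpha^2\norm{x}^2}$ so that along the unique solution to the augmented ODE the value of $y$ at time $t$ is exactly $\alpha^2\norm{x(0)}^2\exp(-2\beta t)$. Then the box-modality assertion $\norm{x}^2 \leq y$ along the augmented ODE is semantically equivalent, after taking square roots of both nonnegative sides, to the exponential bound $\norm{x(t)} \leq \alpha\norm{x(0)}\exp(-\beta t)$ in~\rref{def:expstab}. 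Since adding the equation $\D{y}=-2\beta y$ does not restrict the existence interval of the $x$-solution (the $y$-component is globally defined for any $\beta,y_0$, and $y$ does not feed back into $\genDE{x}$), the right-maximal solution of the augmented system has the same time horizon $T$ as the original.

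For the \emph{only if} direction, I assume exponential stability and take the same witnesses $\alpha,\beta,\delta>0$ from~\rref{def:expstab} to instantiate the three existential quantifiers in $\expstabodeP{}{}$. Given any $x(0)$ with $\norm{x(0)}<\delta$, the assignment makes $y(0)=\alpha^2\norm{x(0)}^2$, so the $y$-component of the augmented ODE solves to $y(t)=\alpha^2\norm{x(0)}^2\exp(-2\beta t)$. The inequality $\norm{x(t)} \leq \alpha\norm{x(0)}\exp(-\beta t)$ from~\rref{def:expstab} then squares to $\norm{x(t)}^2 \leq y(t)$ for all $0\leq t < T$, which is precisely what the box modality asserts.

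For the \emph{if} direction, I extract $\alpha,\beta,\delta>0$ from the valid \dL formula and show they witness~\rref{def:expstab}. For any $x(0)$ with $\norm{x(0)}<\delta$, the discrete assignment again sets $y(0)=\alpha^2\norm{x(0)}^2$, and the $\dibox{\cdot}$ formula yields $\norm{x(t)}^2 \leq y(t) = \alpha^2\norm{x(0)}^2\exp(-2\beta t)$ for all $0\leq t<T$. Since both sides are nonnegative, taking square roots gives the desired exponential bound $\norm{x(t)}\leq \alpha\norm{x(0)}\exp(-\beta t)$.

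The main obstacle, such as it is, is justifying the closed-form solution $y(t)=\alpha^2\norm{x(0)}^2\exp(-2\beta t)$ cleanly in the \dL semantics: the augmented ODE is a product of the $x$-dynamics with an independent linear scalar ODE in $y$, whose unique solution is the exponential, and this is exactly the point of the encoding. A minor but necessary detail is to verify that the right-maximal time interval is unaffected by the $y$-extension, using the decoupled triangular structure of the augmented ODE. Once these semantic points are in place, the rest reduces to elementary manipulation (squaring/square-rooting nonnegative quantities) and matching up the three existential and one universal quantifiers of \rref{def:expstab} with those of $\expstabodeP{}{}$.
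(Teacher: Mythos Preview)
Your proposal is correct and matches the paper's own proof essentially one-to-one: both arguments observe that the assignment $\pumod{y}{\alpha^2\norm{x}^2}$ followed by $\D{y}=-2\beta y$ makes $y(t)=\alpha^2\norm{x(0)}^2\exp(-2\beta t)$, so the box postcondition $\norm{x}^2\leq y$ is equivalent (via squaring/square-rooting nonnegative quantities) to the bound in~\rref{def:expstab}, with the quantifier prefix matching directly. Your explicit remark that the decoupled linear $y$-equation does not shrink the right-maximal interval $[0,T)$ is a detail the paper leaves implicit, but otherwise the arguments coincide.
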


Formula $\expstabode{\D{x}=\genDE{x}}$ uses a fresh variable $y$ with ODE $\D{y}=-2\beta y$ and initialized to $\alpha^2 \norm{x}^2$ so that $y$ \emph{differentially axiomatizes}~\cite{DBLP:journals/jacm/PlatzerT20} the (squared) decaying exponential function $\alpha^2\norm{x(0)}^2\exp{(-2\beta t)}$ along ODE solutions.
Such an implicit (polynomial) characterization of exponential decay allows syntactic proof steps to use decidable real arithmetic reasoning.

\begin{lemma}[Lyapunov function for exponential stability] %
The following Lyapunov function proof rule for exponential stability is derivable in \dL, where $k_1, k_2, k_3 \in \rationals$ are positive constants. %

\noindent
\begin{calculus}
\dinferenceRule[ExpLyap|Lyap$_{\text{E}}$]{Exponential Lyapunov}
{\linferenceRule
  {\lsequent{} {\lexists{\gamma{>}0}{\lforall{x}{\big(\norm{x}^2 {\leq} \gamma^2 \limply  k_1^2 \norm{x}^2 \leq \lterm \leq k_2^2 \norm{x}^2 \land \lied[]{\genDE{x}}{\lterm} \leq -2 k_3 \lterm)}} }}
  {\lsequent{} {\expstabode{\D{x}=\genDE{x}}}}
}{}
\end{calculus}
\label{lem:expstablyap}
\end{lemma}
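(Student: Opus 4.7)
The plan is to instantiate the three existentially quantified witnesses $\alpha, \beta, \delta$ with values dictated by the classical Gronwall-style argument, then establish the resulting box modality by a chain of differential cuts along the augmented ODE $\D{x}=\genDE{x},\,\D{y}=-2\beta y$. Specifically, I would pick $\alpha := k_2/k_1$, $\beta := k_3$, and $\delta$ strictly smaller than $k_1 \gamma / k_2$ (say $\delta := k_1 \gamma / (2k_2)$), so that the initial ghost value $y(0) = \alpha^2 \norm{x}^2 < \alpha^2 \delta^2 < \gamma^2$ sits strictly inside the Lyapunov region. The mathematical fact to reproduce syntactically is then $k_1^2 \norm{x(t)}^2 \leq \lterm(x(t)) \leq \lterm(x(0)) e^{-2k_3 t} \leq k_2^2 \norm{x(0)}^2 e^{-2\beta t} = k_1^2 y(t)$, whose two ends yield $\norm{x(t)}^2 \leq y(t)$ as required.

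After instantiating the existentials and discharging $\pumod{y}{\alpha^2\norm{x}^2}$ by the assignment axiom, the assumption $y = \alpha^2\norm{x}^2$ is introduced, and I would chain differential cuts with successively strengthening invariants: (a) $y \geq 0$, immediate by differential induction from $\lied[]{\genDE{x}}{y} = -2\beta y$ and the initial value; (b) $y \leq \alpha^2\delta^2$, again by differential induction since $y$ is non-increasing once $y \geq 0$; (c) the trapping region $\norm{x}^2 \leq \gamma^2$; and (d) the exponential decay bound $\lterm \leq k_1^2 y$. For the crucial invariant (d), I would rewrite it as $\lterm - k_1^2 y \leq 0$ and observe that inside the Lyapunov region the premise gives $\lied[]{\genDE{x}}{\lterm} \leq -2k_3\lterm$ and $\lied[]{\genDE{x}}{(k_1^2 y)} = -2k_3 k_1^2 y$, so $\lied[]{\genDE{x}}{(\lterm - k_1^2 y)} \leq -2k_3(\lterm - k_1^2 y)$. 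This is a Darboux-inequality pattern: at the boundary $\lterm - k_1^2 y = 0$ the Lie derivative is non-positive, so the sublevel set is invariant, and such reasoning is derivable in the complete ODE invariant calculus. The initial condition $\lterm(x(0)) \leq k_2^2 \norm{x(0)}^2 = k_1^2 y(0)$ is supplied directly by the Lyapunov upper bound. Once all cuts are in place, the main goal $\norm{x}^2 \leq y$ follows by differential weakening combined with the lower bound $k_1^2 \norm{x}^2 \leq \lterm$ available inside $\norm{x}^2 \leq \gamma^2$.

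The main obstacle is the apparent circular dependence between (c) and (d): the Lie derivative bound on $\lterm$ used for (d) is only assumed inside $\norm{x}^2 \leq \gamma^2$, while the invariance (c) of that region seems to need the decay bound (d). This is resolved by the strict separation built into the choice of $\delta$. From (b) one has $y \leq \alpha^2 \delta^2 < \gamma^2$ strictly, and whenever (d) holds the chain $k_1^2\norm{x}^2 \leq \lterm \leq k_1^2 y < k_1^2 \gamma^2$ forces $\norm{x}^2 < \gamma^2$ strictly. Formally I would cut in the strict form $\norm{x}^2 < \gamma^2$ and appeal to \dL's open invariance reasoning: any hypothetical first-exit time would have $\norm{x}^2 < \gamma^2$ in a punctured neighborhood before it, keeping the Lyapunov inequalities live and forcing $\norm{x}^2$ to remain strictly below $\gamma^2$ through that time, a contradiction. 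With this topological bootstrap in place, cuts (c) and (d) are discharged jointly and the remaining subgoals are arithmetic and dispatched by \irref{qear}.
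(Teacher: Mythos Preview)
Your proposal is correct and essentially mirrors the paper's own proof: the same witnesses $\alpha=k_2/k_1$ and $\beta=k_3$, the same Darboux-inequality argument (\irref{dbxineq} with cofactor $-2k_3$) for $\lterm \leq k_1^2 y$, and the same open-set \irref{Enc} reasoning to establish the trapping region $\norm{x}^2<\gamma^2$. The paper streamlines the circularity resolution by cutting the \emph{constant} bound $\lterm < k_1^2\gamma^2$ inside the \irref{Enc} block---provable by plain \irref{dIcmp} since $\lied[]{\genDE{x}}{\lterm}\leq 0$ once $\norm{x}^2\leq\gamma^2$ is in the domain---rather than routing through $y$; this decouples the trapping cut from the decay cut, obviates your auxiliary cuts (a) and (b), and avoids discharging the Darboux step twice.
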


Rule~\irref{ExpLyap} enables proofs of exponential stability in \dL.
\iflongversion
In fact, the proof of~\rref{lem:expstablyap} (\rref{app:proofs}) yields \emph{quantitative} bounds, where $\expstabode{\D{x}=\genDE{x}}$ is explicitly witnessed with scaling constant $\alpha \mnodefeq \frac{k_2}{k_1}$ and decay rate $\beta \mnodefeq k_3$.
\else
In fact, the proof of~\rref{lem:expstablyap} (see supplement~\rref{app:}) yields concrete, \emph{quantitative} bounds, where $\expstabode{\D{x}=\genDE{x}}$ is explicitly witnessed with scaling constant $\alpha \mnodefeq \frac{k_2}{k_1}$ and decay rate $\beta \mnodefeq k_3$.
\fi%
These can be used to calculate time bounds when the system state will return sufficiently close to the origin.
Similarly, the disturbance $\delta$ in $\expstabode{\D{x}=\genDE{x}}$ is quantitatively witnessed by $\frac{k_1}{k_2}\gamma$ for any $\gamma$ witnessing validity of the premise of rule~\irref{ExpLyap}.
This yields a provable estimate of the region around the origin where exponential stability holds; this latter estimate is explored next.

\paragraph{Region of Attraction.}
Formulas $\attrode{\D{x}=\genDE{x}}$ and $\expstabode{\D{x}=\genDE{x}}$ both feature a subformula of the form $\lexists{\delta > 0}{\lforall{x}{(\neighborhood[\delta]{x=0} \limply \cdots)}}$ which expresses that attractivity (or exponential stability) is locally true in \emph{some} $\delta$ neighborhood of the origin.
In many applications, it is useful to find and rigorously prove that a given set is attractive or exponentially stable with respect to the origin~\cite[Chapter 8.2]{MR1201326}.
The second stability variation yields \emph{provable} subsets of the region of attraction, including the special case where it is the entire state space.
This is formalized using the following variants of $\attrode{\D{x}=\genDE{x}}$ and $\expstabode{\D{x}=\genDE{x}}$ within a region given by a formula $\rfvar$.
\begin{align*}
\attrodeP{\D{x}=\genDE{x}}{\rfvar} &\mnodefequiv \lforall{x}{\big(\rfvar \limply \asymode{\D{x}=\genDE{x}}{x=0}\big)} \\
\expstabodeP{\D{x}=\genDE{x}}{\rfvar} &\mnodefequiv \lexists{\alpha{>}0}{\lexists{\beta {>} 0}{\lforall{x}{\big(\rfvar \limply \dbox{\pumod{y}{\alpha^2\norm{x}^2} ; \D{x}=\genDE{x},\D{y}=-2\beta y}{\,\norm{x}^2 \leq y}\big)}}}
\end{align*}

The formula $\attrodeP{\D{x}=\genDE{x}}{\rfvar}$ is valid iff the set characterized by $\rfvar$ is a subset of the origin's region of attraction~\cite{MR1201326}.
For example, $\attrode{\D{x}=\genDE{x}}$ is $\lexists{\delta>0}{\attrodeP{\D{x}=\genDE{x}}{\neighborhood[\delta]{x=0} }}$.
This generalization is useful for formalizing stronger notions of stability in \dL, such as the following \emph{global} stability notions~\cite{10.2307/j.ctvcm4hws,MR1201326}.
\iflongversion
For brevity, \dL specifications of the stability properties (in \textbf{bold}) are given below with mathematical definitions deferred to~\rref{app:proofs}.
\else
For brevity, \dL specifications of the stability properties (in \textbf{bold}) are given below with mathematical definitions deferred to the supplement~\rref{app:}.
\fi

\begin{lemma}[Global stability in \dL]
The origin of ODE $\D{x}=\genDE{x}$ is \textbf{globally asymptotically stable} iff the \dL formula $\stabode{\D{x}=\genDE{x}} \land \attrodeP{\D{x}=\genDE{x}}{\ltrue}$ is valid.
The origin is \textbf{globally exponentially stable} iff the \dL formula $\expstabodeP{\D{x}=\genDE{x}}{\ltrue}$ is valid.
\label{lem:globstabdl}
\end{lemma}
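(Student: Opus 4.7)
The plan is to reduce each claim to the already-established correspondences in~\rref{lem:asymstabdl} and~\rref{lem:expstabdl}, with the only new ingredient being the replacement of the $\delta$-neighborhood $\neighborhood[\delta]{x=0}$ by the formula $\ltrue$, which semantically quantifies the attractivity/exponential bound over the entire state space $\reals^n$.

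First I would write out the standard mathematical definitions of global asymptotic stability (stability plus $\lim_{t\to T}x(t)=0$ for \emph{every} initial state $x(0)\in\reals^n$) and global exponential stability (there exist $\alpha,\beta>0$ such that $\norm{x(t)}\leq\alpha\norm{x(0)}\exp(-\beta t)$ for all times and \emph{all} initial states). For the first part of the lemma, the stability conjunct $\stabode{\D{x}=\genDE{x}}$ is already shown equivalent to stability by~\rref{lem:asymstabdl}. It then suffices to prove that $\attrodeP{\D{x}=\genDE{x}}{\ltrue}$ characterizes global attractivity, i.e., that the region of attraction of the origin is all of $\reals^n$. Unfolding the definition, $\attrodeP{\D{x}=\genDE{x}}{\ltrue} \mnodefequiv \lforall{x}{\big(\ltrue \limply \asymode{\D{x}=\genDE{x}}{x=0}\big)}$, which is equivalent to $\lforall{x}{\,\asymode{\D{x}=\genDE{x}}{x=0}}$. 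The inner subformula $\asymode{\D{x}=\genDE{x}}{x=0}$ was shown in~\rref{lem:asymstabdl} to characterize exactly the states whose right-maximal solution satisfies $\lim_{t\to T}x(t)=0$. Quantifying over all $x$ therefore gives exactly the global attractivity condition.

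For the second part, I would similarly unfold $\expstabodeP{\D{x}=\genDE{x}}{\ltrue}$ to $\lexists{\alpha{>}0}{\lexists{\beta{>}0}{\lforall{x}{\dbox{\pumod{y}{\alpha^2\norm{x}^2};\D{x}=\genDE{x},\D{y}=-2\beta y}{\,\norm{x}^2\leq y}}}}$. The box subformula encodes, by the same differential-axiomatization argument used in the proof of~\rref{lem:expstabdl}, that $\norm{x(t)}^2\leq\alpha^2\norm{x(0)}^2\exp(-2\beta t)$ holds along the right-maximal solution from the state $x$. Taking positive square roots gives exactly the exponential bound $\norm{x(t)}\leq\alpha\norm{x(0)}\exp(-\beta t)$. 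The outer quantifier prefix $\lexists{\alpha}\lexists{\beta}\lforall{x}$ matches the ordering in the definition of global exponential stability, where the constants $\alpha,\beta$ are uniform over all initial states. Crucially, no $\delta$ is needed since $\ltrue$ replaces the local neighborhood.

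I do not expect any substantive obstacle: the work has already been done in~\rref{lem:asymstabdl} and~\rref{lem:expstabdl}, and what remains is the routine observation that substituting $\ltrue$ for $\neighborhood[\delta]{x=0}$ drops the $\lexists{\delta{>}0}$ quantifier and extends the quantification over $x$ from the open $\delta$-ball to all of $\reals^n$. The mildest subtlety is being careful that in the global exponential case the existence of \emph{uniform} $\alpha,\beta$ for all initial states is preserved by the prenex $\lexists\lexists\lforall$ order, which matches the classical definition; reversing the order would yield a strictly weaker (pointwise) property.
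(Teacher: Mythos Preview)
Your proposal is correct and follows essentially the same approach as the paper: the paper's proof likewise states that the argument is identical to Lemmas~\ref{lem:asymstabdl} and~\ref{lem:expstabdl}, except that the existential quantification $\lexists{\delta>0}{\lforall{x}{(\neighborhood[\delta]{x=0}\limply\cdots)}}$ is replaced by $\lforall{x}{(\ltrue\limply\cdots)}$ to quantify over all initial states. Your additional remarks on the quantifier prefix order and the semantic unfolding are accurate elaborations of this same reduction.
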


Global stability ensures that \emph{all} perturbations to the system state are eventually dissipated.
Their proof rules are similar to~\irref{StrictLyap} and~\irref{ExpLyap} respectively.

\begin{lemma}[Lyapunov function for global stability] %
The following Lyapunov function proof rules for global asymptotic and exponential stability are derivable in \dL.
In rule~\irref{ExpLyapGlob}, $k_1, k_2, k_3 \in \rationals$ are positive constants.

\noindent
\begin{calculus}
\dinferenceRule[StrictLyapGlob|Lyap$_{>}^{\text{G}}$]{Strict Lyapunov Global}
{\linferenceRule
  {\lsequent{} {\genDE{0}{=}0 {\land} \lterm(0) {=} 0} \quad
   \lsequent{x {\neq} 0}{\lterm {>} 0 \land \lied[]{\genDE{x}}{\lterm} {<} 0} \quad
   \lsequent{}{\lforall{b}{\lexists{\gamma{>}0}{\lforall{x}{\big( \lterm {\leq} b {\limply} \neighborhood[\gamma]{x{=}0} \big)}}}}
   }
  {\lsequent{} {\stabode{\D{x}=\genDE{x}} \land \attrodeP{\D{x}=\genDE{x}}{\ltrue} }}
}{}

\dinferenceRule[ExpLyapGlob|Lyap$_{\text{E}}^{\text{G}}$]{Exponential Lyapunov Global}
{\linferenceRule
  {\lsequent{} {k_1^2 \norm{x}^2 \leq \lterm \leq k_2^2 \norm{x}^2 \land \lied[]{\genDE{x}}{\lterm} \leq -2 k_3 \lterm} }
  {\lsequent{} {\expstabodeP{\D{x}=\genDE{x}}{\ltrue} }}
}{}
\end{calculus}

\label{lem:globstablyap}
\end{lemma}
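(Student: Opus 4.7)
The plan is to derive both rules as global analogues of the local rules~\irref{StrictLyap} from~\rref{lem:lyapunov} and~\irref{ExpLyap} from~\rref{lem:expstablyap}, exploiting that their premises now hold \emph{globally} rather than only in some $\gamma$-neighborhood of the origin. Much of the existing syntactic derivation can therefore be recycled, but since $\rfvar \mnodefequiv \ltrue$ in both $\attrodeP{\cdot}{\ltrue}$ and $\expstabodeP{\cdot}{\ltrue}$, no existential $\delta$-witness is needed on the right-hand side.

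For~\irref{ExpLyapGlob}, I would replay the derivation of~\irref{ExpLyap} essentially verbatim. That derivation introduces the fresh auxiliary $y$ with $\D{y}=-2\beta y$ initialized to $\alpha^2\norm{x}^2$, chooses the witnesses $\alpha \mnodefeq k_2/k_1$ and $\beta \mnodefeq k_3$, and establishes $\norm{x}^2 \leq y$ as an invariant of the combined ODE by combining the bounds $k_1^2 \norm{x}^2 \leq \lterm \leq k_2^2 \norm{x}^2$ with $\lied[]{\genDE{x}}{\lterm} \leq -2k_3 \lterm$. Because the present premise supplies those inequalities \emph{globally}, no $\gamma$-guard or weakening step is needed, and the conclusion is valid for every initial $x$, exactly as demanded by $\expstabodeP{\cdot}{\ltrue}$.

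For~\irref{StrictLyapGlob}, I would split the conjunctive conclusion. Stability $\stabode{\D{x}=\genDE{x}}$ follows by applying~\irref{StrictLyap}, since the global premise $x \neq 0 \vdash \lterm > 0 \land \lied[]{\genDE{x}}{\lterm} < 0$ trivially specializes to the neighborhood form $\lexists{\gamma{>}0}{\lforall{x}{(0 < \norm{x}^2 \leq \gamma^2 \limply \cdots)}}$ by picking any $\gamma > 0$. What remains is $\attrodeP{\D{x}=\genDE{x}}{\ltrue} \mnodefequiv \lforall{x}{\asymode{\D{x}=\genDE{x}}{x=0}}$. With stability in hand, Corollary~\rref{cor:asymstabsimp} (applied pointwise in $x$) simplifies $\asymode{\D{x}=\genDE{x}}{x=0}$ to the liveness formula $\lforall{\varepsilon{>}0}{\ddiamond{\D{x}=\genDE{x}}{\,\neighborhood[\varepsilon]{x=0}}}$ from an arbitrary initial state.

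The main obstacle is this liveness step, and it is where the radially unbounded premise is consumed. For fixed initial $x_0$ and $\varepsilon > 0$, set $b_0 \mnodefeq \lterm(x_0)$ and extract $\gamma_0$ from the premise with $b \mnodefeq b_0$. Since $\lied[]{\genDE{x}}{\lterm} \leq 0$ globally (weakening the strict premise), $\lterm \leq b_0$ is a differential invariant that syntactically confines the solution to $\neighborhood[\gamma_0]{x=0}$. Positive definiteness and continuity of $\lterm$ additionally yield, by quantifier elimination, a threshold $b_\varepsilon > 0$ with $\lterm \leq b_\varepsilon \limply \norm{x} < \varepsilon$, reducing the goal to driving $\lterm$ below $b_\varepsilon$. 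On the bounded semialgebraic region $b_\varepsilon \leq \lterm \leq b_0$, the polynomial $\lied[]{\genDE{x}}{\lterm}$ is strictly negative (since $\lterm > 0$ there implies $x \neq 0$), so a further application of~\irref{qear} supplies a uniform rate $c > 0$ with $\lied[]{\genDE{x}}{\lterm} \leq -c$. The \dL liveness rules of~\cite{DBLP:journals/fac/TanP} then apply with $\lterm$ as a linear-rate progress term, driving $\lterm$ below $b_\varepsilon$ in time at most $(b_0-b_\varepsilon)/c$ and thereby entering $\neighborhood[\varepsilon]{x=0}$. The chief technical subtlety is bridging the classical compactness-based existence of $c$ (and of $b_\varepsilon$) with a purely syntactic first-order arithmetic statement usable inside the liveness rule; this is possible here because all relevant sets are bounded and semialgebraic, so the existential $\lexists{c{>}0}{\lforall{x}{\cdots}}$ is first-order over the reals and dischargeable by~\irref{qear}.
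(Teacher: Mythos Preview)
Your proposal is correct and follows essentially the same route as the paper. For~\irref{ExpLyapGlob} the paper does exactly what you describe: re-run the derivation of~\irref{ExpLyap} with the same witnesses $\alpha = k_2/k_1$, $\beta = k_3$ and the same~\irref{dbxineq} invariant $k_1^2 y - \lterm \geq 0$, dropping the $\gamma$-guard steps since the premise is global. For~\irref{StrictLyapGlob} the paper also first cuts stability (via~\irref{Lyap}), applies~\irref{stabattr}, stores the initial value $b$ of $\lterm$, proves $\lterm \leq b$ invariant by~\irref{dIcmp}, and then invokes the radially-unbounded premise to trap the solution in $\norm{x}^2 < \gamma^2$.

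The only noteworthy difference is in the final liveness step. You introduce an intermediate sublevel threshold $b_\varepsilon$ and extract a uniform rate $c>0$ on the sublevel annulus $b_\varepsilon \leq \lterm \leq b_0$ to drive $\lterm$ down linearly. The paper instead applies rule~\irref{SPc} directly with the compact \emph{geometric} annulus $\rsfvar \mnodefequiv \varepsilon^2 \leq \norm{x}^2 \leq \gamma^2$ as staging set and $\ptermA \mnodefequiv \lterm$; the premise $\lsequent{\rsfvar}{\lied[]{\genDE{x}}{\lterm} < 0}$ then follows immediately from $x \neq 0$ on $\rsfvar$, with no need to first materialize $b_\varepsilon$ or $c$. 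This is a bit more economical, since~\irref{SPc} internalizes exactly the compactness argument you spell out by hand. One small caution in your version: the implication $\lterm \leq b_\varepsilon \limply \norm{x} < \varepsilon$ is not valid globally from positive definiteness alone; it holds only under the already-established confinement $\norm{x} < \gamma_0$, so be sure to carry that bound explicitly when discharging the step by~\irref{qear}.
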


\begin{example}[Pendulum global exponential stability]
\label{ex:esympendulum}
For simplicity, instantiate \rref{ex:asympendulum} with parameters $a=1, b=1$.
The Lyapunov function then simplifies to $\lterm \mnodefeq \frac{\theta^2}{2} + \frac{(\theta + \omega)^2 + \omega^2}{4}$ with Lie derivative $\lied[]{\expendulum}{\lterm} \mnodefeq -\frac{(\theta^2+\omega^2)}{2}$, which satisfies the real arithmetic inequalities $\frac{\theta^2 + \omega^2}{4} \leq \lterm \leq \theta^2 + \omega^2$ and $\lied[]{\expendulum}{\lterm} \leq -\frac{1}{2} \lterm$.
Thus, rule~\irref{ExpLyapGlob} proves global exponential stability of $\expendulum$ with $k_1 =  \frac{1}{2}$, $k_2 = 1$, and $k_3 = \frac{1}{4}$.
An important caveat is that~\rref{ex:asympendulum} used a local small angle approximation, so this global phenomenon does \emph{not} hold for a real world pendulum (nor for $\expendulumnonlin$).
\end{example}

\paragraph{Logical Relationships.}
With the proliferation of stability variations just introduced, it is useful to take stock of their logical relationships.
An important example of such a relationship is shown in the following corollary.

\begin{corollary}[Exponential stability implies asymptotic stability]
The following axioms are derivable in \dL.

\noindent
\begin{calculuscollection}
\begin{calculus}
\dinferenceRule[EStabStab|EStabStab]{}
{
\linferenceRule[impl]
  {\expstabode{\D{x}=\genDE{x}}}
  {\stabode{\D{x}=\genDE{x}}}
}{}

\dinferenceRule[EStabAttr|EStabAttr]{}
{
\linferenceRule[impl]
  {\expstabodeP{\D{x}=\genDE{x}}{\rfvar} }
  {\attrodeP{\D{x}=\genDE{x}}{\rfvar} }
}{}
\end{calculus}
\end{calculuscollection}
\label{cor:expimpasym}
\end{corollary}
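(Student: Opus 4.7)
The plan is to reduce both axioms to the same core ingredient: a differential ghost $y$ initialized to $\alpha^2 \norm{x}^2$ and evolving under $\D{y} = -2\beta y$, whose value stays bounded above by $\alpha^2 \norm{x_0}^2$ for all times (since $\beta > 0$ makes $y$ non-increasing from a nonnegative initial value), and whose combination with the hypothesized safety $\norm{x}^2 \leq y$ yields the pointwise bound $\norm{x}^2 \leq \alpha^2 \norm{x_0}^2$ along the flow. The two axioms differ only in how this bound is packaged: \irref{EStabStab} uses it to choose a shrunk $\delta'$ for any given $\varepsilon$, while \irref{EStabAttr} combines it with ODE liveness to drive $\norm{x}^2$ below $\varepsilon^2$ in finite time.

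For \irref{EStabStab}, I would existentially instantiate the hypothesis to obtain fresh constants $\alpha, \beta, \delta > 0$, then $\lforall{\varepsilon}$-introduce and witness $\stabode{\D{x}=\genDE{x}}$'s existential by any $\delta' > 0$ satisfying $\delta' \leq \delta$ and $\alpha \delta' \leq \varepsilon$ (such $\delta'$ exists by \irref{qear}). For $x$ with $\norm{x}^2 < \delta'^2$, I would apply the hypothesis to obtain the box over the $y$-extended ODE, introduce the ghost $y$ with initial value $\alpha^2 \norm{x}^2$ and ODE $\D{y}=-2\beta y$, and differentially cut the decay invariant $y \leq \alpha^2 \norm{x_0}^2$ (introducing a second Darboux ghost $w$ with $\D{w} = \beta w$ and $w(0) = 1$ so that $y w^2 = \alpha^2 \norm{x_0}^2$ is a conserved polynomial identity and $w^2 \geq 1$ is an invariant, jointly yielding $y \leq \alpha^2 \norm{x_0}^2$). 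The resulting chain $\norm{x}^2 \leq y \leq \alpha^2 \norm{x_0}^2 < \alpha^2 \delta'^2 \leq \varepsilon^2$ closes by \irref{qear}.

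For \irref{EStabAttr}, after existentially instantiating $\alpha, \beta > 0$, fixing $x$ satisfying $\rfvar$, and universally introducing $\varepsilon > 0$, I aim at $\ddiamond{\D{x}=\genDE{x}}{\dbox{\D{x}=\genDE{x}}{\,\neighborhood[\varepsilon]{x=0}}}$. Under the same ghost extension, I would discharge the diamond via the ODE liveness rules of~\cite{DBLP:journals/fac/TanP}: on the region $y \geq \varepsilon^2$, the derivative $-2\beta y \leq -2\beta \varepsilon^2 < 0$ is uniformly bounded away from zero, so $y < \varepsilon^2$ is reached in finite time (trivially if already true). At that state, the inner $\dbox{}{}$ follows by differential cut: the same Darboux/decay argument shows $y$ remains below $\varepsilon^2$ thereafter, and combining with the hypothesis's $\norm{x}^2 \leq y$ gives $\norm{x}^2 < \varepsilon^2$ persistently. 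Differential ghost elimination removes $y$ from the outer modalities, recovering the goal on the original ODE.

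The main obstacle is the liveness subproof in \irref{EStabAttr}: exponential decay only asymptotically reaches zero, so finite-time liveness must target the strictly positive threshold $\varepsilon^2$ at which the derivative of $y$ is uniformly bounded below---precisely the regime in which Tan-Platzer's liveness proof rules apply. A secondary bookkeeping task is unfolding the assignment $\pumod{y}{\alpha^2 \norm{x}^2}$ bundled inside the hypothesis's modality via the \dL axioms for sequential composition and assignment, exposing the post-assignment ODE form compatible with differential ghost transfer back to the original, unextended ODE.
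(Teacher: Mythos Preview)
Your approach is correct and tracks the paper's closely; the differences are in technical execution rather than strategy. For \irref{EStabStab}, the paper avoids your second ghost $w$ entirely: it proves $y < \varepsilon^2$ directly as an invariant via \irref{dbxineq} with cofactor $-2\beta$ (since the Lie derivative of $\varepsilon^2 - y$ is $2\beta y \geq -2\beta(\varepsilon^2 - y)$), rather than first establishing $y \leq \alpha^2\norm{x_0}^2$ through the conserved quantity $yw^2$. Both close, but the paper's is one ghost lighter. For \irref{EStabAttr}, the paper makes two steps explicit that your sketch leaves implicit. First, it strengthens the diamond's postcondition to carry the conjunct $\dbox{\D{x}=\genDE{x},\D{y}=-2\beta y}{\norm{x}^2 \leq y}$ and uses \irref{DSeq} to show this conjunct is preserved along the flow---this is what licenses ``combining with the hypothesis's $\norm{x}^2 \leq y$'' at the state reached by the diamond. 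Second, for the liveness subgoal the paper uses \irref{BDG} with the bound $\norm{x}^2 \leq y$ to drop the $x$-equations, reducing to the one-dimensional $\ddiamond{\D{y}=-2\beta y}{y < \varepsilon^2}$, where \irref{SPc} applies with the compact staging set $\varepsilon^2 \leq y \leq y_0$. Your uniform-derivative argument is the right intuition, but \irref{SPc} needs a compact staging set; on the extended $(x,y)$-ODE you would need to intersect with $\norm{x}^2 \leq y \land y \leq y_0$ to get compactness, whereas the \irref{BDG} route sidesteps this by eliminating $x$ first.
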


Derived axioms~\irref{EStabStab+EStabAttr} show that exponential stability implies asymptotic stability.
In proofs, axiom~\irref{EStabAttr} allows the region of attraction to be estimated by the region where solutions are exponentially bounded.

\section{General Stability}
\label{sec:genstability}
This section provides stability definitions and proof rules that generalize stability for an equilibrium point from~\rref{sec:asymstability} to the stability of sets.
These definitions are useful when the desired stable system state(s) is not modeled by a single equilibrium point, but may instead, e.g., lie on a periodic trajectory~\cite{MR1201326}, a hyperplane, or a continuum of equilibrium points within the state space~\cite{10.2307/j.ctvcm4hws}.
The generalized definition is used to formalize two stability notions from the literature~\cite{10.2307/j.ctvcm4hws,MR1201326}, and to justify their Lyapunov function proof rules.

\subsection{General Stability and General Attractivity}
The following \emph{general stability} formula defines stability in \dL with respect to an ODE $\D{x}=\genDE{x}$ and formulas $\rfvar, \rrfvar$.
The quantified variables $\varepsilon, \delta$ are assumed to be fresh by bound renaming, i.e., do not appear in $x, f(x), \rfvar$ or $\rrfvar$.
\[
  \stabodePR{\D{x}=\genDE{x}}{\rfvar}{\rrfvar} \mnodefequiv \lforall{\varepsilon {>} 0} {\lexists{\delta {>} 0}{ \lforall{x}{\big( \neighborhood[\delta]{\rfvar} \limply \dbox{\D{x}=\genDE{x}}{\,\neighborhood[\varepsilon]{\rrfvar}}\big)}}}
\]

This formula generalizes stability of the origin $\stabode{\D{x}=\genDE{x}}$ by adding two logical tuning knobs that can be intuitively understood as follows.
The \emph{precondition} $\rfvar$ characterizes the initial states from which the system state is expected to be disturbed by some disturbance $\delta$.
The \emph{postcondition} $\rrfvar$ characterizes the set of desired operating states that the system must remain close (within the $\varepsilon$ neighborhood of $\rrfvar$) after being disturbed from its initial states.

The \emph{general attractivity} formula similarly generalizes $\attrodeP{\D{x}=\genDE{x}}{\rfvar}$ with a postcondition $\rrfvar$ towards which the ODE solutions from initial states satisfying precondition $\rfvar$ are asymptotically attracted.
\[
  \attrodePR{\D{x}=\genDE{x}}{\rfvar}{\rrfvar} \mnodefequiv \lforall{x}{\big(\rfvar \limply \asymode{\D{x}=\genDE{x}}{\rrfvar}\big)}
\]

\begin{lemma}[General Lyapunov functions]
The following Lyapunov function proof rule for general stability with two stacked premises is derivable in \dL.

\noindent
\begin{calculus}
\dinferenceRule[LyapGen|GLyap]{Lyapunov General}
{\linferenceRule
  { \begin{array}{l}
    \lsequent{}{\rfvar \limply \rrfvar}\\
    \lsequent{}{\lforall{\varepsilon{>}0}{
    \lexists{0{<}\gamma{\leq}\varepsilon}{
    \lexists{k}{\left(
     \begin{array}{l}
     \lforall{x}{(\bdr{(\neighborhood[\gamma]{\rrfvar})} \limply \lterm \geq k)} \land \\
     \lexists{0{<}\delta{\leq}\gamma}{\lforall{x}{(\neighborhood[\delta]{\rfvar}  \limply \rrfvar \lor \lterm{<}k)} } \land \\
     \lforall{x}{\big(\rrfvar \lor \lterm {<} k \limply \dbox{\pevolvein{\D{x}=\genDE{x}}{\cneighborhood[\gamma]{\rrfvar}}}{(\rrfvar \lor \lterm {<} k)}\big)} \\
     \end{array}\right)}
    }}}
    \end{array}
  }
  {\lsequent{} {\stabodePR{\D{x}=\genDE{x}}{\rfvar}{\rrfvar}}}
}{}
\end{calculus}
\label{lem:genlyapunov}
\end{lemma}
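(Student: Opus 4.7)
The plan is to unfold $\stabodePR{\D{x}=\genDE{x}}{\rfvar}{\rrfvar}$ by Skolemizing its outermost $\lforall{\varepsilon{>}0}{}$, instantiating the second premise at that $\varepsilon$ to extract witnesses $0 < \gamma \leq \varepsilon$ and $k$ satisfying its three conjuncts, and then extracting $0 < \delta \leq \gamma$ from the middle conjunct. This $\delta$ serves as the witness for the existential in the conclusion. Having introduced a state satisfying $\neighborhood[\delta]{\rfvar}$, the middle conjunct of the premise immediately yields the initial condition $\rrfvar \lor \lterm < k$ at that state, and premise 1 together with $\delta \leq \gamma$ delivers $\neighborhood[\delta]{\rfvar} \limply \neighborhood[\gamma]{\rrfvar}$ by \irref{qear} (if $\rfvar$ holds at some $y$ within distance $\delta$ of $x$, then $\rrfvar$ does too, and $\delta \leq \gamma$), so the initial state lies strictly inside the open $\gamma$-neighborhood.

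Because $\gamma \leq \varepsilon$ gives $\neighborhood[\gamma]{\rrfvar} \limply \neighborhood[\varepsilon]{\rrfvar}$, it suffices to prove $\dbox{\D{x}=\genDE{x}}{\neighborhood[\gamma]{\rrfvar}}$. The strategy is a reductio backed by the third premise conjunct: if the solution ever exited $\neighborhood[\gamma]{\rrfvar}$, then by continuity of the flow it would first reach a point of $\bdr{(\neighborhood[\gamma]{\rrfvar})}$ while having remained in $\cneighborhood[\gamma]{\rrfvar}$ throughout. Conjunct (c) would then force $\rrfvar \lor \lterm < k$ at that boundary point; but conjunct (a) gives $\lterm \geq k$ everywhere on that boundary, and every boundary point has distance exactly $\gamma > 0$ from the set characterized by $\rrfvar$ and so cannot satisfy $\rrfvar$. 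This contradiction rules out escape and closes the goal. Syntactically, I would realize it by a differential cut with the cut formula $\rrfvar \lor \lterm < k$ combined with \dL's refinement reasoning~\cite{DBLP:journals/fac/TanP} to bridge the unrestricted ODE and the domain-restricted one from conjunct (c).

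The main obstacle will be this last step: conjunct (c) only constrains dynamics while the solution stays inside the closed neighborhood $\cneighborhood[\gamma]{\rrfvar}$, whereas the conclusion speaks about the unrestricted ODE $\D{x}=\genDE{x}$. Formalizing ``a continuous trajectory cannot leave an open set without first touching its boundary'' in \dL, and then using conjunct (a) together with $\gamma > 0$ to exclude boundary points as satisfying the cut invariant, is the genuine dynamical content of the rule. Everything else — Skolemization, witness extraction, the arithmetic comparisons among $\delta, \gamma, \varepsilon$, and the propagation of initial conditions via premise 1 — reduces to routine first-order and real-arithmetic reasoning together with standard \dL modality axioms.
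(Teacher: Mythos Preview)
Your proposal is correct and follows essentially the same route as the paper: Skolemize $\varepsilon$, instantiate the second premise to obtain $\gamma,k,\delta$, witness the succedent's $\delta$, use premise~1 with $\delta\leq\gamma$ to place the initial state in $\neighborhood[\gamma]{\rrfvar}$, strengthen the postcondition via $\gamma\leq\varepsilon$, and then close by the boundary argument combining conjuncts~(a) and~(c). The one point worth sharpening is that the ``cannot leave an open set without touching its boundary'' step is realized in the paper not by ad~hoc refinement reasoning but by the derived rule~\irref{Enc}, which adds the closure $\cneighborhood[\gamma]{\rrfvar}$ to the domain constraint; a subsequent \irref{dC} with conjunct~(c) then adds $\rrfvar\lor\lterm<k$ to the domain, and \irref{dW}+\irref{qear} closes using~(a) exactly as you describe.
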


Rule~\irref{LyapGen} proves general stability for precondition $\rfvar$ and postcondition $\rrfvar$.
It generalizes the Lyapunov function reasoning underlying rule~\irref{Lyap} to support arbitrary pre- and postconditions.
The conjunct \(\lforall{x}{(\bdr{(\neighborhood[\gamma]{\rrfvar})} \limply \lterm \geq k)}\) requires $v{\geq}k$ on the boundary of $\neighborhood[\gamma]{\rrfvar}$ while the middle conjunct requires $v {<} k$ for some small neighborhood of $\rfvar$ excluding $\rrfvar$.
The conjunct $\lforall{x}{\big(\rrfvar \lor \lterm {<} k \limply \cdots \big)}$ asserts that $\rrfvar \lor v < k$ is an invariant of the ODE \emph{within} closed domain $\cneighborhood[\gamma]{\rrfvar}$.
When $\rrfvar$ is a formula of real arithmetic, this invariance question is provably equivalent in \dL to a formula of real arithmetic~\cite{DBLP:journals/jacm/PlatzerT20}, so the premise of rule~\irref{LyapGen} is, \emph{in theory}, decidable by \irref{qear} for a candidate Lyapunov function $v$.
In practice, it is prudent to consider specialized stability notions, for which the premise of rule~\irref{LyapGen} can be arithmetically simplified.
Proof rules for generalized attractivity are also derivable for specialized instances.

\subsection{Specialization}
General stability specializes to several stability notions in the literature.
\iflongversion
For brevity, \dL specifications of the stability properties (in \textbf{bold}) are given below with mathematical definitions deferred to~\rref{app:proofs}.
\else
For brevity, \dL specifications of the stability properties (in \textbf{bold}) are given below with mathematical definitions deferred to the supplement~\rref{app:}.
\fi

\paragraph{Set Stability.}

An important special case is when the desired operating states are exactly the states from which disturbances are expected, i.e., $\rrfvar \mnodefequiv \rfvar$.
This leads to the notion of \textbf{set stability} of the set characterized by $\rfvar$~\cite{10.2307/j.ctvcm4hws,MR1201326}.

\begin{lemma}[Set Stability in \dL]
For the ODE $\D{x}=\genDE{x}$, the set characterized by formula $\rfvar$ is
\begin{inparaenum}[\it i)]
\item \textbf{stable}, \item \textbf{attractive}, \item \textbf{asymptotically stable}, and \item \textbf{globally asymptotically stable}
\end{inparaenum} iff the following \dL formulas are valid:
\begin{enumerate}[\it i)]
\item $\stabodePR{\D{x}=\genDE{x}}{\rfvar}{\rfvar}$,
\item $\lexists{\delta{>}0}{\attrodePR{\D{x}=\genDE{x}}{\neighborhood[\delta]{\rfvar}}{\rfvar}}$,
\item $\stabodePR{\D{x}=\genDE{x}}{\rfvar}{\rfvar} \land \lexists{\delta{>}0}{\attrodePR{\D{x}=\genDE{x}}{\neighborhood[\delta]{\rfvar}}{\rfvar}}$, and
\item $\stabodePR{\D{x}=\genDE{x}}{\rfvar}{\rfvar} \land \attrodePR{\D{x}=\genDE{x}}{\ltrue}{\rfvar}$
\end{enumerate}
\label{lem:setasymstabdl}
\end{lemma}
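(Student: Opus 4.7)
The plan is to prove each of the four biconditionals by unfolding the \dL semantics of the stated formulas and matching them against the standard mathematical definitions of set stability (recalled from \cite{10.2307/j.ctvcm4hws,MR1201326}), following the same pattern as the proof of \rref{lem:asymstabdl} but with the set characterized by $\rfvar$ in place of the singleton $\{0\}$. The key syntactic ingredient that makes this generalization go through essentially verbatim is the neighborhood formula $\neighborhood[\varepsilon]{\rfvar} \mnodefequiv \lexists{y}{(\norm{x-y}^2 < \varepsilon^2 \land \rfvar(y))}$, whose semantics is exactly the set of states within Euclidean distance $\varepsilon$ of the set characterized by $\rfvar$. Thus $\dist(x,\rfvar) < \varepsilon$ (in the usual mathematical sense) coincides with $x \in \imodel{\I}{\neighborhood[\varepsilon]{\rfvar}}$.

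For item \emph{i)}, set stability is spelled out by the mathematical definition as: for every $\varepsilon > 0$ there exists $\delta > 0$ such that whenever the initial state is within distance $\delta$ of $\rfvar$, the right-maximal solution stays within distance $\varepsilon$ of $\rfvar$ for all times in its domain. Matching each quantifier against $\stabodePR{\D{x}=\genDE{x}}{\rfvar}{\rfvar}$ and using the semantics of the box modality on ODEs (which already ranges over all times in the right-maximal domain) gives the biconditional immediately. For item \emph{ii)}, set attractivity mathematically requires the existence of $\delta > 0$ such that every solution from within distance $\delta$ of $\rfvar$ satisfies $\dist(x(t),\rfvar) \to 0$ as $t \to T$; I would rewrite this limit in its "eventually always within $\varepsilon$" form (for every $\varepsilon > 0$ there exists $\tau$ such that $\dist(x(t),\rfvar) < \varepsilon$ for all $\tau \leq t < T$) and observe that this is exactly what the nested $\didia{\cdot}\dibox{\cdot}$ structure of $\asymode{\D{x}=\genDE{x}}{\rfvar} = \lforall{\varepsilon{>}0}{\ddiamond{\D{x}=\genDE{x}}{\dbox{\D{x}=\genDE{x}}{\,\neighborhood[\varepsilon]{\rfvar}}}}$ expresses under the \dL semantics, with the outer $\lexists{\delta{>}0}{\lforall{x}{(\neighborhood[\delta]{\rfvar} \limply \cdots)}}$ supplying the uniform radius of attraction.

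Items \emph{iii)} and \emph{iv)} are then short corollaries: asymptotic stability of a set is, by definition, the conjunction of stability and attractivity, which is precisely the conjunction of the \dL formulas from \emph{i)} and \emph{ii)}; and global asymptotic stability replaces the local $\lexists{\delta{>}0}{\lforall{x}{(\neighborhood[\delta]{\rfvar} \limply \dots)}}$ quantifier pattern in the attractivity clause by the universal statement "attractivity holds from every initial state", which is exactly what $\attrodePR{\D{x}=\genDE{x}}{\ltrue}{\rfvar}$ says (the precondition $\ltrue$ places no restriction on $x$).

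The only technically delicate step is the asymptotic-approach characterization in \emph{ii)}, because it requires the $\varepsilon$-$\tau$ reformulation of the limit and a careful semantic reading of the nested $\didia{\cdot}\dibox{\cdot}$ modalities over the right-maximal solution domain $[0,T)$; in particular one must argue that the existential time witness produced by the diamond lies in the same right-maximal interval as the universal time quantification of the inner box, and that $T$ may be finite. This is exactly the obstacle already handled in the proof of \rref{lem:asymstabdl} for the origin, and the argument transfers unchanged because solutions and their domains do not depend on the target set; only the distance function (here $\dist(\cdot,\rfvar)$ rather than $\norm{\cdot}$) is swapped, and its \dL counterpart is again $\neighborhood[\varepsilon]{\rfvar}$. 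Everything else is a routine rearrangement of quantifiers.
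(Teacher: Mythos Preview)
Your proposal is correct and follows essentially the same approach as the paper: both unfold the \dL semantics, identify $\neighborhood[\varepsilon]{\rfvar}$ as expressing $\dist(x,\rfvar)<\varepsilon$, and defer the nested $\didia{\cdot}\dibox{\cdot}$ limit argument to the proof of \rref{lem:asymstabdl}. The paper's proof is simply a much more condensed version of what you wrote.
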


The intuition for~\rref{lem:setasymstabdl} is similar to Lemmas~\ref{lem:asymstabdl} and~\ref{lem:globstabdl}, except formula $\rfvar$ (instead of the origin) characterizes the set of desirable states.
An application of set stability is shown in the following example.

\begin{example}[Tennis racket theorem~\cite{Ashbaugh91}]
\label{ex:tennisracket}
The following system of ODEs models the rotation of a 3D rigid body~\cite{Chicone2006,10.2307/j.ctvcm4hws}, where $x_1,x_2,x_3$ are angular velocities and $I_1 > I_2 > I_3 > 0$ are the principal moments of inertia along the respective axes.
\[ \exrigid \mnodefequiv \D{x_1} = \frac{I_2-I_3}{I_1} x_2 x_3, \quad \D{x_2} = \frac{I_3-I_1}{I_2} x_3 x_1, \quad \D{x_3} = \frac{I_1-I_2}{I_3} x_1 x_2 \]

When such a rigid object is spun or rotated on each of its axes, a well-known physical curiosity~\cite{Ashbaugh91} is that the rotation is stable in the first and third axes, whilst additional (unstable) twisting motion is observed for the intermediate axis.
Mathematically, a perfect rotation, e.g., around $x_1$, corresponds to a (large) initial value for $x_1$ with no rotation in the other axes, i.e., $x_2 = 0$, $x_3 = 0$.
Accordingly the real world observation of stability for rotations about the first principal axis is explained by stability with respect to small perturbations in $x_2,x_3$, as formally specified by formula~\rref{eq:stab1} below.
Note that the set characterized by formula $x_2 = 0 \land x_3 = 0$ is the entire $x_1$ axis, not just a single point.
Similarly, rotations are stable around the third principal axis iff formula~\rref{eq:stab3} is valid.
\begin{align}
\stabodePR{\exrigid}{x_2=0 \land x_3=0}{x_2=0 \land x_3=0} \label{eq:stab1}\\
\stabodePR{\exrigid}{x_1=0 \land x_2=0}{x_1=0 \land x_2=0} \label{eq:stab3}
\end{align}

The validity of formulas~\rref{eq:stab1} and~\rref{eq:stab3} are proved in~\rref{ex:tennisracketproof}.
\end{example}

The formal specification of set stability yields three provable logical consequences which are important stepping stones for the set stability proof rules.

\begin{corollary}[Set stability properties]
The following axioms are derivable in \dL.
In axiom~\irref{stabclosure}, formula $\closure{\rfvar}$ characterizes the topological closure of formula $\rfvar$.
In axiom~\irref{stabclosed}, formula $\rfvar$ characterizes a closed set.

\noindent
\begin{calculuscollection}
\begin{calculus}
\dinferenceRule[stabattrgen|SetSAttr]{}
{
\linferenceRule[impl]
  {\stabodePR{\D{x}=\genDE{x}}{\rfvar}{\rfvar} }
  {\big(\axkey{\asymode{\D{x}=\genDE{x}}{\rfvar}} \lbisubjunct \lforall{\varepsilon{>}0}{\ddiamond{\D{x}=\genDE{x}}{\,\neighborhood[\varepsilon]{\rfvar}}}\big)}
}{}

\dinferenceRule[stabclosure|SClosure]{}
{
\linferenceRule[equiv]
  {\stabodePR{\D{x}=\genDE{x}}{\closure{\rfvar}}{\closure{\rfvar}}}
  {\stabodePR{\D{x}=\genDE{x}}{\rfvar}{\rfvar} }
}{}

\dinferenceRule[stabclosed|SClosed]{}
{
\linferenceRule[impl]
  {\stabodePR{\D{x}=\genDE{x}}{\rfvar}{\rfvar} }
  {\lforall{x}{\big( \rfvar \limply \dbox{\D{x}=\genDE{x}}{\rfvar}\big)}}
}{}
\end{calculus}
\end{calculuscollection}
\label{cor:setstabsimp}
\end{corollary}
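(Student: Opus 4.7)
The plan is to derive each of the three axioms \irref{stabattrgen}, \irref{stabclosure}, and \irref{stabclosed} by lifting their intended topological content into syntactic \dL reasoning, using stability $\stabodePR{\D{x}=\genDE{x}}{\rfvar}{\rfvar}$ as the workhorse whenever a uniform $\delta$ witness is needed. All three statements are direct generalizations from the equilibrium-point case (\rref{cor:asymstabsimp}) to an arbitrary closed-or-arbitrary set characterized by $\rfvar$, so the structure of the proofs largely mirrors the existing proofs for the origin, with the norm $\norm{x-0}$ replaced by the distance to $\rfvar$ encoded via $\neighborhood[\varepsilon]{\rfvar}$.

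For \irref{stabattrgen}, I would proceed by mutual implication, generalizing the argument underlying~\irref{stabattr}. The forward direction is definitional: if $\asymode{\D{x}=\genDE{x}}{\rfvar}$ holds then for every $\varepsilon>0$ we have $\ddiamond{\D{x}=\genDE{x}}{\dbox{\D{x}=\genDE{x}}{\neighborhood[\varepsilon]{\rfvar}}}$, and invoking the box at $\tau'=0$ from the witness state of the diamond yields $\ddiamond{\D{x}=\genDE{x}}{\neighborhood[\varepsilon]{\rfvar}}$. The reverse direction is the crux: given $\varepsilon>0$, apply stability to obtain $\delta>0$ with $\neighborhood[\delta]{\rfvar}\limply\dbox{\D{x}=\genDE{x}}{\neighborhood[\varepsilon]{\rfvar}}$, then instantiate the hypothesis $\lforall{\varepsilon{>}0}{\ddiamond{\D{x}=\genDE{x}}{\neighborhood[\varepsilon]{\rfvar}}}$ at $\delta$ to obtain a witness time at which the solution lies in $\neighborhood[\delta]{\rfvar}$; the dL monotonicity rule for $\ddiamond$ plus the stability box at that point yields $\ddiamond{\D{x}=\genDE{x}}{\dbox{\D{x}=\genDE{x}}{\neighborhood[\varepsilon]{\rfvar}}}$, i.e.\ $\asymode{\D{x}=\genDE{x}}{\rfvar}$.

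For \irref{stabclosure}, the plan is to observe that the open neighborhood formulas are insensitive to topological closure: for any $\varepsilon>0$, the quantifier-eliminated form of $\neighborhood[\varepsilon]{\rfvar}$ and of $\neighborhood[\varepsilon]{\closure{\rfvar}}$ characterize the same set, because every point within distance $\varepsilon$ of a limit point of $\rfvar$ is also within distance $\varepsilon$ of some genuine point of $\rfvar$. After proving this equivalence $\neighborhood[\varepsilon]{\rfvar}\lbisubjunct\neighborhood[\varepsilon]{\closure{\rfvar}}$ by~\irref{qear}, both occurrences of $\neighborhood[\cdot]{\rfvar}$ in $\stabodePR{\D{x}=\genDE{x}}{\rfvar}{\rfvar}$ can be syntactically rewritten to their closed-set counterparts, which yields $\stabodePR{\D{x}=\genDE{x}}{\closure{\rfvar}}{\closure{\rfvar}}$ (and conversely); this rewriting is purely first-order real arithmetic since $\varepsilon>0$ is in scope.

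The main obstacle is \irref{stabclosed}, which must extract an ODE invariance property from a family of stability safety properties. I would argue by contradiction: assume $\lnot\dbox{\D{x}=\genDE{x}}{\rfvar}$ for some $x\in\rfvar$, so $\ddiamond{\D{x}=\genDE{x}}{\lnot\rfvar}$ holds. Since $\rfvar$ is closed, $\lnot\rfvar$ is open, and the witness state at time $\tau$ has a strictly positive distance $\varepsilon_0>0$ to $\rfvar$. Now apply $\stabodePR{\D{x}=\genDE{x}}{\rfvar}{\rfvar}$ at $\varepsilon = \varepsilon_0$ to extract $\delta>0$ with $\neighborhood[\delta]{\rfvar}\limply\dbox{\D{x}=\genDE{x}}{\neighborhood[\varepsilon_0]{\rfvar}}$; since $\rfvar\limply\neighborhood[\delta]{\rfvar}$ trivially by~\irref{qear}, the assumed initial state falls into the $\delta$-ball and thus the solution stays within $\neighborhood[\varepsilon_0]{\rfvar}$ for all times, contradicting that at time $\tau$ the solution has distance $\geq\varepsilon_0$ from $\rfvar$. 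The delicate step will be encoding the existence of such a positive $\varepsilon_0$ syntactically; this relies on the hypothesis that $\rfvar$ is closed and hence expressible with non-strict inequalities, so that $\lnot\rfvar$ admits a syntactic open neighborhood witness obtainable by~\irref{qear}, allowing the argument to be carried out entirely within \dL's first-order and modal rules without any metatheoretic appeal to topology.
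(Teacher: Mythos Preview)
Your proposal is correct and follows essentially the same approach as the paper for all three axioms. For \irref{stabclosed}, the paper encodes your ``positive distance $\varepsilon_0$'' step precisely as you anticipate: it rewrites $\lnot\rfvar$ to $\lexists{\varepsilon{>}0}{\lnot{\neighborhood[\varepsilon]{\rfvar}}}$ in the diamond postcondition (provable by~\irref{qear} since $\rfvar$ is closed), then commutes the existential past the diamond via the ODE Barcan axiom~\irref{dBarcan} and Skolemizes, after which the contradiction with the stability box follows exactly as you outline.
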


Axiom~\irref{stabattrgen} generalizes~\irref{stabattr} and provides a syntactic simplification of the region of attraction for formula $\rfvar$ when $\rfvar$ is stable.
Axiom~\irref{stabclosure} says that stability of $\rfvar$ is equivalent to stability of its closure $\closure{\rfvar}$, because for any perturbation $\delta > 0$, the neighborhoods $\neighborhood[\delta]{\rfvar}$ and $\neighborhood[\delta]{\closure{\rfvar}}$ are provably equivalent in real arithmetic.
Axiom~\irref{stabclosed} says that for closed formulas $\rfvar$, invariance of $\rfvar$ is a necessary condition for stability of $\rfvar$.
Without loss of generality, it suffices to develop proof rules for stability of formulas characterizing closed (using \irref{stabclosure}) and invariant (using \irref{stabclosed}) sets.
Indeed, standard definitions of set stability~\cite{10.2307/j.ctvcm4hws,MR1201326} usually assume that the set of concern is closed and invariant.

\begin{lemma}[Set stability Lyapunov functions]
The following Lyapunov function proof rules for set stability are derivable in \dL.
In rules~\irref{SetLyap+SetStrictLyap}, formula $\rfvar$ characterizes a compact (i.e., closed and bounded) set.
In rule~\irref{SetLyapGen}, the two premises are stacked.

\noindent
\begin{calculus}
\dinferenceRule[SetLyap|SLyap${_\geq}$]{Set Lyapunov}
{\linferenceRule
  {\lsequent{\rfvar } {\dbox{\D{x}=\genDE{x}}{\rfvar}} \quad
   \lsequent{\lnot{\rfvar}}{ \lterm > 0 \land \lied[]{\genDE{x}}{\lterm} \leq 0} \qquad
   \lsequent{\bdr{\rfvar}}{ \lterm \leq 0}}
  {\lsequent{} {\stabodePR{\D{x}=\genDE{x}}{\rfvar}{\rfvar} }}
}{}
\dinferenceRule[SetStrictLyap|SLyap${_>}$]{Set Strict Lyapunov}
{\linferenceRule
  {\lsequent{\rfvar } {\dbox{\D{x}=\genDE{x}}{\rfvar}} \quad
   \lsequent{\lnot{\rfvar}}{ \lterm > 0 \land \lied[]{\genDE{x}}{\lterm} < 0} \qquad
   \lsequent{\bdr{\rfvar}}{ \lterm \leq 0}  }
  {\lsequent{} {\stabodePR{\D{x}=\genDE{x}}{\rfvar}{\rfvar} \land \lexists{\delta{>}0}{\attrodePR{\D{x}=\genDE{x}}{\neighborhood[\delta]{\rfvar}}{\rfvar}}}}
}{}
\end{calculus}

\noindent
\begin{calculus}
\dinferenceRule[SetLyapGen|SLyap${^*_\geq}$]{Set Lyapunov General}
{\linferenceRule
  { \begin{array}{l}
    \lsequent{\rfvar } {\dbox{\D{x}=\genDE{x}}{\rfvar}} \\
    \lsequent{}{\lforall{\varepsilon{>}0}{
    \lexists{0{<}\gamma{\leq}\varepsilon}{
   \left(\begin{array}{l}
   \lexists{k}{\left(
     \begin{array}{l}
     \lforall{x}{(\bdr{(\neighborhood[\gamma]{\rfvar})} \limply \lterm \geq k)} \land \\
     \lexists{0{<}\delta{\leq}\gamma}{\lforall{x}{(\neighborhood[\delta]{\rfvar} \land \lnot{\rfvar} \limply \lterm <k)} }
     \end{array}\right)} \land\\
   \lforall{x}{(\cneighborhood[\gamma]{\rfvar} \land \lnot{\rfvar} \limply \lied[]{\genDE{x}}{\lterm} \leq 0)}
    \end{array}\right)
    }}}
    \end{array}
  }
  {\lsequent{} {\stabodePR{\D{x}=\genDE{x}}{\rfvar}{\rfvar}}}
}{}
\end{calculus}
\label{lem:setstablyap}
\end{lemma}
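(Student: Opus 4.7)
The plan is to derive all three set stability rules from the general rule \irref{LyapGen} in~\rref{lem:genlyapunov}, specialized to postcondition $\rrfvar \mnodefequiv \rfvar$, together with Corollary~\ref{cor:setstabsimp} and standard ODE invariance and liveness reasoning. The implication $\rfvar \limply \rrfvar$ required by~\irref{LyapGen} is immediate in all three cases since $\rrfvar \mnodefequiv \rfvar$.

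I would first derive the most general rule \irref{SetLyapGen}, because its premise has already been shaped to fit \irref{LyapGen}. Given any $\varepsilon>0$, instantiate the second premise to obtain the witnesses $\gamma, k, \delta$; the first two conjuncts in \irref{LyapGen}'s premise are immediate since $\bdr{\neighborhood[\gamma]{\rfvar}} \limply \lnot\rfvar$ (for $\gamma>0$ and $\rfvar$ closed, which we may assume by \irref{stabclosure}) so the boundary condition $\lforall{x}{(\bdr{(\neighborhood[\gamma]{\rfvar})} \limply \lterm\geq k)}$ drops out, and the middle conjunct matches verbatim up to splitting on $\rfvar$. The remaining invariance obligation $\rfvar \lor \lterm < k \limply \dbox{\pevolvein{\D{x}=\genDE{x}}{\cneighborhood[\gamma]{\rfvar}}}{(\rfvar \lor \lterm < k)}$ is the main obstacle and requires case-splitting combined with differential induction: on the $\rfvar$ branch, invariance follows from the first premise $\rfvar \vdash \dbox{\D{x}=\genDE{x}}{\rfvar}$ weakened to the evolution domain $\cneighborhood[\gamma]{\rfvar}$; on the $\lterm{<}k \land \lnot{\rfvar}$ branch, the Lie derivative condition $\lied[]{\genDE{x}}{\lterm} \leq 0$ (valid on $\cneighborhood[\gamma]{\rfvar} \land \lnot\rfvar$) keeps $\lterm < k$ along trajectories by \dL's differential induction axiom, while the disjunction is preserved when the solution crosses into $\rfvar$. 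Care is needed because invariance of a disjunction is not just the disjunction of invariances; I would discharge it via a differential cut introducing $\rfvar \lor \lterm\leq k$ as an auxiliary invariant (which holds by combining the two cases), then refining back to the strict form.

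Next, \irref{SetLyap} is derived from \irref{SetLyapGen} using compactness of $\rfvar$, which is what supplies the arithmetic witnesses. For a given $\varepsilon>0$ pick $\gamma \mnodefeq \min(\varepsilon,\gamma_0)$ with $\gamma_0$ small enough that $\cneighborhood[\gamma_0]{\rfvar}$ is bounded; compactness and continuity of the polynomial $\lterm$ make $k \mnodefeq \inf_{x \in \bdr{\neighborhood[\gamma]{\rfvar}}} \lterm$ attained and, by the sign premise $\lnot\rfvar \vdash \lterm > 0$, strictly positive; then $\delta$ exists by continuity because $\lterm \leq 0$ on $\bdr{\rfvar}$ so $\lterm < k$ in a sufficiently small neighborhood of $\rfvar$ lying outside $\rfvar$. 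All of these existence statements reduce to formulas of first-order real arithmetic (over polynomial $\lterm$ and the closed bounded set $\cneighborhood[\gamma]{\rfvar}$) and thus follow by~\irref{qear}; the Lie derivative premise of \irref{SetLyapGen} is immediate from $\lnot\rfvar \vdash \lied[]{\genDE{x}}{\lterm} \leq 0$ after weakening.

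Finally, \irref{SetStrictLyap} splits as a conjunction. Its stability conjunct is discharged by \irref{SetLyap}, since $\lied[]{\genDE{x}}{\lterm} < 0$ implies $\lied[]{\genDE{x}}{\lterm} \leq 0$. For the attractivity conjunct $\lexists{\delta{>}0}{\attrodePR{\D{x}=\genDE{x}}{\neighborhood[\delta]{\rfvar}}{\rfvar}}$, I would use axiom~\irref{stabattrgen} from~\rref{cor:setstabsimp} to rewrite $\asymode{\D{x}=\genDE{x}}{\rfvar}$ into the simpler liveness statement $\lforall{\varepsilon{>}0}{\ddiamond{\D{x}=\genDE{x}}{\,\neighborhood[\varepsilon]{\rfvar}}}$, which is now amenable to the ODE liveness proof principles of~\cite{DBLP:journals/fac/TanP}. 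A suitable $\delta$ is obtained from the stability conclusion (applied at some fixed $\varepsilon_0$ giving a bounded trapping region), and within that trapping region compactness and the strict decrease condition $\lied[]{\genDE{x}}{\lterm} < 0$ on the compact set $\cneighborhood[\varepsilon_0]{\rfvar} \setminus \neighborhood[\varepsilon]{\rfvar}$ yield a uniform positive lower bound on $-\lied[]{\genDE{x}}{\lterm}$, forcing $\lterm$ to decrease into the $\lterm \leq 0$ region around $\rfvar$ in finite time; the boundary condition $\bdr{\rfvar} \vdash \lterm \leq 0$ then places the state in $\neighborhood[\varepsilon]{\rfvar}$. The principal difficulty throughout is the first one: massaging the disjunctive invariant obligation from \irref{LyapGen} into a form that can be discharged by \dL's ODE invariance calculus under a tight evolution domain.
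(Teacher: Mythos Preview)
Your overall strategy matches the paper's proof closely: derive \irref{SetLyapGen} from \irref{LyapGen}, then \irref{SetLyap} from \irref{SetLyapGen} via compactness arithmetic, and \irref{SetStrictLyap} from \irref{SetLyap} plus liveness reasoning with \irref{stabattrgen} and a compact staging set (the paper instantiates stability at $\varepsilon_0=1$ and closes with rule~\irref{SPc}).

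The one place where your plan is weaker than the paper's is the disjunctive invariant $\rfvar \lor \lterm < k$ in the derivation of \irref{SetLyapGen}. You correctly flag that invariance of a disjunction is not the disjunction of invariances, but your proposed fix---a differential cut with auxiliary invariant $\rfvar \lor \lterm\leq k$, then ``refining back to the strict form''---does not actually resolve the difficulty: proving that non-strict auxiliary invariant faces the same case-splitting obstacle, and there is no general route back from $\lterm\leq k$ to $\lterm<k$. The paper instead rewrites $\rfvar \lor \lterm<k$ equivalently as $\lnot\rfvar \limply \lterm<k$ and applies axiom~\irref{DCC}, which cleanly decomposes the obligation into (i) $\dbox{\pevolvein{\D{x}=\genDE{x}}{\cneighborhood[\gamma]{\rfvar}\land\lnot\rfvar}}{\,\lterm<k}$, which follows from $\lied[]{\genDE{x}}{\lterm}\leq 0$ on that domain via~\irref{dIcmp}, and (ii) invariance of $\rfvar$ once entered, which follows from the first premise via~\irref{DMP}. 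That is the missing ingredient that makes your decomposition go through.

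A minor point: for \irref{SetLyap} the paper simply takes $\gamma=\varepsilon$; your $\min(\varepsilon,\gamma_0)$ is unnecessary because $\rfvar$ is already bounded by hypothesis, so every closed neighborhood $\cneighborhood[\gamma]{\rfvar}$ is compact regardless of $\gamma$.
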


All three proof rules have the necessary premise $\lsequent{\rfvar} {\dbox{\D{x}=\genDE{x}}{\rfvar}}$ which says that formula $\rfvar$ is an invariant of the ODE $\D{x}=\genDE{x}$.
Rules~\irref{SetLyap+SetStrictLyap} are slight generalizations of Lyapunov function proof rules for set stability~\cite{10.2307/j.ctvcm4hws} and they respectively generalize rules~\irref{Lyap+StrictLyap} to prove stability for an invariant $\rfvar$.
Importantly, both rules assume that $\rfvar$ characterizes a compact, i.e., closed and bounded set, which simplifies the arithmetical conditions on $\lterm$ in their premises.
The rule \emph{without} the boundedness requirement on $\rfvar$ suggested in the remark after~\cite[Definition 8.1]{MR1201326}, is unsound,
\iflongversion
see~\rref{cex:khalil}.
\else
see supplement~\rref{app:}.
\fi

For asymptotic stability (in rule \irref{SetStrictLyap}), boundedness also guarantees that perturbed ODE solutions always exist for sufficient duration, which is a fundamental step in the ODE liveness proofs~\cite{DBLP:journals/fac/TanP}.
Rule~\irref{SetLyapGen} is derived from rule \irref{LyapGen} using invariance of $\rfvar$ by the first premise; it provides a means of formally proving the set stability properties~\rref{eq:stab1} and~\rref{eq:stab3} from~\rref{ex:tennisracket}.

\begin{example}[Stability of rigid body motion]
\label{ex:tennisracketproof}
The proof for~\rref{eq:stab1} uses the Lyapunov function $\lterm \mnodefeq \frac{1}{2}(\frac{I_1-I_2}{I_3}x_2^2 - \frac{I_3-I_1}{I_2} x_3^2)$, whose Lie derivative is $\lied[]{\genDE{x}}{\lterm} = 0$, and rule~\irref{SetLyapGen} with formula $\rfvar \mnodefequiv x_2=0 \land x_3=0$.
The proof for~\rref{eq:stab3} is symmetric.
For the top premise of rule~\irref{SetLyapGen}, formula $\rfvar$ is a provable invariant~\cite{DBLP:journals/jacm/PlatzerT20} of the ODE $\exrigid$.
The bottom premise, although arithmetically complicated, can be simplified by choosing $\gamma \mnodefeq \varepsilon$ and deciding the resulting formula by~\irref{qear}.

Recall that the $x_1$ axis is \emph{not} a compact set so neither of the standard proof rules for set stability~\irref{SetLyap+SetStrictLyap} would be sound for this proof.
\end{example}

\paragraph{Epsilon-Stability}
Motivated by numerical robustness of proofs of stability, Gao et al.~\cite{DBLP:conf/cav/GaoKDRSAK19} define $\varepsilon$-stability for ODEs.
The following \dL characterization shows how $\varepsilon$-stability can be understood as an instance of general stability.

\begin{lemma}[$\varepsilon$-Stability in \dL]
The origin of ODE $\D{x}=\genDE{x}$ is \textbf{$\bm{\varepsilon}$-stable} for constant $\varepsilon > 0$ iff the \dL formula $\stabodePR{\D{x}=\genDE{x}}{x=0}{\neighborhood[\varepsilon]{x=0}}$ is valid.
\label{lem:epsstab}
\end{lemma}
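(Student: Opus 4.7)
The plan is to reduce the stated equivalence to a single real-arithmetic simplification of nested neighborhoods, after which the \dL formula matches the mathematical definition of $\varepsilon$-stability from~\cite{DBLP:conf/cav/GaoKDRSAK19} verbatim. First, since the bound variable $\varepsilon$ in the definition of $\stabodePR{\cdot}{\cdot}{\cdot}$ clashes with the constant $\varepsilon$ appearing in the postcondition $\neighborhood[\varepsilon]{x=0}$, I would $\alpha$-rename the outer quantifier, say to a fresh $\eta$. Unfolding $\stabodePR{\D{x}=\genDE{x}}{x=0}{\neighborhood[\varepsilon]{x=0}}$ and simplifying $\neighborhood[\delta]{x=0}$ in real arithmetic yields
\[
  \lforall{\eta {>} 0}{\lexists{\delta {>} 0}{\lforall{x}{\big(\norm{x}^2 < \delta^2 \limply \dbox{\D{x}=\genDE{x}}{\,\neighborhood[\eta]{\neighborhood[\varepsilon]{x{=}0}}}\big)}}}.
\]

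Second, I would establish the real-arithmetic equivalence $\neighborhood[\eta]{\neighborhood[\varepsilon]{x=0}} \lbisubjunct \norm{x}^2 < (\varepsilon{+}\eta)^2$. The forward direction is the triangle inequality applied to the witnesses $y, z$ of the two nested existentials ($\norm{x-y}<\eta$, $\norm{y-z}<\varepsilon$, $z{=}0$). The reverse direction picks $y \mnodefeq \tfrac{\varepsilon}{\varepsilon+\eta}\,x$ (with $y=0$ if $x=0$), which witnesses both $\norm{y}<\varepsilon$ and $\norm{x-y}<\eta$; existence of such $y$ is decidable and provable by \irref{qear}. Since the postcondition appears under $\dibox{\D{x}=\genDE{x}}$, this equivalence is substituted congruently, reducing the unfolded formula to
\[
  \lforall{\eta {>} 0}{\lexists{\delta {>} 0}{\lforall{x}{\big(\norm{x}^2 < \delta^2 \limply \dbox{\D{x}=\genDE{x}}{\,\norm{x}^2 < (\varepsilon{+}\eta)^2}\big)}}}.
\]

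Third, I would match this simplified formula to the mathematical definition of $\varepsilon$-stability, using the semantics of the $\dibox{\cdot}$ modality in terms of the right-maximal ODE solution (as in~\rref{subsec:asymstabilitymath} and~\rref{lem:asymstabdl}). The quantifier alternation $\forall\eta\,\exists\delta\,\forall x$ and the pointwise invariant $\norm{x(t)} < \varepsilon+\eta$ along the solution coincide exactly with the definition of $\varepsilon$-stability, closing both directions of the iff. The only nontrivial step is the triangle-inequality simplification of the nested neighborhood, but it is dispatched entirely inside \irref{qear}; everything else is definitional unfolding and matching of quantifier structure.
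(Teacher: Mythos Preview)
Your proposal is correct and follows essentially the same approach as the paper's proof: unfold the semantics, simplify the nested neighborhood $\neighborhood[\eta]{\neighborhood[\varepsilon]{x=0}}$ to $\norm{x} < \varepsilon+\eta$, and match against the mathematical definition. The paper states the neighborhood simplification without proof while you spell out the triangle-inequality argument; conversely, the paper is a shade more explicit that the final matching step is the reindexing $\gamma \mapsto \varepsilon + \eta$ of the outer universal quantifier (your phrase ``coincide exactly'' hides this substitution, though it is of course trivial).
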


Unlike set stability, $\varepsilon$-stability is an instance of general stability where the pre- and postconditions differ.
In $\varepsilon$-stability, systems are perturbed from the precondition $x=0$ (the origin), but the postcondition enlarges the set of desired states to a $\varepsilon > 0$ neighborhood of the origin, which is considered indistinguishable from the origin itself~\cite{DBLP:conf/cav/GaoKDRSAK19}.
An immediate consequence of~\rref{lem:epsstab} is that rule~\irref{LyapGen} can be used to prove $\varepsilon$-stability, as shown in the next section.

\section{Stability in \KeYmaeraX}
\label{sec:casestudies}

This section puts the \dL stability specifications and derivations from the preceding sections into practice through proofs for several case studies in the \KeYmaeraX theorem prover~\cite{DBLP:conf/cade/FultonMQVP15}.\footnote{See \url{https://github.com/LS-Lab/KeYmaeraX-projects/blob/master/stability}}
Examples~\ref{ex:asympendulum},~\ref{ex:esympendulum},~\ref{ex:tennisracket},~\ref{ex:tennisracketproof} have also been formalized.
The insights from these proofs are discussed after an overview of the case studies.

\paragraph{Inverted Pendulum.}
The stability of the resting state of the pendulum is investigated in Examples~\ref{ex:asympendulum} and~\ref{ex:esympendulum}.
For the inverted pendulum $\expendulumforced$ from~\rref{eq:pendulumforced}, the controlled torque $u(\theta,\omega)$ must be designed and rigorously proved to ensure \emph{feedback stabilization}~\cite{MR1201326} of the inverted position.
A standard PD (Proportional-Derivative) controller can be used for stabilization, where the control input has the form $u(\theta,\omega) \mnodefeq k_1 \theta + k_2 \omega$ for tuning parameters $k_1, k_2$.
Asymptotic stability of the inverted position is achieved for any control parameter choice where $k_1 > a$ and $k_2 > -b$.
The sequent $\lsequent{a > 0, b \geq 0, k_1 > a, k_2 > -b}{\astabode{\expendulumforced}}$ is proved in \KeYmaeraX using the Lyapunov function $\frac{(k_1-a)\theta^2}{2} + \frac{(((b+k_2)\theta+\omega)^2+\omega^2)}{4}$.

\paragraph{Frictional Tennis Racket Theorem.}
The stability of a 3D rigid body is investigated for $\exrigid$ in Examples~\ref{ex:tennisracket} and~\ref{ex:tennisracketproof}.
The following ODEs model additional frictional forces that oppose the rotational motion in each axis of the rigid body, where $\alpha_1,\alpha_2,\alpha_3 > 0$ are positive coefficients of friction:
\[ \exrigidfriction \mnodefequiv \D{x_1} {=} \frac{I_2-I_3}{I_1} x_2 x_3 - \alpha_1 x_1,~\D{x_2} {=} \frac{I_3-I_1}{I_2} x_3 x_1- \alpha_2 x_2,~\D{x_3} {=} \frac{I_1-I_2}{I_3} x_1 x_2 - \alpha_3 x_3\]

In the presence of friction, rotations of the rigid body are globally asymptotically stable in the first and third principal axes, as proved in \KeYmaeraX.
\begin{align*}
&\Gamma \mnodefequiv I_1 > I_2, I_2 > I_3, I_3 > 0, \alpha_1 > 0, \alpha_2 > 0, \alpha_3 > 0 \\
&\lsequent{\Gamma}{\stabodePR{\exrigidfriction}{x_2{=}0 \land x_3{=}0}{x_2{=}0 \land x_3{=}0} \land \attrodePR{\exrigidfriction}{\ltrue}{x_2{=}0 \land x_3{=}0}} \\
&\lsequent{\Gamma}{\stabodePR{\exrigidfriction}{x_1{=}0 \land x_2{=}0}{x_1{=}0 \land x_2{=}0} \land \attrodePR{\exrigidfriction}{\ltrue}{x_1{=}0 \land x_2{=}0}}
\end{align*}

Both asymptotic stability properties are proved using~\irref{SetLyapGen} and the liveness property~\cite{DBLP:journals/fac/TanP} that the kinetic energy $I_1 x_1^2 + I_2 x_2^2 + I_3 x_3^2$ of the system tends to zero over time.
The latter property implies that solutions of $\exrigidfriction$ exist globally and that the values of $x_1,x_2,x_3$ asymptotically tend to zero, which proves global asymptotic stability with the aid of~\irref{stabattrgen}.
Even though a proof rule for (global) asymptotic stability of general nonlinear ODEs and unbounded sets is not available (\rref{sec:genstability}), this example shows that formalized stability properties can still be proved on a case-by-case basis using \dL's ODE reasoning principles.

\paragraph{Moore-Greitzer Jet Engine~\cite{DBLP:conf/cav/GaoKDRSAK19}.}
The origin of the ODE modeling a simplified jet engine $\exmoore \mnodefequiv \D{x_1}=-x_2-\frac{3}{2}x_1^2-\frac{1}{2}x_1^3,~\D{x_2} = 3x_1-x_2$ is $\varepsilon$-stable for $\varepsilon = 10^{-10}$~\cite{DBLP:conf/cav/GaoKDRSAK19}.
\iflongversion
The following sequent is proved in \KeYmaeraX: $\lsequent{\varepsilon = 10^{-10}}{\stabodePR{\exmoore}{x_1^2+x_2^2=0}{x_1^2+x_2^2 < \varepsilon^2}}$.
\else
The sequent $\lsequent{\varepsilon = 10^{-10}}{\stabodePR{\exmoore}{x_1^2+x_2^2=0}{x_1^2+x_2^2 < \varepsilon^2} }$ is proved in \KeYmaeraX.
\fi
The key proof ingredients are an $\varepsilon$-Lyapunov function~\cite{DBLP:conf/cav/GaoKDRSAK19} and manual arithmetic steps, e.g., instantiating existential quantifiers appearing in the specification of $\varepsilon$-stability with appropriate values~\cite{DBLP:conf/cav/GaoKDRSAK19}.

\paragraph{Other Examples~\cite{DBLP:conf/tacas/AhmedPA20}.}
Stability for several ODEs with Lyapunov functions generated by an inductive synthesis technique~\cite[Examples 5--11]{DBLP:conf/tacas/AhmedPA20} were successfully verified in \KeYmaeraX.
The proof for the largest, 6-dim.~nonlinear ODE~\cite[Example 5]{DBLP:conf/tacas/AhmedPA20} required substantial manual arithmetic reasoning in \KeYmaeraX.\footnote{The Lyapunov function in~\cite[Example 5]{DBLP:conf/tacas/AhmedPA20} does \emph{not} work for its associated ODE. It works if the ODE is corrected with $\dot{x}_1 = -x_1^3+4x_2^3-6x_3x_4$, as in the literature~\cite{1184414}.}

The arithmetical conditions in~\cite[Equation 1]{DBLP:conf/tacas/AhmedPA20} are identical to the premises of rule~\irref{Lyap} except it unsoundly omits the condition $\lterm(0)=0$,
\iflongversion
see~\rref{cex:ahmed}.
\else
see supplement~\rref{app:}.
\fi
The generated Lyapunov functions remain correct because the inductive synthesis technique~\cite{DBLP:conf/tacas/AhmedPA20} implicitly guarantees this omitted condition.

\paragraph{Summary.}
These case studies demonstrate the feasibility of carrying out proofs of various (advanced) stability properties within \KeYmaeraX using this paper's stability specifications.
The proofs share similar high-level proof structure, which suggests that proof automation could significantly reduce proof effort~\cite{DBLP:conf/itp/FultonMBP17}.
Such automation should also support user input of key insights for difficult reasoning steps, e.g., real arithmetic reasoning with nested, alternating quantifiers.

\section{Related Work}
\label{sec:related}

Stability is a fundamental property of interest across many different fields of mathematics~\cite{Chicone2006,hirsch1984,Liapounoff1907,Poincare92,MR0450715,MR3837141} and engineering~\cite{10.2307/j.ctvcm4hws,MR1201326,DBLP:books/sp/Liberzon03}.
This related work discussion focuses on formal approaches to stability of ODEs.

\paragraph{Logical Specification of Stability.}
Rouche, Habets, and Laloy~\cite{MR0450715} provide a pioneering example of using logical notation to specify and classify stability properties of ODEs.
Alternative logical frameworks have also been used to specify stability and related properties:
stability is expressed in HyperSTL~\cite{DBLP:conf/memocode/NguyenKJDJ17} as a hyperproperty relating the trace of an ODE against two constant traces;
$\epsilon$-stability is studied in the context of $\delta$-complete reasoning over the reals~\cite{DBLP:conf/cav/GaoKDRSAK19};
region stability for hybrid systems~\cite{DBLP:conf/hybrid/PodelskiW06} is discussed using CTL*;
the syntactic specification of $\asymode{\D{x}=\genDE{x}}{\rfvar}$ resembles the limit definition using filters~\cite{DBLP:conf/itp/HolzlIH13}.
This paper uses \dL as a \emph{sweet spot} logical framework, general enough to specify various stability properties of interest, e.g., asymptotic or exponential stability, and the stability of sets, while also enabling syntactic proofs of those properties.

\paragraph{Formal Verification of Stability.}
There is a vast literature on finding Lyapunov functions for stability, e.g., through numerical~\cite{Parrilo00,1184414,DBLP:journals/automatica/TopcuPS08} and algebraic methods~\cite{261424,DBLP:journals/mics/LiuZZ12}.
Formal approaches are often based on finding Lyapunov function candidates and \emph{certifying} the correctness of those generated candidates~\cite{DBLP:conf/tacas/AhmedPA20,DBLP:conf/cav/GaoKDRSAK19,DBLP:conf/hybrid/KapinskiDSA14,DBLP:conf/nolcos/Sankaranarayanan0A13}.
This paper's approach enables highly trustworthy certification of those candidates in \dL and \KeYmaeraX, with stability proof rules that are soundly \emph{derived} from \dL's parsimonious axiomatization~\cite{DBLP:conf/lics/Platzer12a,DBLP:journals/jar/Platzer17,Platzer18}, as implemented in \KeYmaeraX~\cite{DBLP:conf/cade/FultonMQVP15,DBLP:journals/jar/Platzer17}.
Sections~\ref{sec:genstability} and~\ref{sec:casestudies} further show that this paper's approach supports verification of advanced stability properties~\cite{DBLP:conf/cav/GaoKDRSAK19,10.2307/j.ctvcm4hws,MR1201326} within the same \dL framework.
New stability proof rules like~\irref{LyapGen} can also be soundly and \emph{syntactically} justified in \dL without the need for (low-level) semantic reasoning about the underlying ODE mathematics.
As an example of the latter, semantic approach, LaSalle's invariance principle is formalized in Coq~\cite{DBLP:conf/itp/CohenR17} and used to verify the correctness of an inverted pendulum controller~\cite{DBLP:conf/cpp/Rouhling18}.

\section{Conclusion}

This paper shows how ODE stability can be formalized in \dL using the key idea that stability properties are $\lforall{}{}/\lexists{}{}$-quantified dynamical formulas.
These specifications, their proof rules, and their logical relationships are all syntactically derived from \dL's sound proof calculus.
This further enables trustworthy \KeYmaeraX proofs that rigorously verify \emph{every step} in an ODE stability argument, from arithmetical premises down to dynamical reasoning for ODEs.
Directions for future work include \begin{inparaenum}[\it i)]
\item formalization of stability with respect to perturbations of the system dynamics, and
\item generalizations of stability to hybrid systems.
\end{inparaenum}

\paragraph{Acknowledgments.}
We thank Brandon Bohrer, Stefan Mitsch, and the anonymous reviewers for their helpful feedback on \KeYmaeraX and this paper.

The first author was supported by A*STAR, Singapore.
This research was sponsored by the AFOSR under grant number FA9550-16-1-0288.  The views and conclusions contained in this document are those of the author and should not be interpreted as representing the official policies, either expressed or implied, of any sponsoring institution, the U.S. government or any other entity.

\bibliographystyle{halpha}
\bibliography{paper}

\iflongversion
\else

\vfill

{\small\medskip\noindent{\bf Open Access} This chapter is licensed under the terms of the Creative Commons\break Attribution 4.0 International License (\url{http://creativecommons.org/licenses/by/4.0/}), which permits use, sharing, adaptation, distribution and reproduction in any medium or format, as long as you give appropriate credit to the original author(s) and the source, provide a link to the Creative Commons license and indicate if changes were made.}

{\small \spaceskip .28em plus .1em minus .1em The images or other third party material in this chapter are included in the chapter's Creative Commons license, unless indicated otherwise in a credit line to the material.~If material is not included in the chapter's Creative Commons license and your intended\break use is not permitted by statutory regulation or exceeds the permitted use, you will need to obtain permission directly from the copyright holder.}

\medskip\noindent\includegraphics{ccby4.eps}
\fi

\iflongversion
\appendix
\section{Proof Calculus}
\label{app:proofcalc}

This appendix gives an extended introduction to the \dL proof calculus that is used for the proofs in~\rref{app:proofs}.
Propositional proof rules, e.g.,~\irref{cut+notr+implyr}, are omitted as they are standard from propositional sequent calculus and can be found in the literature~\cite{Platzer18,DBLP:journals/jacm/PlatzerT20}; the first-order quantifier rules~\irref{alll+existsr} instantiate quantified variables with a given term.
The following lemma summarizes the base axioms and proof rules of \dL.
\begin{lemma}[Axioms and proof rules of \dL~\cite{DBLP:journals/jar/Platzer17,Platzer18}]
\label{lem:dlaxioms}
The following are sound axioms and proof rules of \dL.

\noindent
\begin{calculuscollection}
\begin{calculus}
\cinferenceRule[diamond|$\didia{\cdot}$]{diamond axiom}
{\linferenceRule[equiv]
  {\lnot\dbox{\alpha}{\lnot{\rfvar}}}
  {\axkey{\ddiamond{\alpha}{\rfvar}}}
}
{}
\end{calculus}\qquad
\begin{calculus}
\cinferenceRule[K|K]{K axiom / modal modus ponens} %
{\linferenceRule[impl]
  {\dbox{\alpha}{(\rrfvar \limply \rfvar)}}
  {(\dbox{\alpha}{\rrfvar}\limply\axkey{\dbox{\alpha}{\rfvar}})}
}{}
\end{calculus}\\
\begin{calculus}
\cinferenceRule[V|V]{vacuous $\dbox{}{}$}
 {\linferenceRule[impl]
   {\fvarA}
   {\axkey{\dbox{\alpha}{\fvarA}}}
 }{\text{no free variable of $\fvarA$ is bound by $\alpha$}}
\end{calculus}\\
\begin{calculus}
\dinferenceRule[dIcmp|dI$_\cmp$]{}
{\linferenceRule
  {\lsequent{\ivr}{\lied[]{\genDE{x}}{\ptermA}\geq\lied[]{\genDE{x}}{\ptermB}}
  }
  {\lsequent{\Gamma,\ptermA \cmp \ptermB }{\dbox{\pevolvein{\D{x}=\genDE{x}}{\ivr}}{\ptermA \cmp \ptermB}} }
  \quad
}{where $\cmp$ is either $\geq$ or $>$}

\dinferenceRule[dC|dC]{}
{\linferenceRule
  {\lsequent{\Gamma}{\dbox{\pevolvein{\D{x}=\genDE{x}}{\ivr}}{\rcfvar}}
  &\lsequent{\Gamma}{\dbox{\pevolvein{\D{x}=\genDE{x}}{\ivr \land \rcfvar}}{\rfvar}}
  }
  {\lsequent{\Gamma}{\dbox{\pevolvein{\D{x}=\genDE{x}}{\ivr}}{\rfvar}}}
}{}

\cinferenceRule[DG|DG]{differential ghost}
{\linferenceRule[equiv]
  {\lexists{y}{\dbox{\pevolvein{\D{x}=\genDE{x},\D{y}=a(x)y+b(x)}{\ivr(x)}}{\rfvar(x)}}}
  {\axkey{\dbox{\pevolvein{\D{x}=\genDE{x}}{\ivr(x)}}{\rfvar(x)}}}
}{}

\dinferenceRule[dW|dW]{}
{\linferenceRule
  {\lsequent{\ivr}{\rfvar}}
  {\lsequent{\Gamma}{\dbox{\pevolvein{\D{x}=\genDE{x}}{\ivr}}{\rfvar}}}
}{}
\end{calculus}\\
\begin{calculus}
\dinferenceRule[MbW|M${\dibox{'}}$]{}
{\linferenceRule
  {\lsequent{\ivr,\rrfvar}{\rfvar} \quad \lsequent{\Gamma}{\dbox{\pevolvein{\D{x}=\genDE{x}}{\ivr}}{\rrfvar}}}
  {\lsequent{\Gamma}{\dbox{\pevolvein{\D{x}=\genDE{x}}{\ivr}}{\rfvar}}}
}{}
\end{calculus} \qquad
\begin{calculus}
\dinferenceRule[MdW|M${\didia{'}}$]{}
{\linferenceRule
  {\lsequent{\ivr,\rrfvar}{\rfvar} \quad \lsequent{\Gamma}{\ddiamond{\pevolvein{\D{x}=\genDE{x}}{\ivr}}{\rrfvar}}}
  {\lsequent{\Gamma}{\ddiamond{\pevolvein{\D{x}=\genDE{x}}{\ivr}}{\rfvar}}}
}{}
\end{calculus}
\end{calculuscollection}
\label{lem:axbase}
\end{lemma}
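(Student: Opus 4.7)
My plan is to establish soundness of each listed axiom and proof rule by unfolding the trace-based semantics of \dL and verifying the required equivalence or implication directly. The bulk of the argument appeals to the detailed soundness proofs in~\cite{DBLP:journals/jar/Platzer17,Platzer18}, so what I need here is to identify the uniform strategy that handles each family. I would group the statements into three families: the propositional modal axioms \irref{diamond+K+V}, the ODE proof rules \irref{dW+dC+dIcmp+MbW+MdW}, and the differential ghost axiom \irref{DG}.

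For the propositional modal axioms, I would unfold the set-theoretic semantics of the modalities: $\imodel{\I}{\dbox{\alpha}{\rfvar}}$ is the set of states from which every terminating run of $\alpha$ ends in $\imodel{\I}{\rfvar}$, and $\imodel{\I}{\ddiamond{\alpha}{\rfvar}}$ is the set of states from which some run of $\alpha$ ends there. Axiom \irref{diamond} then falls out of the duality of existential and universal quantification over runs. Axiom \irref{K} follows because any single terminating run witnessing both $\dbox{\alpha}{\rrfvar}$ and $\dbox{\alpha}{(\rrfvar \limply \rfvar)}$ at its endpoint also witnesses $\dbox{\alpha}{\rfvar}$ by propositional modus ponens at that endpoint. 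Axiom \irref{V} invokes the coincidence lemma: if $\alpha$ binds no free variable of $\fvarA$, the truth value of $\fvarA$ is preserved along every run of $\alpha$.

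For the ODE proof rules, the key semantic object is the unique right-maximal solution $\solvar$ of $\D{x}=\genDE{x}$. Rule \irref{dW} is immediate since every reachable state satisfies $\ivr$, hence $\rfvar$ by the premise. Rule \irref{dC} follows by first applying its left premise to conclude that $\rcfvar$ holds throughout $\solvar$, so strengthening the evolution domain by $\rcfvar$ does not restrict the solution. Rule \irref{dIcmp} is justified by the mean value theorem applied to $(\ptermA - \ptermB)(\solvar(t))$: its time derivative equals $\lied[]{\genDE{x}}{\ptermA - \ptermB}$ evaluated along $\solvar$, which is non-negative (respectively positive) on the domain by the premise, so $\ptermA \cmp \ptermB$ is maintained from time zero onward. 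The monotonicity rules \irref{MbW+MdW} follow by combining \irref{K} with a generalization-like step, or equivalently by direct semantic unfolding at the endpoint of each run.

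The hard part will be the differential ghost axiom \irref{DG}. Here the ghost variable $y$ satisfies $\D{y}=a(x)y+b(x)$, which is linear in $y$ with coefficients depending continuously on $x$ along the base solution. The critical ingredient is that linear ODEs in $y$ with continuous time-varying coefficients admit a unique solution on the entire existence interval of the base solution $\solvar$ of $\D{x}=\genDE{x}$; in particular the ghost cannot blow up in finite time before $\solvar$ does. This global existence, together with the fact that $\rfvar(x)$ depends only on $x$ (which evolves identically in the augmented system), ensures that the existentially quantified $y$ in the axiom can always be witnessed consistently in both directions of the equivalence. The remainder is careful bookkeeping about state coincidence for the augmented versus base ODE, for which I would defer to the full proof in~\cite{DBLP:journals/jar/Platzer17,Platzer18}.
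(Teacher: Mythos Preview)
Your proposal is correct. The paper's own proof is a single sentence deferring entirely to~\cite{DBLP:journals/jar/Platzer17,Platzer18}; you instead sketch the actual semantic arguments (duality for \irref{diamond}, pointwise modus ponens for \irref{K}, the coincidence lemma for \irref{V}, the mean value inequality for \irref{dIcmp}, domain-restriction reasoning for \irref{dC}, and linear-in-$y$ global existence for \irref{DG}), and these are exactly the ideas the cited references develop. The paper also notes that \irref{MbW+MdW} are \emph{derived} from \irref{dW+K+diamond} rather than proved sound directly, which matches your ``combining \irref{K} with a generalization-like step'' remark. So your route is not different in substance, only in level of detail: the paper treats this lemma as a pure citation, whereas you give a self-contained outline that would let a reader reconstruct the soundness proofs without consulting the references.
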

\begin{proof}
The soundness of all axioms and proof rules in~\rref{lem:axbase} are proved elsewhere~\cite{DBLP:journals/jar/Platzer17,Platzer18}.
\end{proof}

The first three hybrid program axioms~\irref{diamond+K+V}~\cite{DBLP:journals/jar/Platzer17,Platzer18} of \dL are used in this paper when the hybrid program $\alpha$ is an ODE.
Axiom~\irref{diamond} expresses the duality between the box and diamond modalities.
It is used to switch between the two in proofs to turn a liveness property ($\didia{\cdot}$) of an ODE to a (negated) safety property ($\dibox{\cdot}$) of the same ODE, or vice versa.
Axiom~\irref{K} is the modus ponens principle for the box modality.
Vacuous axiom~\irref{V} says if no free variable of $\fvarA$ is changed by hybrid program $\alpha$, then the truth value of $\fvarA$ is also unchanged.
This axiom formally justifies that assumptions involving only constant ODE parameters can be soundly kept across ODE deduction steps in proofs~\cite{Platzer18}, as explained in~\rref{sec:background}.

Differential invariants~\irref{dIcmp} say that if the Lie derivatives obey the inequality $\lied[]{\genDE{x}}{\ptermA} \geq \lied[]{\genDE{x}}{\ptermB}$, then $\ptermA \cmp \ptermB$ is an invariant of the ODE.
Differential cuts~\irref{dC} say that if one can separately prove that formula $\rcfvar$ is always satisfied along ODE solutions, then $\rcfvar$ may be assumed in the domain constraint when proving the same for formula $\rfvar$.
Differential ghosts~\irref{DG} say that, in order to prove safety postcondition $\rfvar(x)$ for the ODE $\D{x}=\genDE{x}$, it suffices to prove $\rfvar(x)$ for a larger system with an added ODE $\D{y}=a(x)y+b(x)$ that is linear in the fresh ghost variable $y$ (because $a(x),b(x)$ do not mention $y$).
This addition is sound because the ODE $\D{x}=\genDE{x}$ does not mention the added variables $y$, and so the evolution of $\D{x}=\genDE{x}$ is unaffected by the addition of a \emph{linear} ODE for $y$~\cite{DBLP:journals/jar/Platzer17}.
Since $y$ is fresh, its initial value can be either existentially (\irref{DG}) or universally (\irref{DGall}) quantified~\cite{DBLP:journals/jar/Platzer17}.
\irlabel{DGall|DG$_\forall$}%

In the box modality, solutions are restricted to stay in the domain constraint $\ivr$.
Thus, differential weakening~\irref{dW} says that postcondition $\rfvar$ is always satisfied along solutions if it is already implied by the domain constraint.
Using \irref{dW+K+diamond}, the final two monotonicity proof rules~\irref{MbW+MdW} for differential equations are derivable.
They strengthen the postcondition from $\rfvar$ to $\rrfvar$, assuming domain constraint $\ivr$, for the box and diamond modalities respectively.

The \dL proof calculus has a sound and complete axiomatization for ODE invariants~\cite{DBLP:journals/jacm/PlatzerT20}; completeness means that for ODE $\D{x}=\genDE{x}$, if formula $\rfvar$ is an invariant of the ODE, then its invariance is syntactically provable in \dL.
For added clarity in this paper's invariance proofs, the following lemma lists additional axioms and proof rules that are useful for step-by-step proofs (\rref{app:proofs}).
The additional detail in these proofs helps inform the implementation of stability proof rules in \KeYmaeraX~\cite{DBLP:conf/cade/FultonMQVP15,DBLP:journals/jacm/PlatzerT20}.
The lemma also lists axioms and proof rules from \dL's refinement-based proof approach for ODE liveness properties~\cite{DBLP:journals/fac/TanP}.

\begin{lemma}[Invariance and liveness \dL axioms and proof rules~\cite{DBLP:journals/jacm/PlatzerT20,DBLP:journals/fac/TanP}]
\label{lem:invliveaxioms}
The following are sound axioms and proof rules of \dL.

\noindent
{\begin{calculuscollection}
\begin{calculus}

\cinferenceRule[DX|DX]{}
{\linferenceRule[equiv]
  {(\ivr \limply \rfvar \land \dbox{\pevolvein{\D{x}=\genDE{x}}{\ivr}}{\rfvar})}
  {\axkey{\dbox{\pevolvein{\D{x}=\genDE{x}}{\ivr}}{\rfvar}}}
}{\text{$\D{x} \not\in \rfvar,\ivr$}}

\cinferenceRule[DSeq|D$\dibox{{;}}$]{}
{
\linferenceRule[equiv]
  {\dbox{\pevolvein{\D{x}=\genDE{x}}{\ivr}}{\dbox{\pevolvein{\D{x}=\genDE{x}}{\ivr}}{\rfvar}}}
  {\dbox{\pevolvein{\D{x}=\genDE{x}}{\ivr}}{\rfvar}}
}{}

\cinferenceRule[DMP|DMP]{differential modus ponens}
{\linferenceRule[impl]
  {\dbox{\pevolvein{\D{x}=\genDE{x}}{\ivr}}{(\ivr \limply \rrfvar)}}
  {(\dbox{\pevolvein{\D{x}=\genDE{x}}{\rrfvar}}{\rfvar} \limply \axkey{\dbox{\pevolvein{\D{x}=\genDE{x}}{\ivr}}{\rfvar}})}
}{}

\dinferenceRule[DCC|DCC]{}
{
\linferenceRule[impl]
  {\dbox{\pevolvein{\D{x}=\genDE{x}}{\ivr \land \rfvar}}{\rrfvar} \land \dbox{\pevolvein{\D{x}=\genDE{x}}{\ivr}}{( \lnot{\rfvar} \limply \dbox{\pevolvein{\D{x}=\genDE{x}}{\ivr}}{\lnot{\rfvar}})}}
  {\dbox{\pevolvein{\D{x}=\genDE{x}}{\ivr}}{(\rfvar \limply \rrfvar)}}
}{}

\cinferenceRule[dBarcan|B$'$]{}
{\linferenceRule[equiv]
  {\lexists{y}{\ddiamond{\pevolvein{\D{x}=\genDE{x}}{\ivr\argx}}{\rfvar(x,y)}}}
  {\axkey{\ddiamond{\pevolvein{\D{x}=\genDE{x}}{\ivr\argx}}{\exists{y}\rfvar(x,y)}}}
}{\text{$y \not\in x$}}

\cinferenceRule[BDG|BDG]{}
{\linferenceRule[impll]
  {\dbox{\pevolvein{\D{x}=\genDE{x},\D{y}=g(x,y)}{\ivr(x)}}{\,\norm{y}^2 \leq \ptermA(x)}}
  {\big( \dbox{\pevolvein{\D{x}=\genDE{x}}{\ivr(x)}}{\rfvar(x)} \lbisubjunct \axkey{\dbox{\pevolvein{\D{x}=\genDE{x}, \D{y}=g(x,y)}{\ivr(x)}}{\rfvar(x)}}\big)}
}{}

\dinferenceRule[Enc|Enc]{Enclosure}
{\linferenceRule
  {
  \lsequent{\Gamma}{\closure{\rfvar}} &
  \lsequent{\Gamma}{\dbox{\pevolvein{\D{x}=\genDE{x}}{\ivr \land \closure{\rfvar}}}{\rfvar}}
  }
  {\lsequent{\Gamma}{\dbox{\pevolvein{\D{x}=\genDE{x}}{\ivr}}{\rfvar}}}
}{\text{formula $\rfvar$ characterizes an open set}}

\dinferenceRule[dbxineq|dbx${_\cmp}$]{Darboux inequality}
{\linferenceRule
  {\lsequent{\ivr} {\lied[]{\genDE{x}}{\ptermA}\geq \cofterm\ptermA}}
  {\lsequent{\ptermA\cmp0} {\dbox{\pevolvein{\D{x}=\genDE{x}}{\ivr}}{\ptermA\cmp0}}}
}{\text{$\cmp$ is either $\geq$ or $>$}}

\dinferenceRule[Prog|K${\didia{\&}}$]{}
{
\linferenceRule[impl]
  {\dbox{\pevolvein{\D{x}=\genDE{x}}{\ivr \land \lnot{\rfvar}}}{\lnot{\rgvar}}}
  {\big(\ddiamond{\pevolvein{\D{x}=\genDE{x}}{\ivr}}{\rgvar} \limply \axkey{\ddiamond{\pevolvein{\D{x}=\genDE{x}}{\ivr}}{\rfvar}}\big)}
}{}

\dinferenceRule[SPc|SP$_c$]{}
{
\linferenceRule
  { \lsequent{\Gamma}{\dbox{\pevolvein{\D{x}=\genDE{x}}{\lnot{\rfvar}}}{\rsfvar}}
   &\lsequent{\rsfvar}{\lied[]{\genDE{x}}{\ptermA} > 0}
  }
  {\lsequent{\Gamma}{\ddiamond{\pevolve{\D{x}=\genDE{x}}}{\rfvar}} }
}{\text{formula $\rsfvar$ characterizes a compact set}}
\end{calculus}
\end{calculuscollection}
}%
\label{lem:axinvlive}
\end{lemma}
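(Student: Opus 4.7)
The plan is to treat this lemma in the same style as the proof of \rref{lem:dlaxioms}, namely by deferring to the published soundness proofs rather than redoing them in full, since this lemma is really a convenience gathering of the axioms and proof rules that the stability derivations in \rref{app:proofs} will cite by name. The items split cleanly into two groups according to the two references. The invariance-oriented axioms \irref{DX}, \irref{DSeq}, \irref{DMP}, \irref{DCC}, \irref{dBarcan}, \irref{BDG} and proof rules \irref{Enc}, \irref{dbxineq} are all established as part of the finite axiomatic basis underlying the complete \dL axiomatization for ODE invariants \cite{DBLP:journals/jacm/PlatzerT20}. The liveness-oriented axiom \irref{Prog} (progress under differential modus ponens) and the liveness proof rule \irref{SPc} (compact staging) are proved sound in the refinement-based ODE liveness work \cite{DBLP:journals/fac/TanP}.

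The only real work in this proof is the bookkeeping check that the schema variables and side conditions displayed here match those under which the cited soundness proofs apply. Concretely, I would verify for \irref{DX} that the side condition $\D{x} \notin \rfvar, \ivr$ prevents any free-variable mismatch between the boxed and unboxed occurrences of $\rfvar, \ivr$; for \irref{dBarcan} that $y$ is fresh with respect to $x$; for \irref{BDG} that the ghost variables $y$ in the differential ghost equation $\D{y}=g(x,y)$ do not occur in the precondition $\rfvar(x)$ or domain constraint $\ivr(x)$; for \irref{Enc} that $\rfvar$ characterizes a topologically open set so that $\closure{\rfvar}$ is well-defined in the premises; and for \irref{SPc} that $\rsfvar$ characterizes a compact set as required by the compact staging argument. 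Each of these side conditions is exactly the one stated in the corresponding soundness proof in the references.

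I do not expect any technical obstacle here since the lemma makes no new mathematical claim. If a self-contained argument were desired, the only axiom that benefits from a brief semantic sketch is \irref{BDG}, whose soundness rests on the observation that an \emph{a priori} bound $\norm{y}^2 \leq p(x)$ on the ghost variable $y$ precludes finite-time blow-up, so that the right-maximal existence interval of the extended ODE in $(x,y)$ coincides with that of the original ODE in $x$; this is what justifies adding the ghost $\D{y}=g(x,y)$ even when $g$ is not linear in $y$, going beyond the purely linear \irref{DG} of \rref{lem:axbase}. All other items reduce to standard invariance or refinement arguments already in the cited literature.
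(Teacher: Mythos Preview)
Your overall strategy of deferring to the cited soundness proofs is correct and matches the paper's approach for most items, but you have a genuine gap: the axioms \irref{DSeq} and \irref{DCC} are \emph{not} established in either \cite{DBLP:journals/jacm/PlatzerT20} or \cite{DBLP:journals/fac/TanP}. The paper's proof explicitly singles these two out as exceptions. Your claim that they are ``established as part of the finite axiomatic basis underlying the complete \dL axiomatization for ODE invariants'' is incorrect, and a referee checking the references would not find them there.

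For \irref{DCC}, the paper points to different sources: it appears as a proof rule in \cite{DBLP:conf/tacas/KolcakDHKS020}, and its axiomatic form is formally verified in \cite{DBLP:journals/afp/Bohrer17}. For \irref{DSeq}, no citation suffices and the paper supplies a direct semantic argument. The proof unfolds the box semantics twice at an initial state $\nu$ along the unique right-maximal solution $\solvar:[0,T)\to\States$: the RHS says that for every $\tau$ with the solution in $\ivr$ on $[0,\tau]$, the state $\solvar(\tau)$ satisfies $\dbox{\pevolvein{\D{x}=\genDE{x}}{\ivr}}{\rfvar}$, which by uniqueness of solutions unfolds to the requirement that for every $t\geq\tau$ with the solution in $\ivr$ on $[\tau,t]$, $\solvar(t)\in\imodel{\I}{\rfvar}$. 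Concatenating the two $\ivr$-constrained intervals recovers exactly the single-box semantics on the LHS. This is short but it is real content, and your proposal omits it entirely.

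Your side-condition bookkeeping and the sketch for \irref{BDG} are fine (though \irref{BDG} is in fact covered by \cite{DBLP:journals/fac/TanP}, so the sketch is optional). The fix is to carve out \irref{DSeq} and \irref{DCC} from your blanket citation, supply the alternate references for \irref{DCC}, and write out the semantic equivalence for \irref{DSeq}.
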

\begin{proof}
The soundness of all axioms and proof rules in~\rref{lem:axinvlive} are proved elsewhere~\cite{DBLP:journals/jacm/PlatzerT20,DBLP:journals/fac/TanP} except axioms~\irref{DSeq} and~\irref{DCC}.
Axiom~\irref{DCC} is stated as a proof rule elsewhere~\cite{DBLP:conf/tacas/KolcakDHKS020} and its axiomatic version is formally verified~\cite{DBLP:journals/afp/Bohrer17}.
Axiom~\irref{DSeq} is proved sound next.

Consider the initial state $\iget[state]{\I} \in \States$ and let $\solvar : [0, T) \to \States, 0<T\leq\infty$ be the unique, right-maximal solution~\cite{Chicone2006} to the ODE $\D{x}=\genDE{x}$ with initial value $\solvar(0)=\iget[state]{\I}$.
Unfolding the semantics, the RHS of axiom~\irref{DSeq} is true in state $\iget[state]{\I}$ iff for all times $0 \leq \tau < T$ such that $\solvar(\zeta) \in \imodel{\I}{\ivr}~\text{for all}~0 \leq \zeta \leq \tau$, the solution at time $\tau$ satisfies $\solvar(\tau) \in \imodel{\I}{\dbox{\pevolvein{\D{x}=\genDE{x}}{\ivr}}{\rfvar}}$.
Further unfolding the semantics, by uniqueness of ODE solutions~\cite{Chicone2006}, this means that for all times $\tau \leq t < T$, such that $\solvar(\zeta) \in \imodel{\I}{\ivr}~\text{for all}~\tau \leq \zeta \leq t$, the solution at time $t$ satisfies $\solvar(t) \in \imodel{\I}{\rfvar}$.
Thus, the RHS is true in state $\iget[state]{\I}$ iff for all times $0 \leq \tau < T$ such that $\solvar(\zeta) \in \imodel{\I}{\ivr}~\text{for all}~0 \leq \zeta \leq \tau$, the solution at time $\tau$ satisfies $\solvar(\tau) \in \imodel{\I}{\rfvar}$, which is the unfolded semantics of the LHS of axiom~\irref{DSeq}. \qedhere
\end{proof}

Differential skip~\irref{DX} is a reflexivity property of differential equation solutions, if the domain constraint $\ivr$ is true initially, then $\rfvar$ must also be true initially because of the ODE solution at time $0$.
Differential compose~\irref{DSeq} is a transitivity property of differential equation solutions, because any state reachable from two sequential runs of an ODE is reachable in a single run of the ODE.

Axiom~\irref{DMP} is the modus ponens principle for domain constraints.
Axiom~\irref{DCC} says in order to prove that an implication $\rfvar \limply \rrfvar$ is always true along an ODE, it suffices to prove it assuming $\rfvar$ in the domain if $\lnot{\rfvar}$ is invariant along the ODE.
The ODE Barcan axiom~\irref{dBarcan} specializes the Barcan axiom~\cite{Platzer18,DBLP:journals/jacm/PlatzerT20} to ODEs in the diamond modality, allowing an existential quantifier $\lexists{y}{}$ to be commuted with the diamond modality.
The bounded differential ghosts axiom~\irref{BDG} generalizes~\irref{DG} by allowing a vectorial ghost ODE with arbitrary RHS $\D{y}=g(x,y)$ to be added.
For soundness, this RHS must be bounded in norm with respect to the existing variables $x$ of the ODE so that the added ODEs for $y$ do not blow up before the original solution for $\D{x}=\genDE{x}$~\cite{DBLP:journals/fac/TanP}.
Rule~\irref{Enc} is a derived proof rule which says that to prove that a solution always stays in the open set characterized by formula $\rfvar$, it suffices to prove it assuming the closure $\closure{\rfvar}$ in the domain constraint.
Rule~\irref{dbxineq} is a derived proof rule~\cite{DBLP:journals/jacm/PlatzerT20} which proves the invariance of formula $\ptermA \cmp 0$ if its Lie derivative satisfies inequality  $\lied[]{\genDE{x}}{\ptermA}\geq \cofterm\ptermA$ for an arbitrarily chosen cofactor term $\cofterm$.

Axiom~\irref{Prog} is a derived ODE liveness refinement axiom~\cite{DBLP:journals/fac/TanP}.
The formula $\dbox{\pevolvein{\D{x}=\genDE{x}}{\ivr \land \lnot{\rfvar}}}{\lnot{\rgvar}}$ says $\rgvar$ never happens along the solution while $\lnot{\rfvar}$ holds.
Thus, the solution cannot get to $\rgvar$ unless it gets to $\rfvar$ first.
Rule~\irref{SPc} is a derived ODE liveness proof rule~\cite{DBLP:journals/fac/TanP}.
Its left premise says that as long as the ODE solution has not reached the target $\rfvar$, it stays in the compact staging set $\rsfvar$.
Its right premise says that the value of $\ptermA$ increases as long as solutions stay in $\rsfvar$.
Due to the compactness assumption, $\ptermA$ \emph{cannot} increase indefinitely in $\rsfvar$ and so the ODE solution must eventually leave the staging set by entering the target $\rfvar$.
This rule implicitly proves that the ODE solution exists for sufficient duration to reach $\rfvar$, which is a fundamental requirement for soundness in ODE liveness arguments~\cite{DBLP:journals/fac/TanP}.

\section{Proofs}
\label{app:proofs}

This appendix provides proofs for all lemmas and corollaries from Sections~\ref{sec:asymstability} and~\ref{sec:genstability} using the \dL proof calculus in~\rref{app:proofcalc}.
For ease of reference, this appendix is organized into two sections, corresponding to proofs for~\rref{sec:asymstability} and~\rref{sec:genstability} respectively.
Additional definitions and lemmas omitted in the main paper are provided as required.

\subsection{Proofs for Asymptotic Stability of an Equilibrium Point}
This section concerns stability for the origin, whose $\varepsilon$ neighborhoods $\neighborhood[\varepsilon]{x=0}$ are equivalently unfolded as the formula $\norm{x}^2 < \varepsilon^2$.
The following lemma formalizes the claim in~\rref{subsec:asymstabilitymath} that a point $x_0$ of interest for the ODE $\D{x}=\genDE{x}$ can be rigorously translated \emph{with proof} to the origin so that, without loss of generality, only stability of the origin needs to be considered for the stability proof rules of~\rref{sec:asymstability}.

\begin{lemma}[Translation to origin]
The following axioms are derivable in \dL, where ODE $\D{y}=\genDE{y+x_0}$ has point $x_0$ translated to the origin and variables $y$ are fresh, i.e., not in ODE $\D{x}=\genDE{x}$ or formula $\rfvar(x)$.

\noindent
\begin{calculuscollection}
\begin{calculus}
\dinferenceRule[Trans|Trans]{}
{
\linferenceRule[impl]
  {y=x-x_0}
  {\big(\dbox{\D{x}=\genDE{x}}{\rfvar(x-x_0)} \lbisubjunct \dbox{\D{y}=\genDE{y+x_0}}{\rfvar(y)}\big)}
}{}

\dinferenceRule[TransStab|TransStab]{}
{
\linferenceRule[impl]
  {\stabode{\D{y}=\genDE{y+x_0}} }
  {\stabodePR{\D{x}=\genDE{x}}{x=x_0}{x=x_0} }
}{}
\end{calculus}
\end{calculuscollection}
\label{lem:translation}
\end{lemma}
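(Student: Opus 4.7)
The plan is to prove \irref{Trans} first by a symmetric differential ghost argument, then derive \irref{TransStab} from \irref{Trans} by unfolding the quantifier structure of both stability formulas and rewriting neighborhoods via the translation.

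For \irref{Trans}, the key insight is that under the initial assumption $y = x - x_0$, the translated ODE $\D{y}=\genDE{y+x_0}$ describes the same trajectories as $\D{x}=\genDE{x}$ shifted by $x_0$. To prove the left-to-right direction, I would start from $\dbox{\D{x}=\genDE{x}}{\rfvar(x-x_0)}$ and use axiom \irref{DG} to add the fresh ghost variable $y$ with dynamics $\D{y}=\genDE{y+x_0}$ (existentially quantified over its initial value, then instantiated by $x-x_0$ using the antecedent $y=x-x_0$). Next, I would use \irref{dI} to show that $y-x+x_0 = 0$ is an invariant of the joint system: its Lie derivative is $\genDE{y+x_0} - \genDE{x}$, which is identically zero assuming the invariant $y+x_0=x$; combined with the initial condition, \irref{dIcmp} (or its equational variant) closes this step. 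Having established the invariant, \irref{dC} adds $y = x-x_0$ to the domain constraint, after which the postcondition $\rfvar(x-x_0)$ is provably equivalent to $\rfvar(y)$ by \irref{MbW} and propositional reasoning. Finally, I would strip the now-irrelevant $x$ dynamics by another application of \irref{DG} (or \irref{BDG}), since $x$ no longer appears in the postcondition or the remaining $y$-ODE. The right-to-left direction is entirely symmetric, adding a ghost variable $x$ with dynamics $\D{x}=\genDE{x}$ to the ODE $\D{y}=\genDE{y+x_0}$, initialized to $y+x_0$.

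For \irref{TransStab}, I would simply unfold both stability formulas. Expanding the neighborhood notation by quantifier elimination (as noted in Section 2), $\neighborhood[\delta]{x=x_0}$ is equivalent to $\norm{x-x_0}^2 < \delta^2$ and $\neighborhood[\delta]{x=0}$ (in the $y$-variant) to $\norm{y}^2 < \delta^2$. After unfolding, $\stabodePR{\D{x}=\genDE{x}}{x=x_0}{x=x_0}$ becomes $\lforall{\varepsilon{>}0}{\lexists{\delta{>}0}{\lforall{x}{\big(\norm{x-x_0}^2{<}\delta^2 \limply \dbox{\D{x}=\genDE{x}}{\norm{x-x_0}^2{<}\varepsilon^2}\big)}}}$, and likewise for $\stabode{\D{y}=\genDE{y+x_0}}$ with $y$ in place of $x-x_0$. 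Starting from the antecedent, I would instantiate $\varepsilon$, obtain the corresponding $\delta$ witness, and for a generic $x$ satisfying $\norm{x-x_0}^2 < \delta^2$, instantiate the universal $y$ quantifier in the hypothesis with $x-x_0$ to get $\dbox{\D{y}=\genDE{y+x_0}}{\norm{y}^2<\varepsilon^2}$; applying \irref{Trans} with $\rfvar(y) \mnodefequiv \norm{y}^2 < \varepsilon^2$ and the trivially true antecedent $y = x-x_0$ (after an existential introduction over the fresh $y$) yields $\dbox{\D{x}=\genDE{x}}{\norm{x-x_0}^2<\varepsilon^2}$, closing the proof.

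The main obstacle I anticipate is the bookkeeping around freshness: one must ensure that $y$ does not occur in $\rfvar$, in $\genDE{x}$, or in $x_0$ (as a parametric term) when invoking \irref{DG} and \irref{Trans}, and that the ghost ODE $\D{y}=\genDE{y+x_0}$ has a locally Lipschitz right-hand side so the differential ghost is sound for the required evolution domain. A secondary nuisance is that \irref{DG} introduces an existential $\lexists{y}$ quantifier which must be instantiated with $x-x_0$ precisely using the antecedent $y=x-x_0$ and propagated through the $\lbisubjunct$ without losing soundness when $y$ is bound inside the box modality.
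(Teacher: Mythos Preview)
Your high-level decomposition for \irref{Trans} is the same as the paper's, but there is a real gap in the ghost step. Axiom \irref{DG} only licenses \emph{linear} ghost ODEs of the form $\D{y}=a(x)y+b(x)$; the ghost $\D{y}=\genDE{y+x_0}$ you want to add is in general nonlinear in $y$ (since $f$ is an arbitrary polynomial), so \irref{DG} does not apply. The paper handles this by using \irref{BDG} in both directions, which permits an arbitrary ghost RHS but requires discharging a boundedness premise $\dbox{\progxy}{\,\norm{y}^2 \leq \ptermA(x)}$. Concretely, after cutting the invariant $\dbox{\progxy}{\,y=x-x_0}$ for the combined system $\progxy \mnodefequiv \D{x}=\genDE{x},\D{y}=\genDE{y+x_0}$, the paper discharges \irref{BDG}'s side conditions with the bounds $\norm{y}^2 \leq \norm{x-x_0}^2$ (when adding the $y$-ghost to the antecedent) and $\norm{x}^2 \leq \norm{y+x_0}^2$ (when adding the $x$-ghost to the succedent), both of which follow from the invariant $y=x-x_0$ by \irref{dC}+\irref{dW}. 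Your anticipated ``Lipschitz'' obstacle is pointing at the right worry (ghost blowup) but the wrong fix: in \dL the resolution is syntactic, via \irref{BDG} with an explicit polynomial bound, not a semantic Lipschitz check.

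Your derivation of \irref{TransStab} from \irref{Trans} is essentially the paper's: Skolemize $\varepsilon$, instantiate the stability hypothesis, Skolemize $\delta$, witness the succedent's $\delta$, instantiate the universal $y$ with $x-x_0$, and apply \irref{Trans} to convert $\dbox{\D{y}=\genDE{y+x_0}}{\,\norm{y}^2<\varepsilon^2}$ to $\dbox{\D{x}=\genDE{x}}{\,\norm{x-x_0}^2<\varepsilon^2}$. The only cosmetic difference is that the paper treats the fresh $y$ via \irref{alll} on the quantified hypothesis rather than an explicit existential introduction, but the content is the same.
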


\begin{proof}
Axiom~\irref{Trans} is derived first before axiom~\irref{TransStab} is derived from it as a corollary.
Only the ``$\limply$'' direction of the inner equivalence for axiom~\irref{Trans} is derived since the ``$\lylpmi$'' direction follows from the ``$\limply$'' direction by renaming and translation with respect to ${-}x_0$.

Let $\progxy \mnodefequiv \D{x} = \genDE{x},\D{y} = \genDE{y+x_0}$ abbreviate the combined ODE for variables $x$ and $y$.
The derivation starts with a~\irref{cut} of formula $\dbox{\progxy}{\,y = x-x_0}$, which says the value of $y$ is always equal to $x-x_0$ along solutions of the combined ODE $\progxy$.
This cut is provable in \dL using its complete, derived proof rule for algebraic invariants~\cite{DBLP:journals/jacm/PlatzerT20}.
Subsequently, axiom~\irref{BDG} adds the ghost ODE $\D{y}=\genDE{y+x_0}$ to the antecedent box modality and ODE $\D{x}=\genDE{x}$ to the succedent box modality.
The resulting boundedness premises from the use of~\irref{BDG} are respectively abbreviated \textcircled{1} and \textcircled{2}, and are both proved using the cut antecedent as shown further below.
{\begin{sequentdeduction}[array]
  \linfer[cut]{
  \linfer[BDG]{
  \linfer[BDG]{
    \textcircled{2} \qquad \lsequent{\dbox{\progxy}{y = x - x_0},\dbox{\progxy}{\rfvar(x-x_0)}}{\dbox{\progxy}{\rfvar(y)}}
  }
    {\textcircled{1} \qquad \lsequent{\dbox{\progxy}{y = x - x_0},\dbox{\progxy}{\rfvar(x-x_0)}}{\dbox{\D{y}=\genDE{y+x_0}}{\rfvar(y)}}}
  }
  {\lsequent{\dbox{\progxy}{y = x - x_0},\dbox{\D{x}=\genDE{x}}{\rfvar(x-x_0)}}{\dbox{\D{y}=\genDE{y+x_0}}{\rfvar(y)}}}
  }
  {\lsequent{y=x-x_0,\dbox{\D{x}=\genDE{x}}{\rfvar(x-x_0)}}{\dbox{\D{y}=\genDE{y+x_0}}{\rfvar(y)}}}
\end{sequentdeduction}
}%

From the (unabbreviated) open right premise, a~\irref{dC} step adds formulas $y = x - x_0$ and $\rfvar(x - x_0)$ to the domain constraint of the succedent.
A subsequent~\irref{dW} step completes the proof by substituting $y=x-x_0$.
{\begin{sequentdeduction}[array]
  \linfer[dC]{
  \linfer[dC]{
  \linfer[dW]{
  \linfer[qear]{
    \lclose
  }
    {\lsequent{y=x-x_0 \land \rfvar(x-x_0)}{\rfvar(y)}}
  }
  {\lsequent{}{\dbox{\pevolvein{\progxy}{y=x-x_0 \land \rfvar(x-x_0)}}{\rfvar(y)}}}
  }
  {\lsequent{\dbox{\progxy}{\rfvar(x-x_0)}}{\dbox{\pevolvein{\progxy}{y=x-x_0}}{\rfvar(y)}}}
  }
  {\lsequent{\dbox{\progxy}{y = x - x_0},\dbox{\progxy}{\rfvar(x-x_0)}}{\dbox{\progxy}{\rfvar(y)}}}
\end{sequentdeduction}
}%

For premise \textcircled{1}, the ghost variables $y$ are provably bounded in (squared) norm by the term $\norm{x-x_0}^2$ for ODE $\progxy$.
The~\irref{dC} step adds $y = x - x_0$ from the antecedent box modality to the domain constraint, and the subsequent~\irref{dW} step completes the proof by substituting $y=x-x_0$ and~\irref{qear}.

{\begin{sequentdeduction}[array]
  \linfer[dC]{
  \linfer[dW]{
  \linfer[qear]{
      \lclose
  }
    {\lsequent{y = x - x_0}{\norm{y}^2 \leq \norm{x-x_0}^2}}
  }
    {\lsequent{}{\dbox{\pevolvein{\progxy}{y = x - x_0}}{\norm{y}^2 \leq \norm{x-x_0}^2}}}
  }
  {\lsequent{\dbox{\progxy}{y = x - x_0}}{\dbox{\progxy}{\norm{y}^2 \leq \norm{x-x_0}^2}}}
\end{sequentdeduction}
}%

The derivation for premise \textcircled{2} is similar, where the ghost variables $x$ are provably bounded in (squared) norm by the term $\norm{y+x_0}^2$ for ODE $\progxy$.
{\begin{sequentdeduction}[array]
  \linfer[dC]{
  \linfer[dW]{
  \linfer[qear]{
      \lclose
  }
    {\lsequent{y = x - x_0}{\norm{x}^2 \leq \norm{y+x_0}^2}}
  }
    {\lsequent{}{\dbox{\pevolvein{\progxy}{y = x - x_0}}{\norm{x}^2 \leq \norm{y+x_0}^2}}}
  }
  {\lsequent{\dbox{\progxy}{y = x - x_0}}{\dbox{\progxy}{\norm{x}^2 \leq \norm{y+x_0}^2}}}
\end{sequentdeduction}
}%

The derivation of axiom~\irref{TransStab} starts by Skolemizing $\varepsilon$ in the succedent with~\irref{allr} and then instantiating the antecedent with the resulting fresh Skolem variable $\varepsilon$ using~\irref{alll}.
This is followed by~\irref{existsl+existsr} which Skolemizes $\delta$ in the antecedent and then witnesses the succedent with $\delta$.
Next,~\irref{allr+implyr} Skolemizes the succedent before~\irref{alll} instantiates $y$ in the quantified antecedent, with the translated coordinate $y=x-x_0$.
Formula $y=x-x_0 \land \norm{x-x_0}^2 < \delta^2 \limply \norm{y}^2 < \delta^2$ is provable in real arithmetic, so~\irref{implyl+qear} proves the LHS of the implication ($\norm{y}^2 < \delta^2$) in the antecedent before~\irref{Trans} completes the proof.
The formulas are abbreviated $\rrfvar_y \mnodefequiv \dbox{\D{y}=\genDE{y+x_0}}{\,\norm{y}^2 < \varepsilon^2}$ and $\rrfvar \mnodefequiv \dbox{\D{x}=\genDE{x}}{\,\norm{x-x_0}^2 < \varepsilon^2}$, respectively.
{\begin{sequentdeduction}[array]
  \linfer[allr+alll]{
  \linfer[existsl+existsr]{
  \linfer[allr+implyr]{
  \linfer[alll]{
  \linfer[implyl+qear]{
  \linfer[Trans]{
    \lclose
  }  {\lsequent{y=x-x_0,\rrfvar_y} {\rrfvar}}
  }
    {\lsequent{y=x-x_0, \norm{y}^2 < \delta^2 \limply \rrfvar_y,\norm{x-x_0}^2 < \delta^2 }
      {\rrfvar}}
  }
    {\lsequent{\lforall{y}{\big( \norm{y}^2 < \delta^2 \limply \rrfvar_y \big)},\norm{x-x_0}^2 < \delta^2 } {\rrfvar}}
  }
    {\lsequent{\lforall{y}{\big( \norm{y}^2 < \delta^2 \limply \rrfvar_y \big)}}
      {\lforall{x}{\big( \norm{x-x_0}^2 < \delta^2 \limply \rrfvar \big)}}}
  }
    {\lsequent{\lexists{\delta {>} 0}{\lforall{y}{\big( \norm{y}^2 < \delta^2 \limply \rrfvar_y \big)}}}
      {\lexists{\delta {>} 0}{ \lforall{x}{\big( \norm{x-x_0}^2 < \delta^2 \limply \rrfvar \big)}}}}
  }
  {\lsequent{\stabode{\D{y}=\genDE{y+x_0}} }{\stabodePR{\D{x}=\genDE{x}}{x=x_0}{x=x_0}}}
\\[-\normalbaselineskip]\tag*{\qedhere}
\end{sequentdeduction}
}%

\end{proof}

\begin{proof}[\textbf{Proof of~\rref{lem:asymstabdl}}]
The correctness of these specifications follows directly from the semantics of \dL formulas~\cite{DBLP:journals/jar/Platzer17,Platzer18} because they syntactically express the logical connectives and quantifiers from~\rref{def:asymstabmath} in \dL.
The open neighborhood formulas $\neighborhood[\delta]{x=0}$ and $\neighborhood[\varepsilon]{x=0}$ are true in states where $\norm{x} < \delta$ and $\norm{x} < \varepsilon$ respectively.
The main subtlety is formula $\attrode{\D{x}=\genDE{x}}$ which characterizes the limit $\lim_{t \to T}{x(t) = 0}$ using its subformula $\asymode{\D{x}=\genDE{x}}{x=0}$ as follows.

Unfolding the semantics, formula $\asymode{\D{x}=\genDE{x}}{\rfvar}$ is true in an initial state iff for any $\varepsilon > 0$, the right-maximal ODE solution to $\D{x}=\genDE{x}$ (restricted to variables $x$) denoted $x(t) : [0,T) \to \reals^n$ has a time $\tau \in [0,T)$ where, because of uniqueness of ODE solutions~\cite[Theorem 1.2]{Chicone2006}, for all future times $t$ with $\tau \leq t < T$, the solution at $x(t)$ satisfies formula $\neighborhood[\varepsilon]{\rfvar}$; for $\rfvar \mnodefequiv x=0$, this implies the bound $\norm{x(t)} < \varepsilon$ at those future times, which is the real analytic definition of the limit $\lim_{t \to T}{x(t) = 0}$~\cite[Definition 4.1]{MR0385023}. \qedhere
\end{proof}

\begin{proof}[\textbf{Proof of~\rref{cor:asymstabsimp}}]
A full proof is omitted as axiom~\irref{stabattr} is an instance of the more general axiom~\irref{stabattrgen} derived in~\rref{cor:setstabsimp} (where $\rfvar \mnodefequiv x=0$).
Briefly, the ``$\limply$'' direction of the inner equivalence is valid even without assuming stability because postcondition $\dbox{\D{x}=\genDE{x}}{\,\neighborhood[\varepsilon]{x{=}0}}$ monotonically implies postcondition $\neighborhood[\varepsilon]{x{=}0}$.
The (interesting) ``$\lylpmi$'' direction of the inner equivalence uses the stability assumption by choosing $\delta > 0$ sufficiently small so that solutions reaching $\neighborhood[\delta]{x{=}0}$ must stay in $\neighborhood[\varepsilon]{x{=}0}$ thereafter because of stability.\qedhere
\end{proof}

\begin{proof}[\textbf{Proof of~\rref{lem:lyapunov}}]
Rule~\irref{Lyap} is derived first as it is an important stepping stone in the derivation of rule~\irref{StrictLyap}.

The derivation of rule~\irref{Lyap} begins with a series of arithmetic cuts which are justified stepwise.
For any $\varepsilon > 0$ and an equilibrium point at the origin with $\genDE{0}=0$, the second (right) premise of~\irref{Lyap} can be equivalently strengthened to choose $\gamma \leq \varepsilon$, i.e., the second premise provably implies the following formula in real arithmetic; the universal quantifier on $x$ is also distributed across the inner conjunction as shown in conjuncts \textcircled{a} and \textcircled{b} below:
\[ \lexists{0{<}\gamma{\leq}\varepsilon}{\big(\underbrace{\lforall{x}{ (0 < \norm{x}^2 \leq \gamma^2 \limply \lterm > 0)}}_{\textcircled{a}} \land \underbrace{\lforall{x}{( \norm{x}^2 \leq \gamma^2 \limply \lied[]{\genDE{x}}{\lterm} \leq 0)}}_{\textcircled{b}}\big)} \]

The derivation begins with a~\irref{cut} of this formula and Skolemizing with~\irref{existsl}, yielding antecedents \textcircled{a} and \textcircled{b} as indicated above.
The postcondition is then monotonically strengthened to $\norm{x}^2 < \gamma^2$ using antecedent $\gamma \leq \varepsilon$.
{\begin{sequentdeduction}[array]
  \linfer[allr]{
  \linfer[cut+existsl]{
  \linfer[MbW]{
    \lsequent{\gamma > 0, \textcircled{a}, \textcircled{b}}{\lexists{\delta {>} 0} {\lforall{x}{\big( \norm{x}^2 < \delta^2 \limply \dbox{\D{x}=\genDE{x}}{ \,\norm{x}^2 < \gamma^2}\big)}}}
  }
    {\lsequent{\gamma>0, \gamma \leq \varepsilon, \textcircled{a}, \textcircled{b}}{\lexists{\delta {>} 0} {\lforall{x}{\big( \norm{x}^2 < \delta^2 \limply \dbox{\D{x}=\genDE{x}}{ \,\norm{x}^2 < \varepsilon^2}\big)}}}}
  }
  {\lsequent{\varepsilon > 0}{\lexists{\delta {>} 0} {\lforall{x}{\big( \norm{x}^2 < \delta^2 \limply \dbox{\D{x}=\genDE{x}}{ \,\norm{x}^2 < \varepsilon^2}\big)}}}}}
  {\lsequent{}{\stabode{\D{x}=\genDE{x}}}}
\end{sequentdeduction}
}%

From \textcircled{a}, the continuous Lyapunov function $\lterm$ is positive on the compact set characterized by $\norm{x}^2 = \gamma^2$ and therefore is bounded below by its minimum $k > 0$ on that set.
Furthermore, from premise $\lterm(0)=0$, by continuity, $\lterm$ must take values smaller than $k$ in a ball with sufficiently small radius $0 < \delta < \gamma$ around the origin.
Thus, the following formula proves in real arithmetic from \textcircled{a}.
\[ \lexists{k}{\Big( \underbrace{\lforall{x}{\big( \norm{x}^2 = \gamma^2 \limply \lterm \geq k \big)}}_{\textcircled{c}} \land~\lexists{0 {<} \delta {<} \gamma}{\underbrace{\lforall{x}{\big( \norm{x}^2 < \delta^2 \limply \lterm < k\big)}}_{\textcircled{d}}} \Big)} \]

The proof continues with a~\irref{cut} of the formula and Skolemizing the resulting antecedent with~\irref{existsl}.
{\begin{sequentdeduction}[array]
  \linfer[cut+qear+existsl]{
    \lsequent{0 {<} \delta {<} \gamma, \textcircled{b}, \textcircled{c}, \textcircled{d}}{\lexists{\delta {>} 0} {\lforall{x}{\big( \norm{x}^2 < \delta^2 \limply \dbox{\D{x}=\genDE{x}}{ \,\norm{x}^2 < \gamma^2}\big)}}}
  }
  {\lsequent{\gamma> 0, \textcircled{a}, \textcircled{b}}{\lexists{\delta {>} 0} {\lforall{x}{\big( \norm{x}^2 < \delta^2 \limply \dbox{\D{x}=\genDE{x}}{ \,\norm{x}^2 < \gamma^2}\big)}}}
}
\end{sequentdeduction}
}%

The succedent existential quantifier $\exists{\delta{>}0}$ is instantiated with the antecedent's $\delta$ using~\irref{existsr}, followed by simplification steps, Skolemizing and unfolding the succedent with~\irref{allr+implyr} then instantiating \textcircled{d} by \irref{alll+implyl} with the resulting $\norm{x}^2 < \delta^2$ assumption.
{\begin{sequentdeduction}[array]
  \linfer[existsr]{
  \linfer[allr+implyr]{
  \linfer[alll+implyl]{
    \lsequent{\delta {<} \gamma, \textcircled{b}, \textcircled{c}, \norm{x}^2 < \delta^2, v < k}{\dbox{\D{x}=\genDE{x}}{ \,\norm{x}^2 < \gamma^2}}
  }
    {\lsequent{\delta {<} \gamma, \textcircled{b}, \textcircled{c}, \textcircled{d}, \norm{x}^2 < \delta^2}{\dbox{\D{x}=\genDE{x}}{ \,\norm{x}^2 < \gamma^2}}}
  }
    {\lsequent{\delta {<} \gamma, \textcircled{b}, \textcircled{c}, \textcircled{d}}{\lforall{x}{\big( \norm{x}^2 < \delta^2 \limply \dbox{\D{x}=\genDE{x}}{ \,\norm{x}^2 < \gamma^2}\big)}}}
  }
  {\lsequent{0 {<} \delta {<} \gamma, \textcircled{b}, \textcircled{c}, \textcircled{d}}{\lexists{\delta {>} 0} {\lforall{x}{\big( \norm{x}^2 < \delta^2 \limply \dbox{\D{x}=\genDE{x}}{ \,\norm{x}^2 < \gamma^2}\big)}}}}
\end{sequentdeduction}
}%

Since formula $\norm{x}^2 < \gamma^2$ characterizes an open ball and $\delta < \gamma$ so antecedent $\norm{x}^2 < \delta^2$ implies $\norm{x}^2 < \gamma^2$ arithmetically, rule~\irref{Enc} is used to assume its closure $\norm{x}^2 \leq \gamma^2$ in the domain constraint of the succedent ODE.
With the strengthened domain,~\irref{dC} adds $\lied[]{\genDE{x}}{\lterm} \leq 0$ to the domain constraint using \textcircled{b}, which is universally quantified over $x$.
{\begin{sequentdeduction}[array]
  \linfer[Enc]{
  \linfer[dC]{
    \lsequent{\textcircled{c}, v < k}{\dbox{\pevolvein{\D{x}=\genDE{x}}{\norm{x}^2 \leq \gamma^2 \land \lied[]{\genDE{x}}{\lterm} \leq 0}}{\norm{x}^2 < \gamma^2}}
  }
    {\lsequent{\textcircled{b}, \textcircled{c}, v < k}{\dbox{\pevolvein{\D{x}=\genDE{x}}{\norm{x}^2 \leq \gamma^2}}{\norm{x}^2 < \gamma^2}}}
  }
  {\lsequent{\delta {<} \gamma, \textcircled{b}, \textcircled{c}, \norm{x}^2 < \delta^2, v < k}{\dbox{\D{x}=\genDE{x}}{ \,\norm{x}^2 < \gamma^2}}
}
\end{sequentdeduction}
}%

The proof continues using~\irref{dC} to add invariant $v < k$ to the domain constraint; this differential cut proves by~\irref{dIcmp} using conjunct $\lied[]{\genDE{x}}{\lterm} \leq 0$ in the domain constraint.
The subsequent~\irref{dC} step adds $\norm{x}^2 \neq \gamma^2$ to the domain constraint using the contrapositive direction of the universally quantified antecedent \textcircled{c}.
Finally, a~\irref{dW} step completes the proof since conjuncts $\norm{x}^2 \leq \gamma^2$ and $\norm{x}^2 \neq \gamma^2$ in the resulting domain constraint imply the postcondition $\norm{x}^2 < \gamma^2$ by~\irref{qear}.

{\begin{sequentdeduction}[array]
  \linfer[dIcmp+dC]{
  \linfer[dC]{
  \linfer[dW]{
  \linfer[qear]{
    \lclose
  }
    {\lsequent{\norm{x}^2 \leq \gamma^2, \norm{x}^2 \neq \gamma^2 }{\norm{x}^2 < \gamma^2}}
  }
    {\lsequent{}{\dbox{\pevolvein{\D{x}=\genDE{x}}{\norm{x}^2 \leq \gamma^2  \land \cdots \land \norm{x}^2 \neq \gamma^2 }}{\norm{x}^2 < \gamma^2}}}
  }
    {\lsequent{\textcircled{c}}{\dbox{\pevolvein{\D{x}=\genDE{x}}{\norm{x}^2 \leq \gamma^2 \land \lied[]{\genDE{x}}{\lterm} \leq 0 \land v < k}}{\norm{x}^2 < \gamma^2}}}
  }
    {\lsequent{\textcircled{c}, v < k}{\dbox{\pevolvein{\D{x}=\genDE{x}}{\norm{x}^2 \leq \gamma^2 \land \lied[]{\genDE{x}}{\lterm} \leq 0}}{\norm{x}^2 < \gamma^2}}}
\end{sequentdeduction}
}%

The derivation of rule~\irref{StrictLyap} starts with a~\irref{cut} that proves stability using rule~\irref{Lyap} because the premises of~\irref{StrictLyap} are identical to those of~\irref{Lyap} except for a strict inequality requirement on the Lie derivative of $\lterm$.
The right conjunct of the right premise of~\irref{StrictLyap} is~\irref{cut} and Skolemized with~\irref{existsl}; the resulting antecedent is abbreviated with $\textcircled{e} \mnodefequiv \lforall{x}{\big(0 < \norm{x}^2 \leq \gamma^2 \limply \lied[]{\genDE{x}}{\lterm} < 0\big)}$ below.
Next, instantiating $\varepsilon \mnodefeq \gamma$ in the stability antecedent with~\irref{alll} and Skolemizing yields an initial disturbance $\delta > 0$ so that the ODE solution from states satisfying $\norm{x}^2 < \delta^2$ always stay within the $\gamma$ ball $\norm{x}^2 < \gamma^2$.
The resulting antecedent is abbreviated with $\textcircled{f} \mnodefequiv \lforall{x}{\big( \norm{x}^2 < \delta^2 \limply \dbox{\D{x}=\genDE{x}}{ \,\norm{x}^2 < \gamma^2}\big)}$.
{\begin{sequentdeduction}[array]
  \linfer[cut+Lyap]{
  \linfer[cut+existsl]{
  \linfer[alll+existsl]{
    \lsequent{\stabode{\D{x}=\genDE{x}}, \textcircled{e}, \delta > 0, \textcircled{f} }{\attrode{\D{x}=\genDE{x}}}
  }
    {\lsequent{\stabode{\D{x}=\genDE{x}}, \gamma > 0, \textcircled{e} }{\attrode{\D{x}=\genDE{x}}}}
  }
    {\lsequent{\stabode{\D{x}=\genDE{x}}}{\attrode{\D{x}=\genDE{x}}}}
  }
  {\lsequent{}{\astabode{\D{x}=\genDE{x}}}}
\end{sequentdeduction}
}%

The existential quantifier in the succedent is witnessed by $\delta$ with~\irref{existsr} and the resulting sequent is simplified by Skolemization and instantiation of \textcircled{f} (similar to~\irref{Lyap}) before axiom~\irref{stabattr} is used to further simplify the succedent using the stability assumption.
{\begin{sequentdeduction}[array]
  \linfer[existsr]{
  \linfer[allr+implyr]{
  \linfer[alll+implyl]{
  \linfer[stabattr]{
  \linfer[allr]{
    \lsequent{\textcircled{e},\dbox{\D{x}=\genDE{x}}{ \,\norm{x}^2 < \gamma^2},\varepsilon {>} 0}{ \ddiamond{\D{x}=\genDE{x}}{\norm{x}^2 < \varepsilon^2}}
  }
    {\lsequent{\textcircled{e},\dbox{\D{x}=\genDE{x}}{ \,\norm{x}^2 < \gamma^2} }{ \lforall{\varepsilon {>} 0}{\ddiamond{\D{x}=\genDE{x}}{\norm{x}^2 < \varepsilon^2}}}}
  }
    {\lsequent{\stabode{\D{x}=\genDE{x}}, \textcircled{e},\dbox{\D{x}=\genDE{x}}{ \,\norm{x}^2 < \gamma^2} }{\asymode{\D{x}=\genDE{x}}{x=0}}}
  }
    {\lsequent{\stabode{\D{x}=\genDE{x}}, \textcircled{e}, \textcircled{f}, \norm{x}^2 < \delta^2 }{ \asymode{\D{x}=\genDE{x}}{x=0}}}
  }
    {\lsequent{\stabode{\D{x}=\genDE{x}}, \textcircled{e}, \textcircled{f} }{ \lforall{x}{\big( \norm{x}^2 < \delta^2 \limply \asymode{\D{x}=\genDE{x}}{x=0}\big)}}}
  }
    {\lsequent{\stabode{\D{x}=\genDE{x}}, \textcircled{e}, \delta > 0, \textcircled{f} }{\attrode{\D{x}=\genDE{x}}}}
\end{sequentdeduction}
}%

The remaining open premise is a liveness property which is proved using rule~\irref{SPc} with the choice of compact staging set $\rsfvar \mnodefequiv \varepsilon^2 \leq \norm{x}^2 \leq \gamma^2$ and $\ptermA \mnodefequiv \lterm$.
{\begin{sequentdeduction}[array]
  \linfer[SPc]{
    \linfer[dC]{
    \linfer[dW+qear]{
      \lclose
    }
    {\lsequent{}{\dbox{\pevolvein{\D{x}=\genDE{x}}{\norm{x}^2 \geq \varepsilon^2 \land \norm{x}^2 {<} \gamma^2}}{\rsfvar}}}
    }
    {\lsequent{\dbox{\D{x}=\genDE{x}}{ \,\norm{x}^2 {<} \gamma^2}}{\dbox{\pevolvein{\D{x}=\genDE{x}}{\norm{x}^2 \geq \varepsilon^2}}{\rsfvar}}} !
    \linfer[]{
      \lclose
    }
    {\lsequent{\textcircled{e}, \varepsilon{>} 0, \rsfvar}{ \lied[]{\genDE{x}}{\lterm} < 0}}
  }
    {\lsequent{\textcircled{e},\dbox{\D{x}=\genDE{x}}{ \,\norm{x}^2 {<} \gamma^2},\varepsilon {>} 0}{ \ddiamond{\D{x}=\genDE{x}}{\norm{x}^2 {<} \varepsilon^2}}}
\end{sequentdeduction}
}%

The left premise proves with a differential cut~\irref{dC} of the antecedent and~\irref{dW+qear} from the resulting strengthened domain constraint $\norm{x}^2 \geq \varepsilon^2 \land \norm{x}^2 {<} \gamma^2$.
The right premise proves using \textcircled{e} with antecedents $\rsfvar$ and $\varepsilon {>} 0$ to prove its implication LHS.
\qedhere
\end{proof}

\begin{proof}[\textbf{Proof of~\rref{lem:expstabdl}}]
The quantifiers $\lexists{\alpha{>}0}{\lexists{\beta {>} 0}{\lexists{\delta {>} 0} \lforall{x}{\big(\neighborhood[\delta]{x=0} \limply \cdots\big)}}}$ syntactically express the respective quantifiers in the definition of exponential stability from~\rref{def:expstab} in \dL.
For an initial state satisfying $\neighborhood[\delta]{x=0}$, i.e., with $\norm{x(0)} < \delta$, the assignment $\pumod{y}{\alpha^2\norm{x}^2}$ sets the initial value of fresh variable $y$ (before the ODE) to $\alpha^2\norm{x(0)}^2$.
Let $x(t) : [0,T) \to \reals^n$ and $y(t) : [0,T) \to \reals$ respectively be the $x$ and $y$ projections of the unique, right-maximal solution of the ODE $\D{x}=\genDE{x},\D{y}=-2\beta y$.
By construction, the unique ODE solution for the $y$-coordinates is $y(t) = \alpha^2\norm{x(0)}^2\exp{(-2\beta t)}$, so the postcondition $\norm{x}^2 \leq y$ of the box modality expresses that for all times $0 \leq t < T$, $\norm{x(t)}^2 \leq  \alpha^2\norm{x(0)}^2\exp{(-2\beta t)}$ or, equivalently, $\norm{x(t)} \leq \alpha\norm{x(0)}\exp{(-\beta t)}$, as required. \qedhere
\end{proof}

\begin{proof}[\textbf{Proof of~\rref{lem:expstablyap}}]
The proof starts by instantiating the existentially quantified variables $\alpha,\beta$ in $\expstabode{\D{x}=\genDE{x}}$ with $\alpha \mnodefeq \frac{k_2}{k_1}$ and decay rate $\beta \mnodefeq k_3$.
Since $k_1,k_2,k_3$ are all positive constants, these choices satisfy $\alpha > 0, \beta > 0$.
{\begin{sequentdeduction}[array]
  \linfer[existsr]{
    \lsequent{}{\lexists{\delta {>} 0}{ \lforall{x}{\big(\norm{x}^2 < \delta^2 \limply \dbox{\pumod{y}{(\frac{k_2}{k_1})^2\norm{x}^2} ; \D{x}=\genDE{x},\D{y}=-2 k_3 y}{\,\norm{x}^2 \leq y} \big)}}}
  }
  {\lsequent{}{\expstabode{\D{x}=\genDE{x}}}}
\end{sequentdeduction}
}%

The subsequent~\irref{cut} step adds the premise of rule~\irref{ExpLyap} to the antecedents and Skolemizes it with~\irref{existsl}.
The resulting antecedent is abbreviated with $\textcircled{a} \mnodefequiv \lforall{x}{\big(\norm{x}^2 \leq \gamma^2 \limply  k_1^2 \norm{x}^2 \leq \lterm \leq k_2^2 \norm{x}^2 \land \lied[]{\genDE{x}}{\lterm} \leq -2 k_3 \lterm)}$.
Then~\irref{existsr} instantiates the succedent with $\delta \mnodefeq \frac{k_1}{k_2}\gamma$ and the sequent is propositionally simplified.
Note \textcircled{a} also implies $k_1 \leq k_2$ as $k_1, k_2 > 0$ so $\delta \leq \gamma$ and $\norm{x}^2 < \delta^2$ implies $\norm{x}^2 < \gamma$ in real arithmetic.
The hybrid program $\pumod{y}{(\frac{k_2}{k_1})^2\norm{x}^2} ; \D{x}=\genDE{x},\D{y}=-2 k_3 y$ is abbreviated with $\cdots$ in the first three steps.
{\begin{sequentdeduction}[array]
  \linfer[cut+existsl]{
  \linfer[existsr]{
  \linfer[implyr+implyl]{
    \lsequent{\gamma > 0, \textcircled{a}, \norm{x}^2 < (\frac{k_1}{k_2}\gamma)^2 }{\dbox{\pumod{y}{(\frac{k_2}{k_1})^2\norm{x}^2} ; \D{x}=\genDE{x},\D{y}=-2 k_3 y}{\,\norm{x}^2 \leq y} }
  }
    {\lsequent{\gamma > 0, \textcircled{a}}{ \lforall{x}{\big(\norm{x}^2 < (\frac{k_1}{k_2}\gamma)^2 \limply \dbox{\cdots}{\,\norm{x}^2 \leq y} \big)}}}
  }
    {\lsequent{\gamma > 0, \textcircled{a}}{\lexists{\delta {>} 0}{\lforall{x}{\big(\norm{x^2} < \delta^2 \limply \dbox{\cdots}{\,\norm{x}^2 \leq y} \big)}}}}
  }
    {\lsequent{}{\lexists{\delta {>} 0}{ \lforall{x}{\big(\norm{x^2} < \delta^2 \limply \dbox{\cdots}{\,\norm{x}^2 \leq y} \big)}}}}
\end{sequentdeduction}
}%

\irlabel{assignb|$\dibox{:=}$}

The discrete assignment $\pumod{y}{(\frac{k_2}{k_1})^2\norm{x}^2}$ sets the value of variable $y$ to $(\frac{k_2}{k_1})^2\norm{x}^2$ initially.
It is turned into an equational assumption with the assignment axiom~\irref{assignb} of \dL as follows (the axiom is omitted but can be found in the literature~\cite{DBLP:journals/jar/Platzer17,Platzer18}).
{\begin{sequentdeduction}
  \linfer[assignb]{
    \lsequent{\gamma > 0, \textcircled{a}, \norm{x}^2 < (\frac{k_1}{k_2}\gamma)^2, y=(\frac{k_2}{k_1})^2\norm{x}^2}{\dbox{\D{x}=\genDE{x},\D{y}=-2 k_3 y}{\,\norm{x}^2 \leq y} }
  }
    {\lsequent{\gamma > 0, \textcircled{a}, \norm{x}^2 < (\frac{k_1}{k_2}\gamma)^2 }{\dbox{\pumod{y}{(\frac{k_2}{k_1})^2\norm{x}^2} ; \D{x}=\genDE{x},\D{y}=-2 k_3 y}{\,\norm{x}^2 \leq y} }}
\end{sequentdeduction}
}%

The antecedents are abbreviated $\Gamma \mnodefequiv \gamma > 0, \textcircled{a},\norm{x}^2 < (\frac{k_1}{k_2}\gamma)^2, y = (\frac{k_2}{k_1})^2\norm{x}^2$.
The derivation continues with a differential cut~\irref{dC} adding formula $\norm{x}^2 < \gamma^2$ to the domain constraint.
This cut is abbreviated \textcircled{1} and proved further below.
The next differential cut~\irref{dC} adds formula $v \leq k_1^2 y$ to the domain constraint.
This cut is abbreviated \textcircled{2} and also proved further below.
The derivation is completed with a~\irref{dW+qear} step with the quantified antecedent \textcircled{a} and the domain constraint, since they imply the chain of inequalities $k_1^2 \norm{x}^2 \leq \lterm \leq k_1^2 y$, which implies the succedent (after~\irref{dW}) $\norm{x}^2 \leq y$ by~\irref{qear}.
{\begin{sequentdeduction}[array]
\linfer[dC]{
\linfer[dC]{
\linfer[dW]{
\linfer[qear]{
  \lclose
}
  {\lsequent{\textcircled{a},\norm{x}^2 < \gamma^2, v \leq k_1^2 y} {\norm{x}^2 \leq y}}
}
  {\textcircled{2} \qquad \lsequent{\Gamma} {\dbox{\pevolvein{\D{x}=\genDE{x},\D{y}=-2 k_3 y}{\norm{x}^2 < \gamma^2 \land v \leq k_1^2 y}}{\norm{x}^2 \leq y}}}
}
   {\textcircled{1} \qquad \lsequent{\Gamma} {\dbox{\pevolvein{\D{x}=\genDE{x},\D{y}=-2 k_3 y}{\norm{x}^2 < \gamma^2}}{\norm{x}^2 \leq y}}}
  }
    {\lsequent{\Gamma} {\dbox{\D{x}=\genDE{x},\D{y}=-2 k_3 y}{\norm{x}^2 \leq y}}}
\end{sequentdeduction}
}%

Returning to premise~\textcircled{1}, the derivation uses~\irref{Enc} to assume $\norm{x}^2 \leq \gamma^2$ in the domain constraint, since $\norm{x}^2 < \gamma^2$ is true initially.
Then, a~\irref{dC+dIcmp} step adds formula $\lterm < k_1^2 \gamma^2$ to the domain constraint.
This formula is proved true initially by~\irref{qear} with antecedents $\Gamma$ using the chain of inequalities from \textcircled{a}, $\lterm \leq k_2^2 \norm{x}^2 < k_2^2 \big(\frac{k_1}{k_2} \gamma\big)^2 = k_1^2 \gamma^2$.
It is proved invariant by~\irref{dIcmp} because domain constraint $\norm{x}^2 \leq \gamma^2$ and quantified antecedent \textcircled{a} proves the chain of inequalities $\lied[]{\genDE{x}}{\lterm} \leq {-}2 k_3 \lterm \leq -2 k_3 (k_1^2 \norm{x}^2) \leq 0$.
A~\irref{dW+qear} step completes the proof because the domain constraint $\norm{x}^2 \leq \gamma^2 \land \lterm < k_1^2 \gamma^2$ and quantified antecedent \textcircled{a} prove the chain of inequalities $k_1^2\norm{x}^2 \leq \lterm < k_1^2 \gamma^2$, which implies the succedent (after~\irref{dW}) $\norm{x}^2 < \gamma^2$ by~\irref{qear}.
{\begin{sequentdeduction}[array]
\linfer[Enc]{
\linfer[dC+dIcmp]{
\linfer[dW+qear]{
  \lclose
}
  {\lsequent{\Gamma} {\dbox{\pevolvein{\D{x}=\genDE{x},\D{y}=-2 k_3 y}{\norm{x}^2 \leq \gamma^2\land \lterm < k_1^2 \gamma^2}}{\norm{x}^2 < \gamma^2 }}}
}
  {\lsequent{\Gamma} {\dbox{\pevolvein{\D{x}=\genDE{x},\D{y}=-2 k_3 y}{\norm{x}^2 \leq \gamma^2 }}{\norm{x}^2 < \gamma^2 }}}
  }
    {\lsequent{\Gamma} {\dbox{\D{x}=\genDE{x},\D{y}=-2 k_3 y}{\norm{x}^2 < \gamma^2 }}}
\end{sequentdeduction}
}%

For premise~\textcircled{2}, the inequality $k_1^2 y - \lterm \geq 0$ is proved invariant using rule~\irref{dbxineq} with cofactor $\cofterm \mnodefeq -2k_3$ as follows.
{\begin{sequentdeduction}[array]
\linfer[cut+MbW]{
\linfer[dbxineq]{
  \linfer[qear]{
    \lclose
  }
  {\lsequent{\Gamma} {k_1^2 y - \lterm \geq 0}} !
  \linfer[qear]{
    \lclose
  }
  {\lsequent{\textcircled{a}, \norm{x}^2 < \gamma^2}{ -2k_1^2k_3y - \lied[]{\genDE{x}}{\lterm} \geq -2k_3 (k_1^2 y - v)}}
}
  {\lsequent{\Gamma} {\dbox{\pevolvein{\D{x}=\genDE{x},\D{y}=-2 k_3 y}{\norm{x}^2 < \gamma^2}}{k_1^2 y - \lterm \geq 0}}}
}
  {\lsequent{\Gamma} {\dbox{\pevolvein{\D{x}=\genDE{x},\D{y}=-2 k_3 y}{\norm{x}^2 < \gamma^2}}{v \leq k_1^2 y}}}
\end{sequentdeduction}
}%

The resulting left premise proves by~\irref{qear} because the antecedents \textcircled{a} and $y = (\frac{k_2}{k_1})^2\norm{x}^2$ in $\Gamma$ prove the chain of inequalities $\lterm \leq k_2^2 \norm{x}^2 = k_1^2 y$.
For the resulting right premise, the Lie derivative of $k_1^2y-v$ from the LHS of the postcondition is $-2k_1^2k_3y - \lied[]{\genDE{x}}{\lterm}$.
With domain constraint $\norm{x}^2 < \gamma^2$ and quantified antecedent \textcircled{a}, this derivative provably satisfies the chain of inequalities $-2k_1^2k_3y - \lied[]{\genDE{x}}{\lterm} \geq -2k_1^2k_3y + 2k_3\lterm = -2k_3(k_1^2y-\lterm)$ by~\irref{qear}.
\qedhere
\end{proof}

The following global stability definitions for an equilibrium point are motivated in~\rref{subsec:asymstabvar}.
They are used in the proof of~\rref{lem:globstabdl}.

\begin{definition}[Global stability~\cite{10.2307/j.ctvcm4hws,MR1201326,MR0450715}]
The origin $0 \in \reals^n$ of ODE $\D{x}=\genDE{x}$ is \textbf{globally asymptotically stable} if it is stable and its region of attraction is the entire state space, i.e., for all $x=x(0) \in \reals^n$, the right-maximal ODE solution $x(t) : [0,T) \to \reals^n$ satisfies $\lim_{t \to T}{x(t) = 0}$.
The origin is \textbf{globally exponentially stable} if there are positive constants $\alpha, \beta>0$ such that for all initial states $x=x(0) \in \reals^n$, the right-maximal ODE solution $x(t) : [0,T) \to \reals^n$ satisfies $\norm{x(t)} \leq \alpha\norm{x(0)}\exp{(-\beta t)}$ for all times $0 \leq t < T$.
\end{definition}

\begin{proof}[\textbf{Proof of~\rref{lem:globstabdl}}]
The proof is identical to Lemmas~\ref{lem:asymstabdl} and~\ref{lem:expstabdl}, respectively, except the existential quantification over a local neighborhood of the origin $\lexists{\delta > 0}{\lforall{x}{(\neighborhood[\delta]{x=0} \limply \cdots)}}$ is replaced by universal quantification over all initial states (where $\rfvar \mnodefequiv \ltrue$), i.e., $\lforall{x}{(\ltrue \limply \cdots)}$, as required by the global stability definitions. \qedhere
\end{proof}

\begin{proof}[\textbf{Proof of~\rref{lem:globstablyap}}]
The rules are derived in order starting with rule~\irref{StrictLyapGlob}.
First, observe that the first two premises of rule~\irref{StrictLyapGlob} imply the premises of~\irref{Lyap} because if the sign conditions on $\lterm$ and $\lied[]{\genDE{x}}{\lterm}$ are true globally, then they also hold for any choice of neighborhood of the origin.
Thus, the derivation starts with a~\irref{cut+Lyap} step which proves stability of the origin.
Next, the definition of $\attrodeP{\D{x}=\genDE{x}}{\ltrue}$ is logically unfolded and axiom~\irref{stabattr} is used to simplify the succedent, together with unfolding steps~\irref{allr+implyr}.
{\begin{sequentdeduction}[array]
\linfer[cut+Lyap]{
  \linfer[allr+implyr]{
  \linfer[stabattr]{
  \linfer[allr+implyr]{
    \lsequent{\varepsilon{>}0} {\ddiamond{\D{x}=\genDE{x}}{\norm{x}^2 < \varepsilon^2}}
  }
    {\lsequent{} {\lforall{\varepsilon{>}0}{\ddiamond{\D{x}=\genDE{x}}{\norm{x}^2 < \varepsilon^2}}}}
  }
    {\lsequent{\stabode{\D{x}=\genDE{x}}} {\asymode{\D{x}=\genDE{x}}{\ltrue} }}
  }
    {\lsequent{\stabode{\D{x}=\genDE{x}}} {\attrodeP{\D{x}=\genDE{x}}{\ltrue} }}
  }
    {\lsequent{} {\stabode{\D{x}=\genDE{x}} \land \attrodeP{\D{x}=\genDE{x}}{\ltrue} }}
\end{sequentdeduction}
}%

The derivation continues with a~\irref{cut+qear+existsl} step, introducing a fresh Skolem variable $b$ which stores the initial value of the Lyapunov function $\lterm$.
Next, a~\irref{cut} adds the box modality formula $\dbox{\D{x}=\genDE{x}}{\,\lterm \leq b}$ to the antecedents.
This cut proves by~\irref{dIcmp} because formula $\lterm \leq b$ is true initially and the premises of rule~\irref{StrictLyapGlob} prove the formula $\lied[]{\genDE{x}}{\lterm} \leq 0$.\footnote{When $x=0$, the premise $f(0)=0$ implies $\lied[]{\genDE{x}}{\lterm} = 0$.}
The subsequent~\irref{MbW} step strengthens the postcondition $\lterm \leq b$ of the antecedent box modality to $\norm{x}^2 < \gamma^2$ using the (Skolemized) rightmost premise of rule~\irref{StrictLyapGlob} (i.e., $\lterm$ is radially unbounded~\cite{10.2307/j.ctvcm4hws,MR1201326}).
{\begin{sequentdeduction}[array]
\linfer[cut+qear+existsl]{
\linfer[cut+dIcmp]{
\linfer[MbW]{
  \lsequent{\varepsilon{>}0, \dbox{\D{x}=\genDE{x}}{\norm{x}^2 < \gamma^2}}{\ddiamond{\D{x}=\genDE{x}}{\norm{x}^2 < \varepsilon^2}}
}
  {\lsequent{\varepsilon{>}0, \dbox{\D{x}=\genDE{x}}{\lterm \leq b}}{\ddiamond{\D{x}=\genDE{x}}{\norm{x}^2 < \varepsilon^2}}}
  }
  {\lsequent{\varepsilon{>}0, v=b}{\ddiamond{\D{x}=\genDE{x}}{\norm{x}^2 < \varepsilon^2}}}
  }
  {\lsequent{\varepsilon{>}0}{\ddiamond{\D{x}=\genDE{x}}{\norm{x}^2 < \varepsilon^2}}}
\end{sequentdeduction}
}%

Like the derivation of rule~\irref{StrictLyap} in~\rref{lem:lyapunov}, the remaining open premise is an ODE liveness property which is proved using rule~\irref{SPc} with the choice of compact staging set $\rsfvar \mnodefequiv \varepsilon^2 \leq \norm{x}^2 \leq \gamma^2$ and $\ptermA \mnodefequiv \lterm$.
The resulting left premise proves with a differential cut~\irref{dC} of the antecedent and~\irref{dW}.
The resulting right premise proves using~\irref{qear} from the middle premise of rule~\irref{StrictLyapGlob} and antecedents $\varepsilon{>}0$, $\rsfvar$.
{\begin{sequentdeduction}[array]
  \linfer[SPc]{
    \linfer[dC+dW]{
      \lclose
    }
    {\lsequent{\dbox{\D{x}=\genDE{x}}{ \,\norm{x}^2 {<} \gamma^2}}{\dbox{\pevolvein{\D{x}=\genDE{x}}{\norm{x}^2 \geq \varepsilon^2}}{\rsfvar}}} !
    \linfer[qear]{
      \lclose
    }
    {\lsequent{\varepsilon{>} 0, \rsfvar}{ \lied[]{\genDE{x}}{\lterm} < 0}}
  }
    {\lsequent{\varepsilon{>}0, \dbox{\D{x}=\genDE{x}}{\norm{x}^2 {<} \gamma^2}}{\ddiamond{\D{x}=\genDE{x}}{\norm{x}^2 {<} \varepsilon^2}}
}
\end{sequentdeduction}
}%

The derivation of rule~\irref{ExpLyapGlob} is similar to the derivation of rule~\irref{ExpLyap} from~\rref{lem:expstablyap}. The proof steps are repeated briefly.
The derivation starts by instantiating (using \irref{existsr}) the existentially quantified variables in succedent $\expstabodeP{\D{x}=\genDE{x}}{\ltrue}$ with $\alpha \mnodefeq \frac{k_2}{k_1}$ and decay rate $\beta \mnodefeq k_3$, followed by logical unfolding of the sequent.
{\begin{sequentdeduction}[array]
\linfer[existsr]{
\linfer[allr+implyr]{
\linfer[assignb]{
  \lsequent{y{=}(\frac{k_2}{k_1})^2\norm{x}^2}{\dbox{\D{x}{=}\genDE{x},\D{y}{=}-2k_3 y}{\,\norm{x}^2 \leq y}}
}
  {\lsequent{}{\dbox{\pumod{y}{(\frac{k_2}{k_1})^2\norm{x}^2} ; \D{x}{=}\genDE{x},\D{y}{=}-2k_3 y}{\,\norm{x}^2 \leq y}}}
}
  {\lsequent{} {\lforall{x}{\big(\ltrue \limply \dbox{\pumod{y}{(\frac{k_2}{k_1})^2\norm{x}^2} ; \D{x}{=}\genDE{x},\D{y}{=}-2k_3 y}{\,\norm{x}^2 \leq y}\big)}}}
  }
    {\lsequent{} {\expstabodeP{\D{x}{=}\genDE{x}}{\ltrue} }}
\end{sequentdeduction}
}%

The proof continues with a differential cut of the formula $\lterm \leq k_1^2 y$, which is proved invariant with its equivalent rephrasing $k_1^2 y - \lterm \geq 0$ and rule~\irref{dbxineq} using cofactor $\cofterm \mnodefeq -2k_3$.
Similar to~\rref{lem:expstablyap}, the cut formula is true initially because the antecedent $y=(\frac{k_2}{k_1})^2\norm{x}^2$ and the premise of~\irref{ExpLyapGlob} imply the chain of inequalities $\lterm \leq k_2^2 \norm{x}^2 = k_1^2 y$.
The premise of~\irref{ExpLyapGlob} are also used to show that the Lie derivative of $k_1^2 y - \lterm$ provably satisfies the chain of inequalities $-2k_1^2k_3y - \lied[]{\genDE{x}}{\lterm} \geq -2k_1^2k_3y + 2k_3\lterm = -2k_3(k_1^2y-\lterm)$.
The proof is completed by~\irref{dW+qear} using the premise of rule~\irref{ExpLyapGlob}.
{\begin{sequentdeduction}[array]
\linfer[dC]{
\linfer[dW]{
\linfer[qear]{
\linfer[qear]{
  \lclose
}
  {\lsequent{}{k_1^2 \norm{x}^2 \leq v}}
}
  {\lsequent{\lterm \leq k_1^2 y}{\,\norm{x}^2 \leq y}}
}
  {\lsequent{}{\dbox{\pevolvein{\D{x}=\genDE{x},\D{y}=-2k_3 y}{\lterm \leq k_1^2 y}}{\,\norm{x}^2 \leq y}}}
}
  {\lsequent{y=(\frac{k_2}{k_1})^2\norm{x}^2}{\dbox{\D{x}=\genDE{x},\D{y}=-2k_3 y}{\,\norm{x}^2 \leq y}}}
\\[-\normalbaselineskip]\tag*{\qedhere}
\end{sequentdeduction}
}%
\end{proof}

\begin{proof}[\textbf{Proof of~\rref{cor:expimpasym}}]
The two axioms are derived in order, starting with axiom~\irref{EStabStab}.
The derivation of axiom~\irref{EStabStab} starts by Skolemizing the existential quantifiers in $\expstabode{\D{x}=\genDE{x}}$ with~\irref{existsl}, then Skolemizing the succedent (\irref{allr}) and instantiating (\irref{existsr}) the existentially quantified $\delta {>} 0$ in the succedent with $\gamma \mnodefeq \min(\frac{\varepsilon}{\alpha}, \delta)$ (note $\gamma > 0$).
The sequent is then simplified, noting that $\gamma \leq \delta$ so that assumption $\norm{x}^2 < \gamma^2$ proves the implication LHS $\lforall{x}{\big(\norm{x}^2 < \delta^2 \limply \cdots \big)}$ in the antecedents.
The subformula with discrete assignment to $y$ in the antecedent is abbreviated with $\rrfvar_y \mnodefequiv \dbox{\pumod{y}{\alpha^2\norm{x}^2} ; \D{x}=\genDE{x},\D{y}=-2\beta y}{\,\norm{x}^2 \leq y}$.

{\begin{sequentdeduction}[array]
\linfer[existsl]{
\linfer[allr]{
\linfer[existsr]{
\linfer[implyr+implyl]{
  \lsequent{\alpha{>}0,\beta{>}0,\delta{>}0,\varepsilon {>} 0,\rrfvar_y, \norm{x}^2 < \gamma^2}{\dbox{\D{x}=\genDE{x}}{\,\norm{x^2} < \varepsilon^2}}
}
  {\lsequent{\alpha{>}0,\beta{>}0,\delta{>}0,\varepsilon {>} 0,\lforall{x}{\big(\norm{x}^2 < \delta^2 \limply \rrfvar_y \big)}}{\lforall{x}{\big( \norm{x}^2 < \gamma^2 \limply \dbox{\D{x}=\genDE{x}}{\,\norm{x^2} < \varepsilon^2} \big)}}}
}
  {\lsequent{\alpha{>}0,\beta{>}0,\delta{>}0,\varepsilon {>} 0,\lforall{x}{\big(\norm{x}^2 < \delta^2 \limply \rrfvar_y \big)}}{\lexists{\delta {>} 0}{ \lforall{x}{\big( \norm{x}^2 < \delta^2 \limply \dbox{\D{x}=\genDE{x}}{\,\norm{x}^2 < \varepsilon^2}\big)}}}}
}
  {\lsequent{\alpha{>}0,\beta{>}0,\delta{>}0,\lforall{x}{\big(\norm{x}^2 < \delta^2 \limply \rrfvar_y \big)}}{\stabode{\D{x}=\genDE{x}}}}
  }
  {\lsequent{\expstabode{\D{x}=\genDE{x}}}{\stabode{\D{x}=\genDE{x}}}}
\end{sequentdeduction}
}%

The discrete assignment in antecedent $\rrfvar_y$ is unfolded with~\irref{assignb}, similar to the proof of~\rref{lem:expstablyap}, yielding the antecedent $y{=}\alpha^2\norm{x}^2$ and abbreviated antecedent $\rfvar_y \mnodefequiv \dbox{\D{x}=\genDE{x},\D{y}=-2\beta y}{\,\norm{x}^2 \leq y}$.
Axiom~\irref{DG} adds differential ghost $\D{y}=-2\beta y$ to the succedent ODE and the postcondition is strengthened to $y < \varepsilon^2$ by~\irref{K} using assumption $\rfvar_y$.
The proof is completed with a~\irref{dbxineq} step with cofactor term $-2\beta$ because the (rephrased) postcondition $\varepsilon^2 - y > 0$ is proved by~\irref{qear} from the antecedents with the chain of inequalities $y = \alpha^2 \norm{x}^2 < \alpha^2\gamma^2 \leq \alpha^2 (\frac{\varepsilon}{\alpha})^2 = \varepsilon^2$.
The Lie derivative of $\varepsilon^2 - y$ is $2\beta y$ which provably satisfies the inequality $2\beta y \geq 2\beta y - 2\beta \varepsilon^2 = -2\beta(\varepsilon^2 - y)$ for $\beta > 0$.
\begin{sequentdeduction}[array]
  \linfer[assignb]{
  \linfer[DG]{
  \linfer[K]{
  \linfer[MbW]{
  \linfer[dbxineq]{
    \lclose
  }
    {\lsequent{\alpha{>}0,\beta{>}0,\delta{>}0,\varepsilon {>} 0,y{=}\alpha^2\norm{x}^2, \norm{x}^2 < \gamma^2}{\dbox{\D{x}{=}\genDE{x},\D{y}{=}{-}2\beta y}{\,\varepsilon^2 - y > 0}}}
  }
    {\lsequent{\alpha{>}0,\beta{>}0,\delta{>}0,\varepsilon {>} 0,y{=}\alpha^2\norm{x}^2, \norm{x}^2 < \gamma^2}{\dbox{\D{x}{=}\genDE{x},\D{y}{=}{-}2\beta y}{\,y < \varepsilon^2}}}
  }
    {\lsequent{\alpha{>}0,\beta{>}0,\delta{>}0,\varepsilon {>} 0,y{=}\alpha^2\norm{x}^2, \rfvar_y, \norm{x}^2 < \gamma^2}{\dbox{\D{x}{=}\genDE{x},\D{y}{=}{-}2\beta y}{\,\norm{x^2} < \varepsilon^2}}}
  }
  {\lsequent{\alpha{>}0,\beta{>}0,\delta{>}0,\varepsilon {>} 0,y{=}\alpha^2\norm{x}^2, \rfvar_y, \norm{x}^2 < \gamma^2}{\dbox{\D{x}{=}\genDE{x}}{\,\norm{x^2} < \varepsilon^2}}}
  }
  {\lsequent{\alpha{>}0,\beta{>}0,\delta{>}0,\varepsilon {>} 0,\rrfvar_y, \norm{x}^2 < \gamma^2}{\dbox{\D{x}{=}\genDE{x}}{\,\norm{x^2} < \varepsilon^2}}}
\end{sequentdeduction}

The derivation of axiom~\irref{EStabAttr} starts by unfolding and Skolemizing the existential quantifiers for the antecedent, with abbreviated ODE $\progxy \mnodefequiv \D{x}=\genDE{x},\D{y}=-2\beta y$ and subformula $\rrfvar_y \mnodefequiv \dbox{\pumod{y}{\alpha^2\norm{x}^2} ; \progxy}{\,\norm{x}^2 \leq y}$ (identically to the preceding derivation for axiom~\irref{EStabStab}).
The succedent is logically unfolded and the resulting antecedent $\rfvar$ proves the implication LHS in the antecedent $\lforall{x}{(\rfvar \limply \rrfvar_y)}$.
Succedent $\asymode{\D{x}=\genDE{x}}{x=0}$ is then Skolemized with~\irref{allr}.\footnote{Unlike earlier proofs , the formula is \emph{not} simplified using a stability assumption because the generalized formula $\expstabodeP{\D{x}=\genDE{x}}{\rfvar}$ does not directly imply stability of the origin unless formula $\rfvar$ provably contains a neighborhood of the origin.}
The subsequent step uses axioms~\irref{DG+DGall} to add the linear differential ghost $\D{y}=-2\beta y$ to both ODEs in the succedent.
The postcondition of the succedent diamond modality is monotonically strengthened with~\irref{MdW} and postcondition $(y < \varepsilon^2 \land \dbox{\progxy}{\,\norm{x}^2 \leq y})$.
The two resulting premises are abbreviated \textcircled{1} and \textcircled{2}; they are shown and proved further below.
{\begin{sequentdeduction}[array]
\linfer[existsl]{
\linfer[allr+implyr]{
\linfer[alll+implyl]{
\linfer[allr]{
\linfer[DG+DGall]{
\linfer[MdW]{
  \textcircled{1} ! \textcircled{2}
}
  {\lsequent{\alpha{>}0, \beta{>}0,\rrfvar_y}{\ddiamond{\progxy}{ \dbox{\progxy}{\, \norm{x}^2 < \varepsilon^2}}}}
}
  {\lsequent{\alpha{>}0, \beta{>}0,\rrfvar_y}{\ddiamond{\D{x}=\genDE{x}}{ \dbox{\D{x}=\genDE{x}}{\, \norm{x}^2 < \varepsilon^2}}}}
}
  {\lsequent{\alpha{>}0, \beta{>}0,\rrfvar_y, \rfvar}{\asymode{\D{x}=\genDE{x}}{x=0}}}
}
  {\lsequent{\alpha{>}0, \beta{>}0,\lforall{x}{\big(\rfvar \limply \rrfvar_y\big)},\rfvar}{\asymode{\D{x}=\genDE{x}}{x=0}}}
}
  {\lsequent{\alpha{>}0, \beta{>}0,\lforall{x}{\big(\rfvar \limply \rrfvar_y\big)}}{\attrodeP{\D{x}=\genDE{x}}{\rfvar} }}
  }
  {\lsequent {\expstabodeP{\D{x}=\genDE{x}}{\rfvar} }{\attrodeP{\D{x}=\genDE{x}}{\rfvar} }}
\end{sequentdeduction}
}%

From premise \textcircled{1}, a differential cut~\irref{dC} proves formula $y < \varepsilon^2$ invariant for the succedent ODE $\progxy$ using rule~\irref{dbxineq}.
The subsequent~\irref{dC} adds the postcondition of the antecedent to the domain constraint before~\irref{dW+qear} finish the derivation.
{\begin{sequentdeduction}[array]
\linfer[dC+dbxineq]{
\linfer[dC]{
\linfer[dW]{
\linfer[qear]{
  \lclose
}
  {\lsequent{y < \varepsilon^2 \land \norm{x}^2 \leq y}{\norm{x}^2 < \varepsilon^2}}
}
   {\lsequent{}{ \dbox{\pevolvein{\progxy}{y < \varepsilon^2 \land \norm{x}^2 \leq y}}{\, \norm{x}^2 < \varepsilon^2}}}
}
   {\lsequent{\dbox{\progxy}{\,\norm{x}^2 \leq y}}{ \dbox{\pevolvein{\progxy}{y < \varepsilon^2}}{\, \norm{x}^2 < \varepsilon^2}}}
  }
  {\lsequent{\beta{>}0, y < \varepsilon^2, \dbox{\progxy}{\,\norm{x}^2 \leq y}}{ \dbox{\progxy}{\, \norm{x}^2 < \varepsilon^2}}}
\end{sequentdeduction}
}%

From premise \textcircled{2}, the antecedent $\rrfvar_y$ is unfolded with~\irref{assignb}, then~\irref{Prog+DSeq} remove the right conjunct of the postcondition because postcondition $\dbox{\progxy}{\,\norm{x}^2 \leq y}$ is true after all runs of $\progxy$.
Axiom~\irref{BDG} removes the ODEs for $x$ in the succedent because $\norm{x}^2$ is bounded using antecedent $\dbox{\progxy}{\,\norm{x}^2 \leq y}$.
{\begin{sequentdeduction}[array]
\linfer[assignb]{
\linfer[Prog+DSeq]{
\linfer[BDG]{
  \lsequent{\beta{>}0}{\ddiamond{\D{y}=-2\beta y}{y < \varepsilon^2}}
}
   {\lsequent{\beta{>}0, \dbox{\progxy}{\,\norm{x}^2 \leq y}}{\ddiamond{\progxy}{y < \varepsilon^2}}}
}
   {\lsequent{\beta{>}0, \dbox{\progxy}{\,\norm{x}^2 \leq y}}{\ddiamond{\progxy}{\big(y < \varepsilon^2 \land \dbox{\progxy}{\,\norm{x}^2 \leq y}\big)}}}
  }
  {\lsequent{\beta{>}0, \rrfvar_y}{\ddiamond{\progxy}{\big(y < \varepsilon^2 \land \dbox{\progxy}{\,\norm{x}^2 \leq y}\big)}}}
\end{sequentdeduction}
}%

The remaining open premise is an ODE liveness property for variable $y$.
Its proof starts by introducing a fresh variable $y_0$ storing the initial value of $y$.
Then, rule~\irref{SPc} is used with $\rsfvar \mnodefequiv \big(\varepsilon^2 \leq y \leq y_0\big)$.
The resulting left premise is an invariance property of the ODE which proves using~\irref{MbW+dIcmp}, while the right premise proves by~\irref{qear}.
{\begin{sequentdeduction}[array]%
\linfer[cut+existsl]{
\linfer[SPc]{
  \linfer[MbW]{
  \linfer[dIcmp]{
    \lclose
  }
    {\lsequent{\beta{>}0, y=y_0}{\dbox{\pevolvein{\D{y}=-2\beta y}{y \geq \varepsilon^2}}{y \leq y_0}}}
  }
  {\lsequent{\beta{>}0, y=y_0}{\dbox{\pevolvein{\D{y}=-2\beta y}{y \geq \varepsilon^2}}{\rsfvar}}} !
  \linfer[qear]{
    \lclose
  }
  {\lsequent{\beta{>}0, \rsfvar}{-2\beta y < 0}}
}
  {\lsequent{\beta{>}0, y=y_0}{\ddiamond{\D{y}=-2\beta y}{y < \varepsilon^2}}}
}
  {\lsequent{\beta{>}0}{\ddiamond{\D{y}=-2\beta y}{y < \varepsilon^2}}}
\\[-\normalbaselineskip]\tag*{\qedhere}
\end{sequentdeduction}
}%
\end{proof}

\subsection{Proofs for General Stability}
This section derives proof rules for general stability and its specialized instances which are introduced and motivated in~\rref{sec:genstability}.

\begin{proof}[\textbf{Proof of~\rref{lem:genlyapunov}}]
The derivation of rule~\irref{LyapGen} generalizes the ideas behind the derivation of rule~\irref{Lyap}.
The derivation starts with an~\irref{allr} step, followed by a~\irref{cut} and Skolemization \irref{existsl} of the second (bottom) premise of rule~\irref{LyapGen}.
The resulting assumptions (for Skolem variables $\gamma, \delta, k$) are abbreviated with: $\textcircled{a} \mnodefequiv \lforall{x}{(\bdr{(\neighborhood[\gamma]{\rrfvar})} \limply \lterm \geq k)}$, $\textcircled{b} \mnodefequiv \lforall{x}{(\neighborhood[\delta]{\rfvar}  \limply \rrfvar \lor \lterm{<}k)}$, and $\textcircled{c} \mnodefequiv \lforall{x}{\big(\rrfvar \lor \lterm {<} k \limply \dbox{\pevolvein{\D{x}=\genDE{x}}{\cneighborhood[\gamma]{\rrfvar}}}{(\rrfvar \lor \lterm {<} k)}\big)}$.
A subsequent~\irref{MbW} step strengthens the postcondition monotonically since the formula $\neighborhood[\gamma]{\rrfvar} \limply \neighborhood[\varepsilon]{\rrfvar}$ is provable by~\irref{qear} for $\gamma \leq \varepsilon$.
{%
\begin{sequentdeduction}[array]
\linfer[allr]{
\linfer[cut+existsl]{
\linfer[MbW+qear]{
  \lsequent{0{<}\gamma, 0{<}\delta{\leq}\gamma, \textcircled{a} , \textcircled{b} , \textcircled{c} }{\lexists{\delta {>} 0}{ \lforall{x}{\big( \neighborhood[\delta]{\rfvar} \limply \dbox{\D{x}=\genDE{x}}{\,\neighborhood[\gamma]{\rrfvar}}\big)}}}
}
  {\lsequent{0{<}\gamma{\leq}\varepsilon, 0{<}\delta{\leq}\gamma, \textcircled{a} , \textcircled{b} , \textcircled{c} }{\lexists{\delta {>} 0}{ \lforall{x}{\big( \neighborhood[\delta]{\rfvar} \limply \dbox{\D{x}=\genDE{x}}{\,\neighborhood[\varepsilon]{\rrfvar}}\big)}}}}
}
  {\lsequent{\varepsilon > 0}{\lexists{\delta {>} 0}{ \lforall{x}{\big( \neighborhood[\delta]{\rfvar} \limply \dbox{\D{x}=\genDE{x}}{\,\neighborhood[\varepsilon]{\rrfvar}}\big)}}}}
  }
  {\lsequent{}{\stabodePR{\D{x}=\genDE{x}}{\rfvar}{\rrfvar}}}
\end{sequentdeduction}
}%

The existentially quantified $\delta$ in the succedent is witnessed with the Skolem variable $\delta$ in the antecedents and the sequent is simplified with~\irref{allr+implyr+implyl}, where the implication LHS in \textcircled{b} is proved.
{%
\begin{sequentdeduction}[array]
\linfer[existsr]{
\linfer[allr+implyr+implyl]{
  \lsequent{0{<}\gamma, 0{<}\delta{\leq}\gamma, \textcircled{a} , \textcircled{c}, \neighborhood[\delta]{\rfvar}, \rrfvar \lor v < k }{ \dbox{\D{x}=\genDE{x}}{\,\neighborhood[\gamma]{\rrfvar}}}
}
  {\lsequent{0{<}\gamma, 0{<}\delta{\leq}\gamma, \textcircled{a} , \textcircled{b} , \textcircled{c} }{ \lforall{x}{\big( \neighborhood[\delta]{\rfvar} \limply \dbox{\D{x}=\genDE{x}}{\,\neighborhood[\gamma]{\rrfvar}}\big)}}}
}
  {\lsequent{0{<}\gamma, 0{<}\delta{\leq}\gamma, \textcircled{a} , \textcircled{b} , \textcircled{c} }{\lexists{\delta {>} 0}{ \lforall{x}{\big( \neighborhood[\delta]{\rfvar} \limply \dbox{\D{x}=\genDE{x}}{\,\neighborhood[\gamma]{\rrfvar}}\big)}}}}
\end{sequentdeduction}
}%

The derivation continues with rule~\irref{Enc} to assume the closure formula $\cneighborhood[\gamma]{\rrfvar}$ in the domain constraint.
This step uses the first premise of rule~\irref{LyapGen}, i.e., precondition $\rfvar$ implies postcondition $\rrfvar$, so that the neighborhood formula $\neighborhood[\delta]{\rfvar}$ provably implies neighborhood formula $\neighborhood[\gamma]{\rrfvar}$ initially by~\irref{qear} for $\delta \leq \gamma$.
The subsequent~\irref{dC} step uses the antecedent~\textcircled{c} to prove the invariance of formula $\rrfvar \lor \lterm {<} k$ for the ODE $\pevolvein{\D{x}=\genDE{x}}{\cneighborhood[\gamma]{\rrfvar} \land (\rrfvar \lor \lterm {<} k )}$ and adds it to the domain constraint.
The derivation is completed using~\irref{dW+qear}, where the arithmetic premise is justified below.
{\renewcommand{\arraystretch}{1.4}%
\begin{sequentdeduction}[array]
\linfer[Enc]{
\linfer[dC]{
\linfer[dW]{
\linfer[qear]{
    \lclose
}
  {\lsequent{0{<}\gamma, \textcircled{a}, \cneighborhood[\gamma]{\rrfvar}, (\rrfvar \lor \lterm {<} k ) }{\neighborhood[\gamma]{\rrfvar}}}
}
  {\lsequent{0{<}\gamma, \textcircled{a} , \rrfvar \lor \lterm {<} k }{ \dbox{\pevolvein{\D{x}=\genDE{x}}{\cneighborhood[\gamma]{\rrfvar} \land (\rrfvar \lor \lterm {<} k )}}{\,\neighborhood[\gamma]{\rrfvar}}}}
}
  {\lsequent{0{<}\gamma, \textcircled{a} , \textcircled{c}, \rrfvar \lor \lterm {<} k }{ \dbox{\pevolvein{\D{x}=\genDE{x}}{\cneighborhood[\gamma]{\rrfvar}}}{\,\neighborhood[\gamma]{\rrfvar}}}}
}
  {\lsequent{0{<}\gamma, 0{<}\delta{\leq}\gamma, \textcircled{a} , \textcircled{c}, \neighborhood[\delta]{\rfvar}, \rrfvar \lor \lterm {<} k }{ \dbox{\D{x}=\genDE{x}}{\,\neighborhood[\gamma]{\rrfvar}}}}
\end{sequentdeduction}
}%

To prove arithmetical premise $\lsequent{0{<}\gamma, \textcircled{a}, \cneighborhood[\gamma]{\rrfvar}, \rrfvar \lor \lterm {<} k }{\neighborhood[\gamma]{\rrfvar}}$, note that the left disjunct $\rrfvar$ in the antecedent implies its neighborhood formula $\neighborhood[\gamma]{\rrfvar}$ in real arithmetic for $\gamma > 0$.
Thus, it suffices to justify the premise for the right disjunct, i.e., $\lsequent{\textcircled{a}, \cneighborhood[\gamma]{\rrfvar}, \lterm {<} k}{\neighborhood[\gamma]{\rrfvar}}$.
Antecedent \textcircled{a} is instantiated to obtain assumption $\bdr{(\neighborhood[\gamma]{\rrfvar})} \limply \lterm \geq k$.
This assumption says that the right disjunct $\lterm < k$ does \emph{not} occur on the boundary of $\bdr{(\neighborhood[\gamma]{\rrfvar})}$ which, together with antecedent $ \cneighborhood[\gamma]{\rrfvar}$ implies succedent $\neighborhood[\gamma]{\rrfvar}$ in real arithmetic.
\qedhere
\end{proof}

The following set stability definitions are standard~\cite{10.2307/j.ctvcm4hws,MR1201326}, except (compared to the literature) the following definitions do \emph{not} assume any topological properties of the set characterized by formula $\rfvar$.
The motivation for additional topological restrictions is explained in~\rref{cor:setstabsimp}.

\begin{definition}[Set stability~\cite{10.2307/j.ctvcm4hws,MR1201326}]
Let $\dist{x}{\rfvar}$ denote the distance of a point $x \in \reals^n$ to the set characterized by formula $\rfvar$.
The set characterized by formula $\rfvar$ is
\begin{itemize}
\item \textbf{stable} if, for all $\varepsilon {>} 0$, there exists $\delta {>} 0$ such that for all initial states $x=x(0)$ with $\dist{x}{\rfvar} < \delta$, the right-maximal ODE solution $x(t) : [0,T) \to \reals^n$ satisfies $\dist{x(t)}{\rfvar} < \varepsilon$ for all times $0 \leq t < T$,
\item \textbf{attractive} if there exists $\delta {>} 0$ such that for all initial states $x=x(0)$ with $\dist{x}{\rfvar} < \delta$, the right-maximal ODE solution $x(t) : [0,T) \to \reals^n$ satisfies the limit $\lim_{t \to T}{\dist{x(t)}{\rfvar} = 0}$,
\item \textbf{asymptotically stable} if it is stable and attractive,
\item \textbf{globally asymptotically stable} if it is stable and for all initial states $x=x(0)\in\reals^n$, the right-maximal ODE solution $x(t) : [0,T) \to \reals^n$ satisfies the limit $\lim_{t \to T}{\dist{x(t)}{\rfvar} = 0}$.
\end{itemize}
\end{definition}

The set stability definitions are formalized in \dL in~\rref{lem:setasymstabdl}.

\begin{proof}[\textbf{Proof of~\rref{lem:setasymstabdl}}]
Like~\rref{lem:asymstabdl}, the correctness of these definitions is immediate from the semantics of \dL formulas because these definitions directly syntactically express the definitions in \dL.
For $\varepsilon > 0$, the neighborhood formula $\neighborhood[\varepsilon]{\rfvar}$ characterizes the set of points $x \in \reals^n$ within distance $\varepsilon$ from $\rfvar$, i.e., $\dist{x}{\rfvar} < \varepsilon$.
Formula $\asymode{\D{x}=\genDE{x}}{\rfvar}$ syntactically expresses the limit $\lim_{t \to T}{\dist{x(t)}{\rfvar} = 0}$ for the right-maximal ODE solution $x(t) : [0,T) \to \reals^n$, as shown in the proof of~\rref{lem:asymstabdl}.
\end{proof}

\begin{proof}[\textbf{Proof of~\rref{cor:setstabsimp}}]
The three axioms are derived in order.

\begin{itemize}
\item[\irref{stabattrgen}]
The two directions of the inner equivalence of~\irref{stabattrgen} are proved separately.
The easier ``$\limply$'' direction follows by Skolemizing $\varepsilon$ in the succedent with~\irref{allr}, choosing the same $\varepsilon$ in the antecedent with~\irref{alll}, and then by~\irref{MdW} because the resulting postcondition $\dbox{\D{x}=\genDE{x}}{\,\neighborhood[\varepsilon]{\rfvar}}$ monotonically implies postcondition $\neighborhood[\varepsilon]{\rfvar}$ by differential skip~\irref{DX}.
{\begin{sequentdeduction}[array]
  \linfer[allr+alll]{
  \linfer[MdW]{
  \linfer[DX]{
   \lclose
  }
   {\lsequent{\dbox{\D{x}=\genDE{x}}{\neighborhood[\varepsilon]{\rfvar}} }{\neighborhood[\varepsilon]{\rfvar}}}
  }
   {\lsequent{\ddiamond{\D{x}=\genDE{x}}{ \dbox{\D{x}=\genDE{x}}{\,\neighborhood[\varepsilon]{\rfvar}}} }{\ddiamond{\D{x}=\genDE{x}}{\,\neighborhood[\varepsilon]{\rfvar}}}}
  }{\lsequent{\asymode{\D{x}=\genDE{x}}{\rfvar}}{\lforall{\varepsilon {>} 0}{\ddiamond{\D{x}=\genDE{x}}{\,\neighborhood[\varepsilon]{\rfvar}}}}}
\end{sequentdeduction}
}%

The more interesting ``$\lylpmi$'' direction uses the stability assumption.
The first step Skolemizes the succedent with~\irref{allr}, then the stability antecedent is instantiated with~\irref{alll} and Skolemized with~\irref{existsl} (yielding fresh Skolem variable $\delta$).
Using the resulting quantified assumption $\lforall{x}{\big( \neighborhood[\delta]{\rfvar} \limply  \dbox{\D{x}=\genDE{x}}{\,\neighborhood[\varepsilon]{\rfvar}} \big)} $, the postcondition of the succedent is monotonically strengthened to $\neighborhood[\delta]{\rfvar}$ with~\irref{MdW}.
The derivation is completed by~\irref{alll} instantiating the remaining quantified antecedent with $\delta$.

{\begin{sequentdeduction}[array]
  \linfer[allr]{
  \linfer[alll+existsl]{
  \linfer[MdW]{
  \linfer[alll]{
      \lclose
  }
    {\lsequent{\delta > 0,\lforall{\varepsilon {>} 0}{\ddiamond{\D{x}=\genDE{x}}{\,\neighborhood[\varepsilon]{\rfvar}}}}{\ddiamond{\D{x}=\genDE{x}}{\,\neighborhood[\delta]{\rfvar}}}}
  }
    {\lsequent{\delta > 0,\lforall{x}{\big( \neighborhood[\delta]{\rfvar} \limply \dbox{\D{x}=\genDE{x}}{\,\neighborhood[\varepsilon]{\rfvar}}\big)} ,\lforall{\varepsilon {>} 0}{\ddiamond{\D{x}=\genDE{x}}{\,\neighborhood[\varepsilon]{\rfvar}}}}{\ddiamond{\D{x}=\genDE{x}}{ \dbox{\D{x}=\genDE{x}}{\,\neighborhood[\varepsilon]{\rfvar}}} }}
 }
    {\lsequent{\stabodePR{\D{x}=\genDE{x}}{\rfvar}{\rfvar},\lforall{\varepsilon {>} 0}{\ddiamond{\D{x}=\genDE{x}}{\,\neighborhood[\varepsilon]{\rfvar}}},\varepsilon {>} 0}{\ddiamond{\D{x}=\genDE{x}}{ \dbox{\D{x}=\genDE{x}}{\,\neighborhood[\varepsilon]{\rfvar}}} }}
 }{\lsequent{\stabodePR{\D{x}=\genDE{x}}{\rfvar}{\rfvar},\lforall{\varepsilon {>} 0}{\ddiamond{\D{x}=\genDE{x}}{\,\neighborhood[\varepsilon]{\rfvar}}}}{\asymode{\D{x}=\genDE{x}}{\rfvar}}}
\end{sequentdeduction}
}%

\item[\irref{stabclosure}] This axiom is derived immediately by equivalently rewriting with arithmetic equivalences because for $\delta > 0$ the open neighborhood formulas $\neighborhood[\delta]{\rfvar}$ and $\neighborhood[\delta]{\closure{\rfvar}}$ are provably equivalent in real arithmetic by~\irref{qear}.

\item[\irref{stabclosed}] This axiom is proved by contradiction so its derivation starts by negating the invariance succedent using~\irref{diamond}.
{\begin{sequentdeduction}[array]
  \linfer[allr+implyr]{
  \linfer[diamond+notr]{
    \lsequent{\stabodePR{\D{x}=\genDE{x}}{\rfvar}{\rfvar},\rfvar, \ddiamond{\D{x}=\genDE{x}}{\lnot{\rfvar}}}{\lfalse}
  }
    {\lsequent{\stabodePR{\D{x}=\genDE{x}}{\rfvar}{\rfvar},\rfvar}{\dbox{\D{x}=\genDE{x}}{\rfvar}}}
  }{\lsequent{\stabodePR{\D{x}=\genDE{x}}{\rfvar}{\rfvar}}{\lforall{x}{\big( \rfvar \limply \dbox{\D{x}=\genDE{x}}{\rfvar}\big)}}}
\end{sequentdeduction}}%

Since formula $\rfvar$ characterizes a closed set, every point satisfying $\lnot{\rfvar}$ must be contained in some $\varepsilon > 0$ ball in the interior of the open set characterized by $\lnot{\rfvar}$.
Accordingly, these points are $\varepsilon > 0$ distance away from the set characterized by $\rfvar$ and therefore, the formula $\lnot{\rfvar} \lbisubjunct \lexists{\varepsilon{>}0}{\lnot{\neighborhood[\varepsilon]{\rfvar}}} $ is provable in real arithmetic.
This equivalence is used to rewrite the postcondition of the diamond modality antecedent before the resulting existentially quantified variable $\varepsilon$ is commuted with the diamond modality and Skolemized with~\irref{dBarcan+existsl} and~\irref{V} to extract the constant assumption $\varepsilon > 0$.
{\begin{sequentdeduction}[array]
 \linfer[MdW+qear]{
 \linfer[dBarcan+existsl]{
 \linfer[V]{
 \linfer[alll+existsl]{
 \linfer[alll+implyl]{
 \linfer[diamond]{
  \lclose
  }
  {\lsequent{\dbox{\D{x}=\genDE{x}}{\,\neighborhood[\varepsilon]{\rfvar}}, \ddiamond{\D{x}=\genDE{x}}{\lnot{\neighborhood[\varepsilon]{\rfvar}}}}{\lfalse}}
 }
 {\lsequent{\delta> 0, \lforall{x}{\big( \neighborhood[\delta]{\rfvar} \limply \dbox{\D{x}=\genDE{x}}{\,\neighborhood[\varepsilon]{\rfvar}}\big)}, \rfvar, \ddiamond{\D{x}=\genDE{x}}{\lnot{\neighborhood[\varepsilon]{\rfvar}}}}{\lfalse}}
 }
  {\lsequent{\stabodePR{\D{x}=\genDE{x}}{\rfvar}{\rfvar}, \rfvar, \varepsilon > 0, \ddiamond{\D{x}=\genDE{x}}{\lnot{\neighborhood[\varepsilon]{\rfvar}}}}{\lfalse}}
 }
   {\lsequent{\stabodePR{\D{x}=\genDE{x}}{\rfvar}{\rfvar}, \rfvar, \ddiamond{\D{x}=\genDE{x}}{\big(\varepsilon{>}0 \land \lnot{\neighborhood[\varepsilon]{\rfvar}}\big)}}{\lfalse}}
 }
   {\lsequent{\stabodePR{\D{x}=\genDE{x}}{\rfvar}{\rfvar}, \rfvar, \ddiamond{\D{x}=\genDE{x}}{\lexists{\varepsilon{>}0}{\lnot{\neighborhood[\varepsilon]{\rfvar}}}}}{\lfalse}}
 }{\lsequent{\stabodePR{\D{x}=\genDE{x}}{\rfvar}{\rfvar}, \rfvar, \ddiamond{\D{x}=\genDE{x}}{\lnot{\rfvar}}}{\lfalse}}
\end{sequentdeduction}}%

The stability assumption is instantiated with $\varepsilon$ using~\irref{alll} and Skolemized with~\irref{existsl}.
Since the formula $\rfvar \limply \neighborhood[\delta]{\rfvar}$ is provable in real arithmetic for $\delta > 0$, the implication LHS in the antecedents is proved with~\irref{alll+implyl}.
The proof is completed using~\irref{diamond} since the resulting box and diamond modality antecedents are contradictory. \qedhere
\end{itemize}
\end{proof}

\begin{proof}[\textbf{Proof of~\rref{lem:setstablyap}}]
Rule~\irref{SetLyapGen} is derived first before~\irref{SetLyap} and~\irref{SetStrictLyap} are derived from~\irref{SetLyapGen} as corollaries further below.
The derivation of rule~\irref{SetLyapGen} starts with a~\irref{LyapGen} step.
The (left) resulting premise $\rfvar \limply \rfvar$ proves trivially and is not shown below.
For the (right) resulting premise, the first two conjuncts under the nested quantifiers prove trivially from a~\irref{cut} of the second (bottom) premise of rule~\irref{SetLyapGen}.
It remains to prove the final conjunct for fresh Skolem variable $k$ with antecedent abbreviated $\textcircled{a} \mnodefequiv \lforall{x}{(\cneighborhood[\gamma]{\rfvar} \land \lnot{\rfvar} \limply \lied[]{\genDE{x}}{\lterm} \leq 0)}$ from the premise of~\irref{SetLyapGen}.
{%
\begin{sequentdeduction}[array]
 \linfer[LyapGen]{
 \linfer[cut]{
  \lsequent{\textcircled{a}, \rfvar \lor \lterm {<} k} {\dbox{\pevolvein{\D{x}=\genDE{x}}{\cneighborhood[\gamma]{\rfvar}}}{(\rfvar \lor \lterm {<} k)}}
 }
    {\lsequent{}{\lforall{\varepsilon{>}0}{
    \lexists{0{<}\gamma{\leq}\varepsilon}{
    \lexists{k}{\left(
     \begin{array}{l}
     \lforall{x}{(\bdr{(\neighborhood[\gamma]{\rfvar})} \limply \lterm \geq k)} \land \\
     \lexists{0{<}\delta{\leq}\gamma}{\lforall{x}{(\neighborhood[\delta]{\rfvar}  \limply \rfvar \lor \lterm{<}k)} } \land \\
     \lforall{x}{\big(\rfvar \lor \lterm {<} k \limply \dbox{\pevolvein{\D{x}=\genDE{x}}{\cneighborhood[\gamma]{\rfvar}}}{(\rfvar \lor \lterm {<} k)}\big)} \\
     \end{array}\right)}
    }}}}
 }{\lsequent{}{\stabodePR{\D{x}=\genDE{x}}{\rfvar}{\rfvar}}}
\end{sequentdeduction}
}%

The derivation continues using rule~\irref{DCC} to prove that $v < k$ is true along ODE solutions until the invariant $\rfvar$ is entered; the first step uses an equivalent propositional rephrasing of $\rfvar \lor v < k$ as $\lnot{\rfvar} \limply v < k$.
The two resulting premises are abbreviated \textcircled{1} and \textcircled{2} and continued below.
{\def\arraystretch{1.4}%
\begin{sequentdeduction}[array]
  \linfer[cut+MbW]{
  \linfer[DCC]{
    \textcircled{1} ! \textcircled{2}
  }
  {\lsequent{\textcircled{a}, \lnot{\rfvar} \limply v < k} {\dbox{\pevolvein{\D{x}=\genDE{x}}{\cneighborhood[\gamma]{\rfvar}}}{(\lnot{\rfvar} \limply v < k)}}}
  }
  {\lsequent{\textcircled{a}, \rfvar \lor \lterm {<} k} {\dbox{\pevolvein{\D{x}=\genDE{x}}{\cneighborhood[\gamma]{\rfvar}}}{(\rfvar \lor \lterm {<} k)}}}
\end{sequentdeduction}
}%

From premise \textcircled{1}, a~\irref{DX} step strengthens the antecedent to $\lterm < k$ using the domain constraint $\lnot{\rfvar}$.
Rule~\irref{dIcmp} completes the proof because formula $ \lied[]{\genDE{x}}{\lterm} \leq 0$ proves from antecedent \textcircled{a} with domain $\cneighborhood[\gamma]{\rfvar} \land \lnot{\rfvar}$.
{\def\arraystretch{1.4}%
\begin{sequentdeduction}[array]
  \linfer[DX]{
  \linfer[dIcmp]{
  \linfer[]{
      \lclose
  }
  {\lsequent{\textcircled{a}, \cneighborhood[\gamma]{\rfvar} \land \lnot{\rfvar}}{ \lied[]{\genDE{x}}{\lterm} \leq 0}}
  }
  {\lsequent{\textcircled{a}, v < k} {\dbox{\pevolvein{\D{x}=\genDE{x}}{\cneighborhood[\gamma]{\rfvar} \land \lnot{\rfvar}}}{v < k}}}
  }
  {\lsequent{\textcircled{a}, \lnot{\rfvar} \limply v < k} {\dbox{\pevolvein{\D{x}=\genDE{x}}{\cneighborhood[\gamma]{\rfvar} \land \lnot{\rfvar}}}{v < k}}}
\end{sequentdeduction}
}%

From premise \textcircled{2}, a~\irref{dW} step reduces the premise to an invariance question for formula $\rfvar$ (since $\lnot{\lnot{\rfvar}}$ is equivalent to $\rfvar$).
The~\irref{DMP} step weakens the domain constraint, which proves using the first premise of rule~\irref{SetLyapGen}.
{\def\arraystretch{1.4}%
\begin{sequentdeduction}[array]
  \linfer[dW]{
  \linfer[DMP]{
  \linfer[]{
    \lclose
  }
    {\lsequent{\rfvar} {\dbox{\D{x}=\genDE{x}}{\rfvar}}}
  }
    {\lsequent{\rfvar} {\dbox{\pevolvein{\D{x}=\genDE{x}}{\cneighborhood[\gamma]{\rfvar}}}{\rfvar}}}
  }
  {\lsequent{} {\dbox{\pevolvein{\D{x}=\genDE{x}}{\cneighborhood[\gamma]{\rfvar}}}{( \lnot{\lnot{\rfvar}} \limply \dbox{\pevolvein{\D{x}=\genDE{x}}{\cneighborhood[\gamma]{\rfvar}}}{\lnot{\lnot{\rfvar}}})}}}
\end{sequentdeduction}
}%

Rule~\irref{SetLyap} derives from~\irref{SetLyapGen} because the two rules share the invariance premise on $\rfvar$ and the latter two premises of rule~\irref{SetLyap} imply the latter premise of~\irref{SetLyapGen} in real arithmetic when $\rfvar$ characterizes a compact set.
The variable $\gamma$ is witnessed by $\varepsilon$ in the premise after~\irref{SetLyapGen}.
The proof of the arithmetic premise is explained below.
{\begin{sequentdeduction}[array]
  \linfer[SetLyapGen]{
  \linfer[allr+existsr]{
  \linfer[qear]{
    \lclose
  }
    {\lsequent{\varepsilon{>}0}{
   \left(\begin{array}{l}
   \lexists{k}{\left(
     \begin{array}{l}
     \lforall{x}{(\bdr{(\neighborhood[\varepsilon]{\rfvar})} \limply \lterm \geq k)} \land \\
     \lexists{0{<}\delta{\leq}\varepsilon}{\lforall{x}{(\neighborhood[\delta]{\rfvar} \land \lnot{\rfvar} \limply \lterm <k)} }
     \end{array}\right)} \land\\
   \lforall{x}{(\cneighborhood[\varepsilon]{\rfvar} \land \lnot{\rfvar} \limply \lied[]{\genDE{x}}{\lterm} \leq 0)}
    \end{array}\right)
    }}
  }
    {\lsequent{}{\lforall{\varepsilon{>}0}{
    \lexists{0{<}\gamma{\leq}\varepsilon}{
   \left(\begin{array}{l}
   \lexists{k}{\left(
     \begin{array}{l}
     \lforall{x}{(\bdr{(\neighborhood[\gamma]{\rfvar})} \limply \lterm \geq k)} \land \\
     \lexists{0{<}\delta{\leq}\gamma}{\lforall{x}{(\neighborhood[\delta]{\rfvar} \land \lnot{\rfvar} \limply \lterm <k)} }
     \end{array}\right)} \land\\
   \lforall{x}{(\cneighborhood[\gamma]{\rfvar} \land \lnot{\rfvar} \limply \lied[]{\genDE{x}}{\lterm} \leq 0)}
    \end{array}\right)
    }}}}
  }
  {\lsequent{} {\stabodePR{\D{x}=\genDE{x}}{\rfvar}{\rfvar}}}
\end{sequentdeduction}
}%

The conjunct $\lforall{x}{(\cneighborhood[\varepsilon]{\rfvar} \land \lnot{\rfvar} \limply \lied[]{\genDE{x}}{\lterm} \leq 0)}$ proves logically from the right conjunct of the middle premise of rule~\irref{SetLyap}.
For the existentially quantified conjunct, $\lexists{k}{\big( \cdots \big)}$, since formula $\rfvar$ characterizes a compact set, the boundary $\bdr{(\neighborhood[\varepsilon]{\rfvar})}$ is also compact and therefore the continuous Lyapunov function $\lterm$ must attain its minimum $k$ on that set.
This $k$ witnesses the quantifier $\lexists{k}{}$, and note that $k > 0$ from the left conjunct of the middle premise of rule~\irref{SetLyap}.
From the rightmost premise of rule~\irref{SetLyap}, the Lyapunov function satisfies $v \leq 0$ for all points $x \in \reals^n$ on the boundary characterized by formula $\bdr{\rfvar}$.
For each such point on the boundary, by continuity, there is a radius $\delta > 0$ where points in the open ball $\norm{x}<\delta$ satisfy $v < k$ because $k > 0$.
The union of all such balls over all points on the boundary is an open cover of the compact boundary which therefore has a finite subcover.
The minimum radius $\delta > 0$ of balls in this finite subcover witnesses the formula $\lexists{0{<}\delta{\leq}\varepsilon}{\lforall{x}{(\neighborhood[\delta]{\rfvar} \land \lnot{\rfvar} \limply \lterm{<}k)} }$, justifying the use of~\irref{qear}.

Rule~\irref{SetStrictLyap} is derived from rule~\irref{SetLyap} similar to the derivation of~\irref{StrictLyap} from~\irref{Lyap}.
The derivation starts with a~\irref{cut} of the set stability formula $\stabodePR{\D{x}=\genDE{x}}{\rfvar}{\rfvar}$ which proves by~\irref{SetLyap} because rules~\irref{SetStrictLyap} and~\irref{SetLyap} have identical premises except for a strict inequality on $\lied[]{\genDE{x}}{\lterm}$.
{\begin{sequentdeduction}[array]
  \linfer[cut+SetLyap]{
    \lsequent{\stabodePR{\D{x}=\genDE{x}}{\rfvar}{\rfvar}}{\lexists{\delta{>}0}{\attrodePR{\D{x}=\genDE{x}}{\neighborhood[\delta]{\rfvar}}{\rfvar}}}
  }
  {\lsequent{}{\stabodePR{\D{x}=\genDE{x}}{\rfvar}{\rfvar} \land \lexists{\delta{>}0}{\attrodePR{\D{x}=\genDE{x}}{\neighborhood[\delta]{\rfvar}}{\rfvar}}}}
\end{sequentdeduction}
}%

The stability antecedent is instantiated with $\varepsilon \mnodefeq 1$; the positive constant $1$ is chosen arbitrarily to obtain a neighborhood in which solutions are trapped.
After Skolemization, this yields an initial disturbance $\delta > 0$ and the antecedent $\lforall{x}{\big( \neighborhood[\delta]{\rfvar} \limply \dbox{\D{x}=\genDE{x}}{\,\neighborhood[1]{\rfvar}}\big)}$.
The succedent is witnessed with $\delta$, and the resulting sequent is simplified with~\irref{allr+implyr+implyl}.
Then, axiom~\irref{stabattrgen} simplifies the succedent using the stability antecedent.
{\begin{sequentdeduction}[array]
  \linfer[cut+existsl]{
  \linfer[existsr]{
  \linfer[allr+implyr+implyl]{
  \linfer[stabattrgen]{
    \lsequent{\delta{>}0, \dbox{\D{x}=\genDE{x}}{\,\neighborhood[1]{\rfvar}}, \neighborhood[\delta]{\rfvar}}{\lforall{\varepsilon{>}0}{\ddiamond{\D{x}=\genDE{x}}{\,\neighborhood[\varepsilon]{\rfvar}}}}
  }
  {\lsequent{\stabodePR{\D{x}=\genDE{x}}{\rfvar}{\rfvar}, \delta{>}0, \dbox{\D{x}=\genDE{x}}{\,\neighborhood[1]{\rfvar}}, \neighborhood[\delta]{\rfvar}}{\asymode{\D{x}=\genDE{x}}{\rfvar}}}
  }
    {\lsequent{\stabodePR{\D{x}=\genDE{x}}{\rfvar}{\rfvar}, \delta{>}0, \lforall{x}{\big( \neighborhood[\delta]{\rfvar} \limply \dbox{\D{x}=\genDE{x}}{\,\neighborhood[1]{\rfvar}}\big)}}{\attrodePR{\D{x}=\genDE{x}}{\neighborhood[\delta]{\rfvar}}{\rfvar}}}
  }
  {\lsequent{\stabodePR{\D{x}=\genDE{x}}{\rfvar}{\rfvar}, \delta{>}0, \lforall{x}{\big( \neighborhood[\delta]{\rfvar} \limply \dbox{\D{x}=\genDE{x}}{\,\neighborhood[1]{\rfvar}}\big)}}{\lexists{\delta{>}0}{\attrodePR{\D{x}=\genDE{x}}{\neighborhood[\delta]{\rfvar}}{\rfvar}}}}
  }
  {\lsequent{\stabodePR{\D{x}=\genDE{x}}{\rfvar}{\rfvar}}{\lexists{\delta{>}0}{\attrodePR{\D{x}=\genDE{x}}{\neighborhood[\delta]{\rfvar}}{\rfvar}}}}
\end{sequentdeduction}
}%

The proof of the liveness property in the open premise uses rule~\irref{SPc} with the choice of compact staging set $\rsfvar \mnodefequiv \cneighborhood[1]{\rfvar} \land \lnot{\neighborhood[\varepsilon]{\rfvar}}$ and $\ptermA \mnodefequiv \lterm$.
Note that formula $\rsfvar$ characterizes a compact set because $\rfvar$ is compact so conjuncts $\cneighborhood[1]{\rfvar}$ and $\lnot{\neighborhood[\varepsilon]{\rfvar}}$ are both closed and $\cneighborhood[1]{\rfvar}$ is bounded.

{\begin{sequentdeduction}[array]
  \linfer[allr]{
  \linfer[SPc]{
    \linfer[dC+dW]{
      \lclose
    }
    {\lsequent{\dbox{\D{x}=\genDE{x}}{\,\neighborhood[1]{\rfvar}}}{\dbox{\pevolvein{\D{x}=\genDE{x}}{\lnot{\neighborhood[\varepsilon]{\rfvar}}}}{\rsfvar}}} !
    \linfer[qear]{
      \lclose
    }
    {\lsequent{\varepsilon{>} 0, \rsfvar}{ \lied[]{\genDE{x}}{\lterm} < 0}}
  }
    {\lsequent{\delta{>}0, \dbox{\D{x}=\genDE{x}}{\,\neighborhood[1]{\rfvar}}, \neighborhood[\delta]{\rfvar}, \varepsilon{>}0}{\ddiamond{\D{x}=\genDE{x}}{\,\neighborhood[\varepsilon]{\rfvar}}}}
  }
    {\lsequent{\delta{>}0, \dbox{\D{x}=\genDE{x}}{\,\neighborhood[1]{\rfvar}}, \neighborhood[\delta]{\rfvar}}{\lforall{\varepsilon{>}0}{\ddiamond{\D{x}=\genDE{x}}{\,\neighborhood[\varepsilon]{\rfvar}}}}}
\end{sequentdeduction}
}%

The left premise proves with a cut~\irref{dC} of the antecedent and~\irref{dW}.
The right premise proves by real arithmetic~\irref{qear} using the middle premise of rule~\irref{SetStrictLyap} because the antecedents imply $\lnot{\rfvar}$.
 \qedhere
\end{proof}

The following definition of $\varepsilon$-stability is standard~\cite{DBLP:conf/cav/GaoKDRSAK19}, except the first quantification over $\gamma > \varepsilon$ is strict whereas in the original definition~\cite{DBLP:conf/cav/GaoKDRSAK19} it is $\gamma \geq \varepsilon$.
This difference is immaterial for the purpose of $\varepsilon$-stability as $\varepsilon$ is a numerical parameter for the radius of a ball around which disturbances to the origin are to be ignored.
In particular, an ODE $\varepsilon$-stable by the following definition is $\alpha \varepsilon$-stable for any $\alpha \in (0,1)$ by its original definition~\cite{DBLP:conf/cav/GaoKDRSAK19}.

\begin{definition}[Epsilon-Stability~\cite{DBLP:conf/cav/GaoKDRSAK19}]
The origin $0 \in \reals^n$ of ODE $\D{x}=\genDE{x}$ is \textbf{$\varepsilon$-stable} for a positive constant $\varepsilon > 0$ if for all $\gamma > \varepsilon$, there exists $\delta > 0$ such that for points $x=x(0)$ with $\norm{x} < \delta$, the right-maximal ODE solution $x(t) : [0,T) \to \reals^n$ satisfies $\norm{x(t)} < \gamma$ for all times $0 \leq t < T$.
\label{def:epsstabmath}
\end{definition}

\begin{proof}[\textbf{Proof of~\rref{lem:epsstab}}]
The formula $\stabodePR{\D{x}=\genDE{x}}{x=0}{\neighborhood[\varepsilon]{x=0}}$ is valid iff for all $\gamma > 0$, there exists $\delta > 0$ such that for points $x=x(0)$ with $\norm{x} < \delta$, the right-maximal ODE solution $x(t) : [0,T) \to \reals^n$ satisfies $\norm{x(t)} < \varepsilon + \gamma$ for all times $0 \leq t < T$, where the neighborhood $\neighborhood[\gamma]{\neighborhood[\varepsilon]{x=0}}$ is equivalently characterized by $\norm{x}^2 < \gamma+\varepsilon$.
This unfolded semantics is equivalent to the mathematical definition of $\varepsilon$-stability in~\rref{def:epsstabmath} by reindexing the universal quantifier with $\gamma \mapsto \gamma + \varepsilon$ instead. \qedhere
\end{proof}

\section{Counterexamples}
\label{app:cex}

This appendix provides counterexamples for the soundness issues highlighted in Sections~\ref{sec:genstability} and~\ref{sec:casestudies}.
The first counterexample illustrates the need to assume compactness, i.e., formula $\rfvar$ is closed \emph{and bounded} in rule~\irref{SetLyap}.
The remark after~\cite[Definition 8.1]{MR1201326} suggests that the following variant of~\irref{SetLyap} is sound for formulas $\rfvar$ that characterize a closed, invariant set:

\dinferenceRule[SetLyapbad|SLyap${_\geq}$\usebox{\Lightningval}]{Set Lyapunov}
{\linferenceRule
  {\lsequent{\rfvar}{ \lterm = 0} \qquad
   \lsequent{\lnot{\rfvar}}{ \lterm > 0 \land \lied[]{\genDE{x}}{\lterm} \leq 0}
   }
  {\lsequent{} {\stabodePR{\D{x}=\genDE{x}}{\rfvar}{\rfvar} }}
}{}

The rule~\irref{SetLyapbad} is unsound (indicated by $\mbox{\lightning}$); indeed, the rule~\irref{SetLyap} from~\rref{lem:setstablyap} is also unsound if the assumption that formula $\rfvar$ characterizes a bounded set is omitted.

\begin{counterexample}
\label{cex:khalil}
\newcommand{\excounter}{\ensuremath{\alpha_c}}

Consider the ODE $\excounter \mnodefequiv \D{y}=y, \D{t}=1$ and the formula $\rfvar \mnodefequiv y = 0$ which characterizes a closed invariant set of $\excounter$ that is \emph{not bounded}.
The Lyapunov function $\lterm \mnodefeq y^2\exp(-2t)$, satisfies all of the premises of rule~\irref{SetLyapbad} because $\lterm = 0$ when $y = 0$, $\lterm > 0$ for $y \not= 0$, and $\lied[]{\excounter}{\lterm} = 0$.
However, $\rfvar$ is not stable for ODE $\excounter$, as can be seem from~\rref{fig:cex}. The norm of the right-maximal solution from all initial states that satisfy $y \neq 0$ approach $\infty$.

\begin{figure}[t]
\centering%
\includegraphics[width=.45\textwidth]{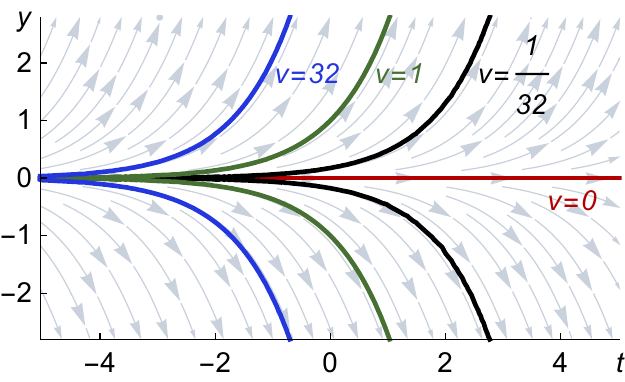}
\caption{An illustration of $\excounter$ and the Lyapunov function $v$ from~\rref{cex:khalil}, with level curves (where $\lterm = k$ for various $k$) shown in color.}
\label{fig:cex}
\end{figure}

This counterexample also illustrates the importance of the boundedness assumption for formula $\rfvar$ in~\rref{lem:setstablyap} for rule~\irref{SetLyap} since all other premises of the rule are satisfied by the above example.
\end{counterexample}

The second counterexample below shows that rule~\irref{Lyap} crucially needs the premise $\lterm(0) = 0$.
This premise is unsoundly omitted from the arithmetical conditions in~\cite[Equation 1]{DBLP:conf/tacas/AhmedPA20}.

\begin{counterexample}
\label{cex:ahmed}
Consider the ODE $y'=y$ with solution $y(t) = y_0 \exp(t)$ from initial value $y(0) = y_0$.
For all perturbed initial states $y_0 \not= 0$, $\norm{y(t)}$ approaches $\infty$ as $t \to \infty$ so this ODE is not stable (nor attractive).
However, the Lyapunov function $\lterm \mnodefeq 1$ trivially satisfies all of the premises of rule~\irref{Lyap} except the omitted premise $\lterm(0)=0$.
\end{counterexample}

\fi

\end{document}